\newcommand{\be}{\begin{equation*}}
\newcommand{\ee}{\end{equation*}}
\newcommand{\ben}[1]{\begin{equation}\label{#1}}
\newcommand{\een}{\end{equation}}
\newcommand{\bea}{\begin{eqnarray}}
\newcommand{\eea}{\end{eqnarray}}
\newcommand{\bean}{\begin{eqnarray*}}
\newcommand{\eean}{\end{eqnarray*}}
\newcommand{\R}{\mathbb{R}}
\newcommand{\T}{\mathbb{T}}
\newcommand{\C}{\mathbb{C}}
\newcommand{\p}{\partial}
\newcommand{\N}{\mathbb{N}}
\newcommand{\abs}[1]{\left|#1 \right|} 
\newcommand{\norm}[2]{\left \|#1 \right\|_{#2}}
\newcommand{\eq}[1]{(\ref{#1})}
\newtheorem{Theorem}{Theorem}
\newtheorem{Proposition}[Theorem]{Proposition}
\newtheorem{Lemma}[Theorem]{Lemma}
\newtheorem{Definition}[Theorem]{Definition}
\newtheorem*{T1}{Theorem~\ref{thm:EFTexp}}
\newtheorem*{T2}{Theorem~\ref{Thm:EFTApr}}
\title[EFT]{Effective field theory and classical equations of motion}
\author{Harvey S. Reall$^\dagger$}
\author{Claude M. Warnick$^{\dagger,\ddagger}$}
\address{$^\dagger$Department of Applied Mathematics and Theoretical Physics\\University of Cambridge \\Wilberforce Road, Cambridge CB3 0WA, United Kingdom}
\address{$^\ddagger$Department of Pure Mathematics and Mathematical Statistics\\University of Cambridge \\Wilberforce Road, Cambridge CB3 0WA, United Kingdom}
\email{hsr1000@cam.ac.uk,cmw50@cam.ac.uk}
\date{}                                           
\begin{document}

\begin{abstract}
Given a theory containing both heavy and light fields (the UV theory), a standard procedure is to integrate out the heavy field to obtain an effective field theory (EFT) for the light fields. Typically the EFT equations of motion consist of an expansion involving higher and higher derivatives of the fields, whose truncation at any finite order may not be well-posed. In this paper we address the question of how to make sense of the EFT equations of motion, and whether they provide a good approximation to the classical UV theory. We propose an approach to solving EFTs which leads to a well-posedness statement. For a particular choice of UV theory we rigorously derive the corresponding EFT and show that a large class of classical solutions to the UV theory are well approximated by EFT solutions. We also consider solutions of the UV theory which are not well approximated by EFT solutions and demonstrate that these are close, in an averaged sense, to solutions of a modified EFT.

\end{abstract}

\maketitle

\section{Introduction}

In this section we will give a non-technical introduction to our work. The next section will present an overview of our technical results.

Effective field theories (EFTs) play a central role in high energy physics. Given a ``UV" quantum field theory consisting of heavy and light fields, the idea is that one can ``integrate out" the heavy fields to obtain an EFT for the light fields (see e.g. \cite{Burgess:2003jk}). This EFT can be used to calculate scattering amplitudes of the light fields as an expansion in $E/(M\hbar)$ where $E$ is an energy scale characterizing the scattering and $M$ is a mass parameter characterizing the heavy fields, with dimensions of inverse length.\footnote{We will consider Lorentz invariant theories and use units with speed of light $c=1$.} The Lagrangian of an EFT consists of an infinite expansion of terms with increasing mass dimension (or increasing number of derivatives) built from the light fields. These terms are multiplied by appropriate inverse powers of $M$.  The idea is that only a finite number of terms in the Lagrangian are required to calculate scattering amplitudes to any desired order in $E/(M\hbar)$. 

Sometimes one wishes to study classical solutions of an EFT. If one truncates the EFT equations of motion to some finite order in $M^{-1}$ then the resulting equations involve higher than second derivatives of the light fields. There has been considerable discussion in the literature of how one should treat such equations, see e.g. \cite{Simon:1990ic,Flanagan:1996gw,Burgess:2014xh,Solomon:2017nlh,Allwright:2018rut}. For example, such equations typically admit ``runaway" solutions that blow up in time, and are usually regarded as unphysical. How do we eliminate such solutions? The nature of the initial value problem is unclear: with higher derivatives, do we need extra initial data? If so, is the initial value problem well-posed? Wouldn't additional initial data correspond to introducing new physical degrees of freedom? Should we only consider solutions which can be expanded in powers of $M^{-1}$? For a particularly clear discussion of these issues, see \cite{Flanagan:1996gw}. 

Our strategy for investigating these questions will be to determine the conditions under which classical solutions of the UV theory give rise to classical solutions of the EFT. Specifically, what conditions must the heavy and light fields of the UV theory satisfy in order that the light fields satisfy, at least approximately, the EFT equations of motion? We shall then view these conditions on the light fields as ``admissibility" criteria on solutions of the EFT equations of motion and use this to develop a notion of well-posedness for the initial value problem of the EFT equations. Of course this would be vacuous if our conditions on the fields of the UV theory were too strong, so the most substantial part of our paper will be devoted to demonstrating that the UV theory does indeed admit a ``suitably large" class of solutions satisfying these conditions. 

We shall focus throughout this paper on a particular UV theory discussed in \cite{Burgess:2003jk,Burgess:2014xh,Allwright:2018rut}. We expect that the lessons learned from studying this theory will extend to a much wider class of theories. This theory  is a complex scalar field with a $U(1)$-symmetric potential exhibiting spontaneous symmetry breaking. It can be rewritten in terms of two real fields: a heavy field $\rho$ with mass parameter $M$ and a massless field $\theta$ (a Goldstone boson). Integrating out $\rho$ consists of solving, formally, its equation of motion as an expansion in powers of $M^{-1}$ with coefficients that are polynomials in derivatives of $\theta$. Substituting this expansion into the UV equation of motion for $\theta$ gives the EFT equation of motion for $\theta$. This is an expansion in powers of $M^{-1}$, with higher powers of $M^{-1}$ involving more derivatives of $\theta$. 

Our first aim is to provide a mathematically rigorous derivation, in classical field theory, of these formal expansions in inverse powers of $M$ that are derived heuristically in the EFT literature. We view the fields as functions of $M$ as well as of the spacetime coordinates. We shall identify {\it uniform boundedness} as $M \rightarrow \infty$ of $(M\rho,\theta)$, and their derivatives, as the key assumption required to derive these expansions from the equations of motion of the UV theory.\footnote{
From a physics perspective, taking the limit $M \rightarrow \infty$ may appear strange since $M$ is usually regarded as a fixed parameter. However, as we shall explain, there is an equivalent viewpoint in which one fixes the mass of the heavy field to some value $\mu$ and considers fields which depend on the spacetime coordinates $x^\nu$ as $\mu x^\nu/M$. In this picture, one is considering a family of solutions (parameterised by $M$) varying over a length/time scale proportional to $M/\mu$. So $M \rightarrow \infty$ is a ``long wavelength" limit, and the EFT expansions hold for fields varying over scales that are increasingly large compared to the ``UV length scale" $\mu^{-1}$.}
 If this assumption is satisfied then $\rho$ admits an asymptotic expansion in powers of $M^{-1}$. Truncating this expansion at order $M^{-k}$ then implies that $\theta$ satisfies the EFT equation of motion up to terms of order $M^{-k-1}$.

Our second aim is to develop a notion of well-posedness for the EFT equation of motion, truncated to some order in $M^{-1}$. The idea is that, since uniform boundedness is required to derive the EFT expansion from the UV theory, then only uniformly bounded solutions of the EFT equation of motion should be admissible. Working on a fixed time interval $[0,T]$, we shall prove a uniqueness result: assume that $\theta_1$ and $\theta_2$ both satisfy the EFT equation of motion up to terms of order $M^{-k-1}$, that $\theta_2 - \theta_1$ and its first time derivative vanish initially, and that $\theta_1$ and $\theta_2$ have uniformly bounded derivatives as $M \rightarrow \infty$. Then we shall prove that $\theta_2 - \theta_1$ is of order $M^{-k-1}$. The boundedness assumption can be viewed as eliminating ``runaway" solutions. This is a uniqueness result for solutions of the EFT. We can also obtain an existence result for such solutions using the standard EFT method of constructing them as expansions in $M^{-1}$ up to some order. This solution depends continuously on the initial data. Taken together, these results can be viewed as a statement of well-posedness of the initial value problem, within the class of admissible fields, for the EFT equation of motion. Although our focus in this paper is on a particular EFT for $\theta$, we will prove these well-posedness results for a much more general class of EFTs for a single scalar field. 

In the results described so far, boundedness as $M \rightarrow \infty$ has always been assumed. Our third aim is to determine the circumstances (if any!) under which solutions of the UV theory actually satisfy this assumption, and thereby identify the circumstances under which solutions of the EFT approximate the behaviour of the light field in classical solutions of the UV theory. {\it A priori} it is not obvious that there are any circumstances under which solutions of an EFT approximate classical solutions of the underlying UV theory. This is because in quantum field theory we only expect EFT to be a good approximation when $E \ll M \hbar$, which means that there can be no heavy particles present. However, one expects that validity of the classical approximation for the heavy scalar field requires that a large number of heavy quanta are present, which implies $E \gg M \hbar$. So the expected regime of validity of the classical approximation in the UV theory does not overlap the expected regime of validity of EFT. This argument suggests that one cannot expect the EFT to describe the behaviour of (the light field in) classical solutions of the UV theory for some generic class of initial data. Instead, the best one can hope for is that the EFT approximates classical solutions of the UV theory arising from special initial data corresponding to the heavy field being ``in its ground state". Our boundedness asumptions can be regarded as a classical analogue of the statement that the heavy field is in its ground state. 

The question now is: if the boundedness assumptions, and hence the EFT expansions, are satisfied initially, then do they continue to be satisfied? We shall prove that this is indeed the case. More precisely: if the initial data for the light field is bounded as $M \rightarrow \infty$ and the initial data for the heavy field respects the EFT expansion truncated at order $M^{-k}$ then the resulting solution of the UV theory satisfies our boundedness assumptions and hence (by our previous results) the heavy field continues to respect the EFT expansion and the light field satisfies the EFT equation of motion to the expected order in $M^{-1}$. These results amount to a classical derivation of the EFT from the UV theory. 

In the above discussion we have assumed that we work on a time interval $[0,T]$ where $T$ is independent of $M$. Our fourth aim is to study what happens over long time scales, when $T$ scales with $M$: $T \propto M^\lambda$ with $\lambda>0$. The conventional approach of constructing EFT solutions as an expansion in powers of $M^{-1}$ can now fail because the coefficients in this expansion can exhibit secular growth. However, for $\lambda<2$ we shall show that one can construct EFT solutions as an expansion in $T/M^2$. We then revisit the question of whether such a solution approximates a solution of the UV theory. We shall argue that it does (for $\lambda<2$), assuming that the UV solution respects our boundedness assumptions and that the EFT equation of motion is satisfied to a sufficiently high order in $M^{-1}$ (the required order diverges as $\lambda \rightarrow 2$). However, we emphasise that we have not attempted to prove that solutions of the UV theory do indeed satisfy our boundedness assumptions on long time scales.

Our final aim is to investigate solutions of the UV theory arising from initial data for which the heavy field does {\it not} respect the EFT asymptotic expansion. For example: what happens if we simply require that this initial data is chosen such that the total energy is bounded as $M \rightarrow \infty$? (We now return to fixed $T$.) In this case, the heavy field will typically exhibit very rapid oscillations in time, which will be transmitted to the light field by interactions. This suggests that some kind of temporal averaging will be required to pass from the UV theory to an EFT for the light field. Averaging also appears very natural on physical grounds because low energy measurements will not be able to resolve very short time or length scales.
 
To warm up, we investigate averaging for some linear equations, specifically the Klein-Gordon equation with mass parameter $M$ coupled to a source, or the massless wave equation with a highly oscillatory source term. In both cases we shall show that, for a very large class of initial data, the field satisfies the EFT expansion in an averaged sense. 

In the nonlinear case, averaging is much more difficult. Nevertheless, for the $(\rho,\theta)$ system we shall prove that, for initial data of bounded energy density, there exists a modified EFT equation of motion, such that if $\tilde{\theta}$ satisfies this modified equation, with suitably modified initial data, then the average of $\theta-\tilde{\theta}$ over a finite spacetime region is subleading in $1/M$. The modifications to the EFT equation of motion and the initial data involve the initial data for the heavy field and vanish for the special initial data discussed above. We shall demonstrate that our result is sharp using numerical simulations. Specifically, we shall show numerically that solutions to the modified EFT equation approximate the UV solution in an averaged sense whereas solutions of the unmodifed EFT equation do not.

This concludes the non-technical introduction to this paper. The next section presents a detailed summary of our main technical achievements, with the proofs deferred to later sections.  

\subsection*{Acknowledgements} We are grateful to Istv\'an K\'ad\'ar and \'Aron Kov\'acs for comments on a draft manuscript. HSR is supported by STFC grant no. ST/T000694/1.

\section{Overview of results}

\label{sec:overview}

\subsection{Notation}

We consider a locally flat $(n+1)$-dimensional spacetime with coordinates $x^\mu = (x^0,x^i)$ ($i=1,\ldots,n$) and metric $g = - dx^0 dx^0 + \delta_{ij} dx^i dx^j$. Greek indices are raised and lowered with $g$. The spacetime manifold will be taken to be $\mathcal{M} = \R \times \T^n$ where $\T^n := \{ (x^1, \ldots, x^n) \in \R^n : x^i \sim x^i + L\}$ for fixed $L$. We will often consider a fixed time interval so for $T>0$ we let $\mathcal{M}_T = (0,T) \times \T^n$.  Our results will hold equally on the uncompactified space $\R \times \R^n$, provided we restrict attention to bounded regions, since we can make use of the finite speed of propagation property for wave equations to take $L$ sufficiently large that the periodic identification has no effect. We define the d'Alembertian as $\Box = g^{\mu\nu} \p_\mu \p_\nu$. 

For an open subset $\mathcal{R}$ of spacetime we will often consider functions of both $x^\mu \in \mathcal{R}$ and a mass parameter $M$. For notational convenience we will often not make the dependence on $M$ explicit. When we say $f=O(M^{-p})$ we mean that $ \sup_{\mathcal{R}} |f| \le CM^{-p}$ for some constant $C$ independent of $M$. Similarly for a function of $x^i \in \T^n$ and $M$ we say $f=O(M^{-p})$ if $ \sup_{\T^n} |f| \le CM^{-p}$. 


\subsection{The theory}

We will take as an example the UV theory discussed in \cite{Burgess:2003jk}: a complex scalar field $\phi$ with action\footnote{Throughout all integrals are with respect to the appropriate geometric measure induced by $g$.}
\be
 S = \int_\mathcal{M} \left[ -\frac{1}{2} \partial_\mu \phi \partial^\mu \bar{\phi} - V(|\phi|^2) \right]
\ee
where
\be
 V(|\phi |^2) = \frac{M^2}{8} \left( |\phi|^2 -1 \right)^2
\ee
and $M$ is a positive constant. This is a $U(1)$-symmetric ``Mexican hat" potential with a local maximum at $\phi=0$ and a global minimum at $|\phi|=1$. The Cauchy problem for this theory is:
\ben{WEq}
\begin{array}{c}
\displaystyle- \Box \phi + \frac{M^2}{2} \phi \left(\abs{\phi}^2 -1 \right) = 0 \\[.4cm]
\phi|_{x^0=0} = \phi_0, \qquad \p_0 \phi|_{x^0=0} = \phi_1
\end{array}
\een
We assume that $\phi_0, \phi_1$ are given smooth functions of $x^i$ and $M$. The resulting solution $\phi$ is a function of $x^\mu$ and $M$. 

The theory admits a conserved energy:
\[
E[\phi](t) = \int_{\{x^0 = t\}} \left[\frac{1}{2} \left( \abs{\p_0 \phi}^2 + \sum_{i=1}^n \abs{\p_i \phi}^2\right) +  V(|\phi |^2)\right],
\]
which, for smooth solutions, satisfies:
\[
\frac{d}{dt} E[\phi] = 0.
\]
If $n \leq 3$, then this energy gives sufficient control to establish that a smooth solution to \eq{WEq} exists for all time, for any choice of $\phi_0, \phi_1$. For general $n$ and arbitrary $\phi_0, \phi_1$, however, the best we can hope to show is that a smooth solution exists for a short time \cite{Ginibre:1985xy}. We are interested in understanding the behaviour as $M \rightarrow \infty$ of solutions to \eq{WEq} arising from initial data such that $E[\phi](0)$ is bounded as $M \rightarrow \infty$. Since $E[\phi]$ is conserved, this means that such solutions must satisfy $\abs{\phi} \to 1$ at least in some integrated sense.

In the mathematical literature, the limit $M \to \infty$ of this model has been studied when $\phi$ takes values in $\R^N$ as a means of constructing solutions to the $S^{N-1}$ wave map equation, also known as the $O(N)$ $\sigma$-model (see results in \cite{Shatah:1988},  based on the approach of \cite{Rubin:1957}). In the case $\phi \in \C$, $n=2$, the limit $M \to \infty$ can give rise to singular solutions containing vortices, whose dynamics has been studied in \cite{Jerrard:1999qt}. In \cite{Machihara:2002ay, Lin:2010fb} the system above is considered in various limits which have some similarities to our problem, but are nevertheless distinct.

We define real fields $(\rho,\theta)$ by\footnote{
Our definition of $\rho$, which differs from the conventional one \cite{Burgess:2003jk}, is chosen so that the range of $\rho$ is unrestricted. Strictly, $\theta$ should be thought of as a multi-valued function since it may be that on traversing a cycle of $\T^n$, $\theta$ increases by a multiple of $2 \pi$. Since all of our equations involve either $\p_\mu \theta$, or $\theta_1 - \theta_2$ for continuous $\theta_i$ which agree on  $\{x^0=0\}$, we can safely ignore this subtlety.}
\be
 \phi = e^{\rho + i \theta}
\ee
in terms of which the action becomes
\be
 S = \int_\mathcal{M}  \left[ -\frac{e^{2\rho}}{2} \left( \partial_\mu \rho \partial^\mu \rho + \partial_\mu \theta \partial^\mu \theta\right) - \frac{M^2}{8} \left( e^{2\rho}-1 \right)^2 \right]
\ee
and the resulting equations of motion are
\begin{align}
\displaystyle- \Box \rho + M^2 \rho &= \displaystyle \p_\mu \rho \p^\mu \rho - \p_\mu \theta  \p^\mu \theta- M^2 \rho^2 W(\rho), \qquad W(\rho) = \frac{(e^{2 \rho}-1 - 2 \rho) }{2\rho^2} \label{rhothetaeq1}\\[.1cm]
- \Box \theta  &= \ 2 \p^\mu \rho \p_\mu \theta. \label{rhothetaeq2}
\end{align}
Here we have explicitly separated the Klein--Gordon mass term for the $\rho$ equation, so that we see the $\rho$ field has mass parameter $M$ and the terms on the right-hand side of \eq{rhothetaeq1} are nonlinear. The function $W(\rho)$ extends smoothly on setting $W(0)=1$. The $\theta$ field is massless (a Goldstone boson).

Initial data for $\theta, \rho$ is given by
\[
\theta|_{\{x^0=0\}} = \theta_0, \quad \p_0 \theta|_{\{x^0=0\}} = \theta_1,\quad \rho|_{\{x^0=0\}} = \rho_0,\quad \p_0 \rho|_{\{x^0=0\}} = \rho_1,
\]
where this data is related to that for $\phi$ by:
\[
\phi_0 = e^{\rho_0 + i \theta_0}, \quad \phi_1 = (\rho_1 + i \theta_1)  e^{\rho_0 + i \theta_0}.
\]
The initial data depends smoothly on $x^i$ and $M$. As above, we view a solution $(\rho,\theta)$ as a function of $x^\mu$ and $M$. 

The conserved energy in terms of $(\rho,\theta)$ is
\be
 E[\rho,\theta] (t) = \int_{\{x^0 = t\}} \left\{\frac{e^{2\rho}}{2} \left[(\p_0 \rho)^2 + (\p_0 \theta)^2 + \sum_{i=1}^n \left( (\p_i \rho)^2 + (\p_i \theta)^2 \right) \right] +  \frac{M^2}{8} \left( e^{2\rho}-1 \right)^2 \right\}
\ee
If we take $M \rightarrow \infty$ for initial data such that $E$ remains bounded then we expect that the solution must satisfy $\rho \rightarrow 0$ in some appropriate sense.

We observe that the fact that global solutions to \eq{WEq} exist for $n = 3$ does not imply that (\ref{rhothetaeq1}), (\ref{rhothetaeq2}) admit global solutions. The map from $\phi$ to $(\rho, \theta)$ is only well defined provided $\phi$ does not vanish. In $3$-dimensions the conserved energy for $\phi$ does not give pointwise control of $\phi$, so even in the large $M$ limit a solution $\phi$ may vanish. At such a point solutions to  (\ref{rhothetaeq1}, \ref{rhothetaeq2}) will break down.

In this paper we shall be studying the behaviour of solutions as $M \rightarrow \infty$, with the initial data assumed to be suitably well-behaved in this limit. From a physics perspective, it might seem odd to consider a limit $M \rightarrow \infty$ since $M$ is a parameter of the theory, presumably fixed by physics. However, this limit can be reinterpreted in terms of a scaling of initial data with $M$ held fixed. To see this, let $(\rho(x),\theta(x))$ be a solution of \eq{rhothetaeq1},\eq{rhothetaeq2}. Now fix $\mu>0$ and define
\begin{equation}
\label{barredsystem}
 \bar{\rho}(x) = \rho\left(\frac{\mu x}{M}\right) \qquad \bar{\theta}(x)=\theta\left(\frac{\mu x}{M}\right)
\end{equation}
Then $(\bar{\rho},\bar{\theta})$ satisfy \eq{rhothetaeq1},\eq{rhothetaeq2} with $M$ replaced by $\mu$, and the periodicity of the spatial coordinates (if compactified) is now $x^i \sim x^i + M L /\mu$.  The initial data are
\bea
 \bar{\theta}(0,x^i) &=& \theta_0\left(\frac{\mu x^i}{M}\right), \qquad \partial_0 \bar{\theta}(0,x^i) = \frac{\mu}{M} \theta_1 \left(\frac{\mu x^i}{M}\right), \nonumber \\  \bar{\rho}(0,x^i) &=& \rho_0\left(\frac{\mu x^i}{M}\right), \qquad \partial_0 \bar{\rho}(0,x^i) = \frac{\mu}{M} \rho_1 \left(\frac{\mu x^i}{M}\right) \nonumber
\eea
If $(\rho,\theta)$ are bounded as $M \rightarrow \infty$ then the initial data for $(\bar{\rho},\bar{\theta})$, and the resulting solution, has first space and time derivatives of order $1/M$. The time interval on which $(\bar{\rho},\bar{\theta})$ are defined is $\bar{T} = MT/\mu$. Hence the limit $M \rightarrow \infty$ corresponds to a sequence of data/solutions varying over increasingly long wavelengths, and defined over increasingly long time intervals. The energy scales as $\bar{E} = (M/\mu)^{n-2} E$.

\subsection{The EFT expansion}

In the EFT literature, it is argued heuristically that \eqref{rhothetaeq1} implies that the heavy field $\rho$ admits an expansion in inverse powers of $M$, with coefficients that are polynomials in $\theta$ and its derivatives. This is then substituted into the action (or the equation of motion for $\theta$) to give the EFT for $\theta$. Our first result identifies a  sufficient condition to make this procedure rigorous. This is uniform \emph{boundedness} of $\theta$ and $M \rho$ as $M \to \infty$ in an appropriate sense. To motivate this, let us first suppose that we can bound $\theta$ and $M \rho$ \emph{and all their derivatives} uniformly for $M \geq M_0$.

We consider \eq{rhothetaeq1} and rearrange to obtain:
\ben{rho_eq_rearranged}
 \rho = -\frac{\partial_\mu \theta \partial^\mu \theta}{M^2}  +\frac{\Box \rho}{M^2}+ \frac{\partial_\mu \rho \partial^\mu \rho}{M^2}  -\rho^2 W(\rho)
\een
The uniform boundedness assumption implies that $\partial_\mu \theta = O(1)$, $\partial_\mu \rho = O(1/M)$ and $\partial_\mu \p_\nu \rho = O(1/M)$. Hence the first term on the RHS is $O(M^{-2})$, the second term is $O(M^{-3})$, and the third term is $O(M^{-4})$. For $M\ge M_0$ we know that $\rho$ is uniformly bounded and hence (since $W$ is smooth on $\mathbb{R}$) $W(\rho)$ is uniformly bounded, as are its derivatives $W'(\rho)$, $W''(\rho)$ etc. Therefore the final term is $O(M^{-2})$. Hence we have $\rho=O(M^{-2})$. Using this, our estimate of the final term can be improved to $O(M^{-4})$ and we see that the equation of motion implies $\rho = \mathcal{F}_2 M^{-2}  +O(M^{-3})$  with $\mathcal{F}_2 = - \p_\mu \theta \p^\mu \theta$. 

Next, take the derivative of \eqref{rho_eq_rearranged} and rearrange to obtain
\ben{drho}
 \partial_\nu \rho =- \frac{\partial_\nu(\partial_\mu \theta \partial^\mu \theta)}{M^2}  +\frac{\partial_\nu \Box \rho}{M^2}+ \frac{\partial_\nu(\partial_\mu \rho \partial^\mu \rho)}{M^2}  -\p_\nu(\rho^2 W(\rho))
\een
The second term on the RHS is $O(M^{-3})$ and the third is $O(M^{-4})$. The final term is the sum of $2\rho \partial_\nu \rho W(\rho) = O(M^{-3})$ and $\rho^2  W'(\rho) \partial_\nu \rho = O(M^{-5})$ (since $W'(\rho)$ is bounded). Hence $\p_\nu \rho = M^{-2} \partial_\nu \mathcal{F}_2 + O(M^{-3})$. Differentiating \eq{drho} again and estimating the terms we have similarly that $\p_\nu \p_\sigma \rho = M^{-2} \partial_\nu \p_\sigma \mathcal{F}_2 + O(M^{-3})$ and so on for higher derivatives. 

Returning to \eq{rho_eq_rearranged} we observe that by Taylor's theorem and the uniform boundedness of $M\rho$ we have $\abs{W(\rho) - 1} \leq C \rho \leq C'M^{-1}$, so that $\rho^2 W(\rho) = M^{-4}\mathcal{F}_2^2 + O(M^{-5})$. Combining this with our estimates for $\p_\mu \rho$ and $\p_\nu \p_\sigma \rho$ we can improve our approximation for $\rho$ to give $\rho =  \mathcal{F}_2 M^{-2}+ \mathcal{F}_4 M^{-4}  +O(M^{-5})$, with $\mathcal{F}_4 = \Box \mathcal{F}_2 - \mathcal{F}_2^2$. We can see that this process can be iterated to find a full asymptotic series for $\rho$.
\[
\rho \sim \frac{\mathcal{F}_2[\theta]}{M^2} + \frac{\mathcal{F}_4[\theta]}{M^4}+ \ldots 
\]
This series is also valid when differentiated term-by-term, so in view of the boundedness assumption on $\p_\mu \theta$, this implies that we have an expression for $\Box \theta$ as an asymptotic expansion
\ben{EFT_asympexp}
 \Box \theta \sim -2 \p_\mu \left( \frac{\mathcal{F}_2[\theta]}{M^2}+ \frac{\mathcal{F}_4[\theta]}{M^4} + \ldots  \right) \p^\mu \theta.
\een
We will refer to this expansion as the ``EFT equation of motion". It agrees with expansions derived by heuristic methods in the EFT literature.

In practice, the assumption that $\theta$ and $M\rho$ and all derivatives are bounded is a very strong one, and it is not clear that it is possible to exhibit non-trivial solutions of the equations for which this holds. If we are content to accept an expansion only to a given order, then it's sufficient to only require bounds on a finite number of \emph{time} derivatives. In particular, to obtain an asymptotic expansion with $O(M^{-(l+1)})$ error, we ask that $l$ time derivatives, and any number of spatial derivatives are bounded. We shall later give a condition on initial data such that these conditions hold, and we shall see that while bounding spatial derivatives is relatively straightforward if we assume the initial data is smooth, bounding time derivatives imposes non-trivial constraints on the initial data.

We will introduce some notation to make it a little simpler to keep track of our estimates. We say that $f = O_{\infty}(M^{-k})$ on $\mathcal{R}$ if for any $p\geq 0$ and any $i_1, \ldots, i_p \in \{1, \ldots, n\}$ there exists a constant $C_p$, independent of $M\geq M_0$ such that
\[
\sup_{\mathcal{R}} \abs{\p_{i_1} \ldots \p_{i_p} f} \leq \frac{C_p}{M^k}.
\]
Clearly if $f = O_{\infty}(M^{-k})$ and $g= O_{\infty}(M^{-k'})$ then we have $\p_i f = O_{\infty}(M^{-k})$, $fg= O_{\infty}(M^{-k-k'})$ and  $f+g = O_{\infty}(M^{-\min\{k, k'\}})$. This notation generalises in the obvious way to functions defined on subsets of $\T^n$.

\begin{Theorem}
\label{thm:EFTexp}
Let $(\rho,\theta)$ be a smooth solution of the equations of motion in a fixed open subset $\mathcal{R}$ of spacetime, satisfying 
\[
\theta,  \ \p_0\theta, \   M \rho, \   \p_0\rho = O_{\infty}(1).
\]
Fix a non-negative integer $l$. Then for all $0\leq k \leq l+2$ and $0\leq j \leq l$ we have
\ben{Bded1}
\p_0^k \theta , \    M\p_0^j \rho, \   \p_0^{l+1} \rho = O_{\infty}(1)
\een
if and only if
\ben{rho_exp_new1}
\p_0^j \rho= \p_0^j \left( \frac{\mathcal{F}_2[\theta]}{M^2}+ \frac{\mathcal{F}_4[\theta]}{M^4} + \ldots + \frac{\mathcal{F}_{2 \lfloor\frac{\ell-j}{2} \rfloor}[\theta]}{M^{2 \lfloor\frac{\ell-j}{2} \rfloor}} \right ) +O_{\infty}({M^{-\ell-1+j}})
\een
holds for $j\leq \ell+1\leq l +1$, where $\mathcal{F}_{2j}$ are polynomials in $\theta$ and its derivatives up to order $2j-1$ which can be computed iteratively.
\end{Theorem}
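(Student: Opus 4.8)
\emph{Setting up the recursion.} The plan is to define the coefficients $\mathcal{F}_{2i}$ first, then prove the two implications separately: a \emph{downward} induction in the number of time derivatives for the forward direction, and an \emph{upward} bootstrap for the converse. To define the $\mathcal{F}_{2i}$ I would substitute the formal series $\rho = \sum_{i\ge 1}\mathcal{F}_{2i}M^{-2i}$ into \eqref{rho_eq_rearranged}, written as $\rho = -M^{-2}\p_\mu\theta\,\p^\mu\theta + M^{-2}\Box\rho + M^{-2}\p_\mu\rho\,\p^\mu\rho - \rho^2 W(\rho)$, and match powers of $M^{-1}$. Since $\rho^2 W(\rho)$ vanishes to second order at the origin it contributes $\mathcal{F}_2^2 M^{-4}+\dots$, and the gradient term first enters at order $M^{-6}$; matching orders yields $\mathcal{F}_2=-\p_\mu\theta\,\p^\mu\theta$, $\mathcal{F}_4=\Box\mathcal{F}_2-\mathcal{F}_2^2$, and in general an explicit recursion giving $\mathcal{F}_{2i}$ through $\Box\mathcal{F}_{2i-2}$ and polynomial combinations of the lower $\mathcal{F}_{2i'}$. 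An induction on this recursion shows $\mathcal{F}_{2i}$ is a polynomial in derivatives of $\theta$ of order at most $2i-1$. Throughout I take $\ell=l$; the statement for general $\ell\le l$ is identical with $l$ replaced by $\ell$.

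\emph{Forward direction.} Assuming \eqref{Bded1}, I prove \eqref{rho_exp_new1} by downward induction on $j$, the two cases $j=l+1,\,l$ being exactly the hypotheses $\p_0^{l+1}\rho=O_{\infty}(1)$ and $M\p_0^l\rho=O_{\infty}(1)$ (for which the sum is empty). For the step, differentiate \eqref{rho_eq_rearranged} $j$ times in time (the case $j=1$ is \eqref{drho}), giving
\[
\p_0^j\rho=-\frac{\p_0^j(\p_\mu\theta\,\p^\mu\theta)}{M^2}+\frac{\p_0^j\Box\rho}{M^2}+\frac{\p_0^j(\p_\mu\rho\,\p^\mu\rho)}{M^2}-\p_0^j\big(\rho^2 W(\rho)\big),
\]
and substitute on the right the inductive expansions already available at the strictly larger indices (recall $\Box=-\p_0^2+\sum_i\p_i\p_i$, so the second term contributes $-M^{-2}\p_0^{j+2}\rho$). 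The polynomial pieces assemble, via the recursion defining the $\mathcal{F}_{2i}$, into $\p_0^j(\mathcal{F}_2 M^{-2}+\dots)$, and every time derivative of $\theta$ that enters has order at most $j+1\le l+2$ and is thus $O_{\infty}(1)$ by \eqref{Bded1}. The remainders combine to $O_{\infty}(M^{-l-1+j})$, the slowest-decaying piece being $-M^{-2}\p_0^{j+2}\rho$, whose remainder is $M^{-2}\cdot O_{\infty}(M^{-l+1+j})$ by the inductive hypothesis at index $j+2$. The two genuinely nonlinear terms are harmless: since $\rho^2 W(\rho)$ vanishes to second order, a Fa\`a di Bruno expansion of $\p_0^j(\rho^2 W(\rho))$ has in every summand either two factors drawn from $\{\p_0^a\rho\}_{a\le j}$, or one such factor times $(\rho^2 W)'(\rho)=O_{\infty}(M^{-2})$, and similarly for $M^{-2}\p_0^j(\p_\mu\rho\,\p^\mu\rho)$, so both decay strictly faster than the remainder.

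\emph{Converse.} Here I would use the second equation of motion \eqref{rhothetaeq2} to bootstrap the time derivatives of $\theta$ upward. Writing it as $\p_0^2\theta=\sum_i\p_i\p_i\theta+2\p^\mu\rho\,\p_\mu\theta$ and differentiating $k-2$ times in time expresses $\p_0^k\theta$ through spatial derivatives of $\p_0^{k-2}\theta$ and through $\p_0^{k-2}(\p^\mu\rho\,\p_\mu\theta)$, which involves only time derivatives of $\theta$ and of $\rho$ of order $\le k-1$. The assumed expansion \eqref{rho_exp_new1} supplies $\p_0^a\rho=O_{\infty}(M^{-1})$ for $a\le l$ and $\p_0^{l+1}\rho=O_{\infty}(1)$ once the polynomials $\mathcal{F}_{2i}[\theta]$ are known bounded, so an induction on $k$ from $2$ up to $l+2$ — feeding each newly bounded batch of $\theta$-derivatives back into the $\mathcal{F}_{2i}[\theta]$ — establishes $\p_0^k\theta=O_{\infty}(1)$, after which the remaining assertions of \eqref{Bded1} about $\p_0^j\rho$ are read directly off \eqref{rho_exp_new1}.

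\emph{Main obstacle.} The crux is that $O_{\infty}$ controls only \emph{spatial} derivatives, so every time derivative must be manufactured from the equations of motion. Each factor of $\p_0^2$ extracted from the $\rho$-equation is paid for by a relative $M^{-2}$ through the term $-M^{-2}\p_0^{j+2}\rho$; it is precisely this loss that forces the truncation length $\lfloor(\ell-j)/2\rfloor$ and the $j$-dependent remainder $O_{\infty}(M^{-\ell-1+j})$, and that dictates the downward direction of the induction, so that the higher time derivative appearing on the right is always already under control. The delicate bookkeeping is to verify that the $\theta$-time-derivatives demanded by the nonlinear terms never exceed the available order $l+2$, which is exactly what pins the range $0\le k\le l+2$, $0\le j\le l$ in the hypotheses.
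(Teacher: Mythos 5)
The forward direction, as you have structured it, has a genuine gap. Differentiating \eqref{rho_eq_rearranged} $j$ times produces on the right-hand side not only $\p_0^{j+2}\rho$ but also terms at the \emph{same and lower} time-derivative indices: the $\Box$ term contributes $M^{-2}\p_i\p_i\p_0^{j}\rho$, and the Leibniz/Fa\`a di Bruno expansions of $M^{-2}\p_0^j(\p_\mu\rho\,\p^\mu\rho)$ and $\p_0^j(\rho^2W(\rho))$ contain factors $\p_0^a\rho$ with $a\leq j$. In a strictly downward induction on $j$ none of these are covered by the inductive hypothesis, so you can only estimate them by the crude bounds of \eqref{Bded1}. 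Those give, for the worst Fa\`a di Bruno term, $(\rho^2W)'(\rho)\,\p_0^j\rho=(e^{2\rho}-1)\p_0^j\rho=O_{\infty}(M^{-1})\cdot O_{\infty}(M^{-1})=O_{\infty}(M^{-2})$ (note $(\rho^2W)'(\rho)=e^{2\rho}-1$ is only $O_{\infty}(M^{-1})$ under the hypotheses, not $O_{\infty}(M^{-2})$ as you assert; the sharper bound presupposes $\rho=O_{\infty}(M^{-2})$, which is part of what is being proved), and $M^{-2}\p_i\p_i\p_0^j\rho=O_{\infty}(M^{-3})$. The remainder you must achieve at index $j$ is $O_{\infty}(M^{-l-1+j})$, so your claim that these terms ``decay strictly faster than the remainder'' is false whenever $j\leq l-2$; indeed already for $l=2$, $j=0$ the crude bound $\rho^2W(\rho)=O_{\infty}(M^{-2})$ falls short of the required $O_{\infty}(M^{-3})$.

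The missing idea — the heart of the paper's proof — is a second, \emph{outer} induction on the accuracy: the paper inducts on $l$, and inside each step runs precisely your downward induction on $j$. At accuracy level $L+1$, the same-index term $M^{-2}\p_i\p_i\p_0^J\rho$ and every lower-index factor $\p_0^a\rho$, $a\leq J$, are replaced using the already-established expansion at accuracy $L$ (available for \emph{all} $j$ by the outer hypothesis), and the structural prefactors ($M^{-2}$ in front of the Laplacian, $e^{2\rho}-1=O_{\infty}(M^{-1})$, and $\rho=O_{\infty}(M^{-2})$ once one order of accuracy is known) convert an accuracy-$L$ error $O_{\infty}(M^{-L-1+J})$ into the required accuracy-$(L+1)$ error. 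Without this bootstrap, the implicit occurrence of $\p_0^j\rho$ on the right-hand side cannot be resolved to the stated order. A secondary issue: your converse is circular as written — reading off $\p_0^a\rho=O_{\infty}(M^{-1})$ from the $\ell=l$ expansion requires the polynomials $\mathcal{F}_{2i}[\theta]$ to be bounded, hence $\theta$-time-derivatives up to order roughly $l-1$, which at stage $k\ll l$ of your upward induction you have not yet produced. The fix (and the paper's route) is to use the hypothesis instances \eqref{rho_exp_new1} with $\ell=j$ (and $\ell=l$, $j=l+1$), whose sums are \emph{empty}: they yield $M\p_0^j\rho,\ \p_0^{l+1}\rho=O_{\infty}(1)$ with no reference to $\theta$ whatsoever, after which the upward bootstrap of $\p_0^k\theta$ through \eqref{rhothetaeq2} proceeds as you describe. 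Your blanket restriction ``throughout I take $\ell=l$'' discards exactly these empty-sum instances, which are the ones that make the converse work.
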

We have already established above that
\[
\mathcal{F}_2 = - \p_\mu \theta \p^\mu \theta,\qquad \mathcal{F}_4 = \Box \mathcal{F}_2 - \mathcal{F}_2^2.
\]
The expansion \eq{rho_exp_new1} in particular implies
\be
\rho = \frac{\mathcal{F}_2[\theta]}{M^2}+ \frac{\mathcal{F}_4[\theta]}{M^4} + \ldots + \frac{\mathcal{F}_{2 \lfloor\frac{l}{2} \rfloor}[\theta]}{M^{2 \lfloor\frac{l}{2} \rfloor}} + O_\infty(M^{-l-1})
\ee
and
\ben{EFT_exp}
\Box \theta + 2 \p_\mu \left( \frac{\mathcal{F}_2[\theta]}{M^2}+ \frac{\mathcal{F}_4[\theta]}{M^4} + \ldots +  \frac{\mathcal{F}_{2 \lfloor\frac{l}{2} \rfloor}[\theta]}{M^{2 \lfloor\frac{l}{2} \rfloor}}  \right) \p^\mu \theta = O_\infty(M^{-l}).
\een

To derive these results we need the assumption \eqref{Bded1}, which excludes a large class of solutions. For example, consider the simpler situation of the {\it free} Klein Gordon equation $-\Box \rho + M^2 \rho = 0$. Then $\rho=M^{-1} \sin(Mx^0)$ is a solution for which $M\rho$ is uniformly bounded but $M\partial_0 \rho$ is not. However, $\rho = e^{-M/M_0}\sin(Mx^0)$ is a solution for which all derivatives of $M\rho$ are uniformly bounded. Roughly speaking, uniform boundedness captures the notion that the ``heavy" degree of freedom ``tends to its ground state at an appropriate rate" as $M \rightarrow \infty$. We see that the correct heavy degree of freedom does not correspond to $\rho$ to all orders in $M$. Our separation of the system into $\rho$ and $\theta$ was somewhat arbitrary. The result above suggests that to order $M^{-2}$, the corrected variable $\overline{\rho}_2:=\rho - M^{-2} \mathcal{F}_2[\theta]$ is a better representative of the ``true" heavy degree of freedom in the system than $\rho$, and so on at higher orders. 

\subsection{EFT solutions}

Next we digress slightly to discuss what it means to solve an equation of the form \eqref{EFT_asympexp}. If we truncate this equation, retaining terms up to $\mathcal{F}_{2n}$ then (for $n>1$) we obtain an equation that is higher than second order in derivatives of $\theta$. This raises well-known problems: a higher order equation requires extra initial data and is unlikely to admit a well-posed initial value problem. In the ODE context, it is well-known that higher order equations can give rise to spurious ``runaway" solutions that blow up in time. We will show that the assumption of uniform boundedness as $M \rightarrow \infty$ provides a natural way to address these issues. 

We will consider a generalization of the EFT equation of motion \eqref{EFT_asympexp}:
\be
 -\Box \theta + m^2 \theta \sim \sum_{k=1}^\infty \frac{S_k(\theta,\partial \theta,\ldots)}{M^k} 
 \ee
where $m$ is a fixed ($M$-independent) constant, possibly zero, and each $S_k$ is a polynomial in $\theta$ and finitely many of its derivatives, with coefficients independent of $M$. For this equation we make the following definition: 

\begin{Definition}\label{def:EFT} Let $\mathcal{R}$ be an open region of spacetime. Let $\theta$ be a smooth function of $x^\mu \in \mathcal{R}$ and $M$. We say that $\theta$ is an EFT solution to order $l \in \{0,1,2,\ldots\}$ on $\mathcal{R}$ if $\theta$ and all of its derivatives (w.r.t. $x^\mu$) are uniformly bounded as $M \rightarrow \infty$ and $\theta$ satisfies 
\ben{genEFT}
 -\Box \theta + m^2 \theta = \sum_{k=1}^l \frac{S_k(\theta,\partial \theta,\ldots)}{M^k} + \frac{R_l}{M^{l+1}}
 \een
where $R_l$ and all of its derivatives are uniformly bounded as $M \rightarrow \infty$.
\label{def:EFT_sol}
\end{Definition}

Note that we are now requiring boundedness of arbitrarily many time derivatives of $R_l$ so this definition makes stronger assumptions about $\theta$ than Theorem \ref{thm:EFTexp}. We will comment further on this below. We will prove:\footnote{The assumption that the initial data are defined on $\mathbb{T}^{n}$ can be replaced by the assumption that the data are defined on $\mathbb{R}^n$ and have compact support.}

\begin{Theorem} Let $\theta_0,\theta_1$ be smooth functions of $x^i \in \mathbb{T}^{n}$ and $M$ such that $\theta_0,\theta_1$ and their derivatives (w.r.t. $x^i$) are uniformly bounded as $M \rightarrow \infty$. Then, for any ($M$-independent) $T>0$, and any $l$,
there exists an EFT solution $\theta$ of order $l$ on $\mathcal{M}_T$ with $\theta|_{x^0=0} = \theta_0$ and $\partial_0 \theta|_{x^0=0} = \theta_1$. Furthermore if $\tilde{\theta}$ is any other such solution then $\tilde{\theta}-\theta$ and all of its derivatives are $O(M^{-l-1})$. 
\label{thm:EFT_sol}
\end{Theorem}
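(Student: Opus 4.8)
The plan is to treat existence and uniqueness separately, both resting on the fact that $-\Box + m^2$ is a linear Klein--Gordon operator whose solution operator on $\mathcal{M}_T$ is independent of $M$ and satisfies the usual energy estimates.

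\emph{Existence.} I would construct $\theta$ by the standard EFT ansatz as a finite expansion $\theta = \sum_{j=0}^{l} M^{-j}\theta^{(j)}$ and determine the $\theta^{(j)}$ recursively. Substituting the ansatz into \eqref{genEFT} and using that each $S_k$ is a polynomial in $\theta$ and finitely many of its derivatives, the right-hand side becomes a polynomial in $M^{-1}$. Matching the coefficient of $M^{-j}$ for $0\le j\le l$ yields, because $S_k$ carries an explicit factor $M^{-k}$ with $k\ge 1$, a triangular hierarchy
\[
-\Box\theta^{(j)} + m^2\theta^{(j)} = G_j\big[\theta^{(0)},\dots,\theta^{(j-1)}\big],
\]
in which $G_j$ depends only on the lower-order profiles and their derivatives; I impose the data $\theta^{(0)}|_{x^0=0}=\theta_0$, $\p_0\theta^{(0)}|_{x^0=0}=\theta_1$ and zero data for $j\ge 1$. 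Each equation is a linear Klein--Gordon equation with smooth source, so it has a unique smooth solution on $\mathcal{M}_T$; since the solution operator is $M$-independent and $\theta_0,\theta_1$ together with all their derivatives are uniformly bounded, an induction using the energy estimate shows every $\theta^{(j)}=O_{\infty}(1)$. Hence $\theta=O_{\infty}(1)$ and has the prescribed data. Finally, because the expansion is finite and the $S_k$ are polynomials, the discrepancy $M^{l+1}\big(-\Box\theta+m^2\theta-\sum_{k=1}^l M^{-k}S_k(\theta,\p\theta,\dots)\big)$ is a finite sum of products of the $O_{\infty}(1)$ profiles times non-negative powers of $M^{-1}$, hence is itself $O_{\infty}(1)$; calling it $R_l$ exhibits $\theta$ as an EFT solution of order $l$ in the sense of Definition \ref{def:EFT_sol}.

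\emph{Uniqueness.} Given a second such solution $\tilde\theta$ with the same data, set $w=\tilde\theta-\theta$. Subtracting the two copies of \eqref{genEFT} and writing each difference as $S_k(\tilde\theta,\dots)-S_k(\theta,\dots)=\sum_\alpha c_{k,\alpha}\p^\alpha w$ (valid since $S_k$ is polynomial, with coefficients $c_{k,\alpha}$ that are $O_{\infty}(1)$ by the uniform bounds on $\theta,\tilde\theta$, via the fundamental theorem of calculus along the segment $\theta+sw$), I obtain the linear difference equation
\[
-\Box w + m^2 w = \mathcal{L}w + \frac{F}{M^{l+1}}, \qquad \mathcal{L}w = \sum_{k=1}^l \frac{1}{M^k}\sum_\alpha c_{k,\alpha}\p^\alpha w,
\]
where $F=\tilde R_l-R_l=O_{\infty}(1)$ and $\mathcal{L}$ is a differential operator of some fixed order $N$ with $O_{\infty}(1)$ coefficients, every term carrying at least one factor $M^{-1}$. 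The data give $w|_{x^0=0}=0$ and $\p_0 w|_{x^0=0}=0$. I would then run a bootstrap on the spatial Sobolev energies $\mathcal{E}_s(t)=\sum_{|\beta|\le s}\int_{\{x^0=t\}}\big[(\p_0\p^\beta w)^2+\sum_i(\p_i\p^\beta w)^2+m^2(\p^\beta w)^2\big]$, for which $\mathcal{E}_s(0)=0$ at every stage. Differentiating $\mathcal{E}_s$ and inserting the equation, I split $\p^\beta\mathcal{L}w$ into a ``good'' part, of derivative order $\le s+1$ and at most one time derivative, which is absorbed into $CM^{-1}\mathcal{E}_s$, and a ``bad'' part of derivative order up to $s+N$, which I bound not by $\mathcal{E}_s$ but by the a priori estimate obtained at the previous stage. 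Starting from the given bound $w=O_{\infty}(1)$, Gronwall on the fixed interval $[0,T]$ (the coefficients carry a factor $M^{-1}$) yields $\mathcal{E}_s^{1/2}=O(M^{-1})$ for every $s$, hence $w=O_{\infty}(M^{-1})$; feeding this back improves the bad part to $O(M^{-2})$, and after $l+1$ iterations the improvement saturates at the floor set by the source, giving $\mathcal{E}_s^{1/2}=O(M^{-l-1})$ for all $s$. Converting time derivatives through $\p_0^2 w=\Delta w-m^2 w+\mathcal{L}w+M^{-l-1}F$ then upgrades this to $w=O_{\infty}(M^{-l-1})$ with all spacetime derivatives, as claimed.

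\emph{Main obstacle.} The delicate point is exactly the higher-derivative (``bad'') part of $\mathcal{L}w$: because $S_k$ can contain more than two derivatives (already $\mathcal{F}_4=\Box\mathcal{F}_2-\mathcal{F}_2^2$ in \eqref{EFT_exp} involves second derivatives), the right-hand side loses derivatives relative to the energy $\mathcal{E}_s$, and a naive Gronwall argument does not close. The resolution---and the very reason the uniform-boundedness requirement is built into Definition \ref{def:EFT_sol}---is that the a priori uniform bound on \emph{all} derivatives of $w$ lets one estimate these lost derivatives by an $M$-independent constant, while the accompanying factor $M^{-1}$ makes their contribution strictly subleading; iterating this trade-off is what drives the power of $M^{-1}$ down to the order $M^{-l-1}$ fixed by the remainder term.
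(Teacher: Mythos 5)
Your proposal is correct and takes essentially the same route as the paper: the existence construction (triangular hierarchy of linear Klein--Gordon equations, zero data at orders $j\ge 1$, remainder $R_l$ read off from the finite expansion) is identical, and your uniqueness argument is the same bootstrap that gains one power of $M^{-1}$ per iteration by exploiting the uniform boundedness of all derivatives of both solutions, saturating at $O(M^{-l-1})$ because of the remainder terms. The only difference is packaging: the paper treats the whole difference $S_k(\tilde\theta,\partial\tilde\theta,\ldots)-S_k(\theta,\partial\theta,\ldots)$ as a small source (each term contains a factor of $\delta\theta$ or its derivatives) and repeatedly applies its energy lemma (Lemma \ref{EFTEnEst}), whereas you linearize via the fundamental theorem of calculus and run a Gronwall argument with a good/bad splitting --- on a fixed $M$-independent time interval the absorption of your ``good'' part is not even needed.
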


The ``existence" part of the proof of this theorem uses the standard EFT approach of constructing a solution as an expansion in powers of $M^{-1}$. Notice that if $\theta$  satisfies the conditions of the theorem, then so does $\theta +M^{-(l+1)} \chi$ for any $\chi$ which is uniformly bounded together with all its derivatives as $M \to \infty$, and which vanishes to first order on $\{x^0=0\}$. Thus we cannot expect a stronger statement of uniqueness than we obtain. The proof of uniqueness exploits heavily the uniform boundedness assumption, which can be regarded as eliminating ``spurious" solutions. We emphasize that we mean boundedness as $M \rightarrow \infty$, not boundedness as $T\rightarrow \infty$.\footnote{
Although $M \rightarrow \infty$ does imply $T \rightarrow \infty$ for the $(\bar{\rho},\bar{\theta})$ system \eqref{barredsystem} with fixed mass parameter $\mu$.} Crucially, this result gives us existence and uniqueness \emph{without} requiring additional initial data, even though the truncated EFT equation may involve more than two time derivatives. 

We will also describe below how we can give a topology to the space of EFT solutions and the Cauchy data such that the solution depends continuously on the initial data. Together with Theorem \ref{thm:EFT_sol}, this constitutes a proof of well-posedness of the initial value problem, for EFT solutions.

\subsection{Approximation of UV solutions by EFT solutions}

In Theorem \ref{thm:EFTexp}, when deriving the EFT equation of motion for our particular UV problem above, we worked only to finite order in time derivatives \eq{EFT_exp}. However, when introducing the notion of EFT solution in Definition \ref{def:EFT}, we required boundedness of {\it all} time derivatives of $\theta$ and $R_l$. We could have chosen to only require boundedness for a finite number of time derivatives of $\theta$ and $R_l$. This would have the advantage that if $(\rho,\theta)$ is a solution of the UV theory satisfying the conditions of Theorem \ref{thm:EFTexp} then $\theta$ would define an EFT solution to an appropriate order. However, the disadvantage is that a more complicated definition of an EFT solution would make establishing a uniqueness statement analogous to Theorem \ref{thm:EFT_sol} considerably harder, and more dependent on the detailed form of the terms $S_k$ in \eqref{genEFT}.

We have chosen to use a straightforward definition for an EFT solution, as a consequence of which the light field in a UV solution satisfying Theorem \ref{thm:EFTexp} is not itself an EFT solution. As a result of this, we cannot directly appeal to Theorem \ref{thm:EFT_sol} to establish that such a solution of the UV theory must be close to an EFT solution. Instead we will directly prove the following result concerning the approximation of solutions to the UV theory by EFT solutions.
\begin{Theorem} \label{Thm:EFTApr}
Suppose that $(\rho, \theta)$ satisfy the conclusions of Theorem  \ref{thm:EFTexp} on $\mathcal{M}_T$ for some $l$, and suppose that $\tilde{\theta}$ is an EFT solution of \eq{EFT_asympexp} to order $l$ such that
\[
\theta|_{x^0=0}= \tilde{\theta}|_{x^0=0}, \qquad \p_0\theta|_{x^0=0}= \p_0\tilde{\theta}|_{x^0=0}.
\]
Then for any  $j\leq l$ we have
\[
\p_0^{j}  \theta =  \p_0^j \tilde{\theta}  + O_{\infty}(M^{-l-1+j}).
\]
In particular, $\theta-\tilde{\theta} = O_\infty (M^{-l-1})$.
\end{Theorem}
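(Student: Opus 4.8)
The plan is to estimate the difference $\delta := \theta - \tilde{\theta}$, which by hypothesis has vanishing Cauchy data $\delta|_{x^0=0} = \p_0\delta|_{x^0=0} = 0$, by energy methods. By \eq{EFT_exp} the UV field obeys $-\Box\theta = N[\theta] + E$ with $N[\theta] = 2\p_\mu\bigl(\sum_{j=1}^{\lfloor l/2\rfloor}\mathcal{F}_{2j}[\theta]M^{-2j}\bigr)\p^\mu\theta$ and $E = O_{\infty}(M^{-l})$, while by Definition \ref{def:EFT} the EFT solution obeys $-\Box\tilde{\theta} = N[\tilde{\theta}] + \tilde{E}$ with $\tilde{E} = O(M^{-l-1})$, now with \emph{all} derivatives bounded. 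Subtracting gives
\[
\p_0^2\delta - \sum_i \p_i^2\delta = \left(N[\theta]-N[\tilde{\theta}]\right) + E - \tilde{E}.
\]
The first thing to record is that $N[\theta]-N[\tilde{\theta}]$ is a linear differential operator acting on $\delta$ with uniformly bounded coefficients: writing each $\mathcal{F}_{2j}[\theta]-\mathcal{F}_{2j}[\tilde{\theta}]$ via the fundamental theorem of calculus (the coefficients being polynomials in the bounded derivatives of $\theta$ and $\tilde{\theta}$) yields $N[\theta]-N[\tilde{\theta}] = \sum_{j=1}^{\lfloor l/2\rfloor} M^{-2j}\mathcal{L}_{2j}\delta$, where $\mathcal{L}_{2j}$ has order $2j$. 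The leading term $M^{-2}\mathcal{L}_2\delta$ is only second order, hence a size-$M^{-2}$ perturbation of the principal part $\Box$ and harmless in energy estimates; the higher terms lose derivatives but carry compensating weights $M^{-2j}$.

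Two structural features make the estimate close. First, the source: the only reason $E$ is $O_{\infty}(M^{-l})$ rather than $O_{\infty}(M^{-l-1})$ is that its dangerous piece comes from a single time derivative landing on the remainder $r := \rho - \sum_j \mathcal{F}_{2j}[\theta]M^{-2j} = O_{\infty}(M^{-l-1})$ from \eq{rho_exp_new1}. Indeed $E = 2\p^\mu r\,\p_\mu\theta$, and the offending term $-2(\p_0 r)\p_0\theta$ rewrites as $\p_0(-2r\p_0\theta) + 2r\p_0^2\theta$, so $E = \p_0 G + H$ with $G = -2r\p_0\theta$ and $H = 2r\p_0^2\theta + 2\sum_i \p_i r\,\p_i\theta$ both $O_{\infty}(M^{-l-1})$. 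In the energy identity obtained by multiplying by $\p_0\delta$, the contribution $\int \p_0\delta\,\p_0 G$ is integrated by parts in time; using the vanishing Cauchy data and substituting the equation for $\p_0^2\delta$, this trades the apparent $M^{-l}$ for $M^{-l-1}$, recovering the missing power. The residue $H - \tilde{E}$ is already $O(M^{-l-1})$.

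Second, the nonlinear operator is controlled by a simultaneous bootstrap over the number of time derivatives, organised most transparently in the rescaled time $\tau = Mx^0$: the target conclusion $\p_0^j\delta = O_{\infty}(M^{-l-1+j})$ is exactly the statement that $\delta$ and all its $\tau$- and spatial derivatives are uniformly $O_{\infty}(M^{-l-1})$, and in these variables the weight $M^{-2j}$ on $\mathcal{L}_{2j}$ precisely cancels the factor $M^{2j}$ generated by $\p_0^{2j} = M^{2j}\p_\tau^{2j}$. Thus, once the hierarchy is assumed, every nonlinear term is consistently of size $M^{-l-1}$, and the high (up to order $\sim l$) time derivatives appearing in $\mathcal{L}_{2j}\delta$ are covered by the a priori bounds $\p_0^k\delta = O_{\infty}(1)$ for $k\le l+2$ supplied by \eq{Bded1} and the definition of an EFT solution. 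Running the energy estimate at every spatial Sobolev order and applying Grönwall on the fixed interval $[0,T]$ (the perturbation being $O(M^{-2})$, hence with $M$-independent constant), then integrating $\p_0\delta$ from the vanishing initial data, yields $\delta = O_{\infty}(M^{-l-1})$. The graded bounds for $j\ge 2$ follow by repeatedly differentiating the equation $\p_0^2\delta = \sum_i\p_i^2\delta + (N[\theta]-N[\tilde{\theta}]) + E - \tilde{E}$ in time, each extra time derivative landing on $E$ and costing one power of $M$, which reproduces $\p_0^j\delta = O_{\infty}(M^{-l-1+j})$.

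I expect the main obstacle to be reconciling the loss of derivatives in the higher EFT terms $M^{-2j}\mathcal{L}_{2j}\delta$ with the need to gain a power of $M^{-1}$ beyond the naive $O_{\infty}(M^{-l})$ source bound. Neither difficulty is manageable by a single unweighted energy estimate: both are overcome only because the uniform-boundedness hypothesis furnishes a priori control of many time derivatives of $\delta$, and because the dangerous part of the source is an exact time derivative of a smaller quantity. Making the weighted (rescaled-time) energy close \emph{simultaneously} across all orders of time derivative — rather than cascading to ever-higher derivative norms — is the delicate point, and it is precisely this bookkeeping that produces the graded conclusion $\p_0^j\delta = O_{\infty}(M^{-l-1+j})$.
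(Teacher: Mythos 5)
Your decomposition and both of your structural observations match the paper's strategy at the base level: the paper likewise studies $\delta\theta=\theta-\tilde{\theta}$, and its final step writes the source as $R'=\p_\mu X^\mu+\tilde{R}$ with $X^\mu,\tilde{R}=O_\infty(M^{-l-1})$, where the divergence is built precisely from your $r=\rho-\sum_k\mathcal{F}_{2k}[\theta]M^{-2k}$ multiplied by $\p^\mu\theta$ (plus the $\mathcal{F}_{2k}[\theta]-\mathcal{F}_{2k}[\tilde{\theta}]$ differences), and it gains the last power of $M$ through the duality estimate of Lemma \ref{dualen}; your integration by parts in time inside the energy identity is a legitimate substitute for that lemma. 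The genuine gap lies in the mechanism you propose for establishing the graded hierarchy $\p_0^j\delta\theta=O_\infty(M^{-l-1+j})$, which both you and the paper need \emph{before} the base-level estimate can close, since without it the nonlinear term $M^{-2k}\mathcal{L}_{2k}\delta\theta$ is only $O(M^{-2k})$, far too large.

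The hypotheses control only finitely many time derivatives: \eq{Bded1} gives $\p_0^k\theta=O_\infty(1)$ only for $k\le l+2$ (only $\tilde{\theta}$ has all derivatives bounded, by Definition \ref{def:EFT}). Your proposed mechanisms --- ``repeatedly differentiating the equation in time,'' or closing the weighted energy estimates ``simultaneously across all orders'' --- require applying $\p_0^{j-2}$ (or commuting $\p_0^{j-1}$) through source terms $M^{-2k}\mathcal{L}_{2k}\delta\theta$ with $k$ up to $\lfloor l/2\rfloor$. Such a term contains time derivatives of order up to $j-2+2k$, landing both on $\delta\theta$ and on the coefficient polynomials in $\theta$; as soon as $j-2+2k>l+2$ (already for $j\ge 5$ when $l$ is even, since $2k$ ranges up to $2\lfloor l/2\rfloor$), these factors are controlled by nothing in the hypotheses, so the bootstrap assumption at those levels can never be recovered and the continuity argument cannot close. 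The missing idea is the paper's level-adapted truncation: it runs a \emph{backwards} induction on the number of time derivatives in which, at level $J$, the source is re-expanded with the EFT series truncated at order $2\lfloor\frac{l+1-J}{2}\rfloor$. Every retained term then involves at most $2\lfloor\frac{l+1-J}{2}\rfloor+J-2\le l$ time derivatives, within the range covered by \eq{Bded1} and the graded expansion \eq{rho_exp_new1}, while the discarded tail either involves only $\tilde{\theta}$ or carries an explicit prefactor $M^{-2k}$ with $2k\ge l+2-J$ --- which suffices because at level $J$ only the weaker accuracy $O_\infty(M^{-l-2+J})$ is required. Without this device your argument closes only for small $l$; with it, it essentially becomes the paper's proof.
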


We note that although we establish existence of EFT solutions using a perturbation series approach in Theorem \ref{thm:EFT_sol}, Theorem \ref{Thm:EFTApr} is agnostic about how the EFT solution is constructed. Whether we solve the truncated EFT equation while choosing the additional initial data to impose the boundedness condition, or we use a reduction of order technique and solve a modified EFT equation which is second order in time, or take some other approach is immaterial. Any valid approach to construct an EFT solution satisfying Definition \ref{def:EFT} will result in an approximation to the UV solution to the appropriate order.

\subsection{The nonlinear truncation}

We now return to the specific EFT of equation \eqref{EFT_asympexp}. If we truncate this equation, retaining only the leading order EFT corrections then the resulting equation can be written
\ben{nonlinear_wave}
 G^{\mu\nu} \partial_\mu \partial_\nu \theta = 0 \qquad \qquad G^{\mu\nu} \equiv g^{\mu\nu} - \frac{4}{M^2} \partial^\mu \theta \partial^\nu \theta
\een
note that this does not involve higher than second derivatives of $\theta$: it is a non-linear wave equation, which is hyperbolic for sufficiently large $M$, assuming that $\partial_\mu \theta$ is bounded as $M \rightarrow \infty$. Thus, for initial data as considered above, this equation admits an initial value problem that is locally well-posed in the conventional sense. Any smooth solution of \eqref{nonlinear_wave} whose derivatives are all uniformly bounded as $M \rightarrow \infty$ is an EFT solution of order $3$ (because $S_3$ vanishes for this particular EFT). It is natural to ask what we can learn from the fact that we can obtain an EFT solution from a nonlinear wave equation. In particular, for \eqref{nonlinear_wave}, we can define an ``effective metric" $(G^{-1})_{\mu\nu}$, the inverse of $G^{\mu\nu}$. This effective metric determines causal properties of \eqref{nonlinear_wave}. The null cone of this effective metric lies everywhere inside, or on, the null cone of the Minkowski metric $g_{\mu\nu}$ (see e.g. \cite{Adams:2006sv}) so \eqref{nonlinear_wave} describes disturbances propagating below, or at, the speed of light. One might wonder whether we can exploit Theorem \ref{thm:EFT_sol} to deduce that $(G^{-1})_{\mu\nu}$ provides a better description of causality than $g_{\mu\nu}$ for a general EFT solution. However, we shall argue below that, over a fixed ($M$-independent) time interval, a general EFT solution cannot distinguish between the domains of dependence defined by these two different metrics.

\subsection{Long timescales}

 In the discussion so far, we assume that the time interval $[0, T)$ on which we work is fixed independently of $M$. In some circumstances one may be interested in timescales which grow with $M$, say $T \propto M^\lambda$ for some $\lambda>0$. In general, we have no reason to expect that a particular choice of initial data will give rise to a UV solution with this lifetime, and even if such a solution does exist, we have no reason to expect that it will continue to satisfy the boundedness conditions of Theorem \ref{thm:EFTexp}. However, it may be that in some circumstances\footnote{For example, one might hope that with a decompactified domain ($\R^n$ replacing $\T^n$), one could make use of the decay mechanism provided by dispersion to establish long time existence and boundedness, at least for data which is small in some appropriate sense.}there does exist a long-lived solution satisfying the conditions of Theorem \ref{thm:EFTexp}. Our approach above can be adapted to this situation, however it now becomes necessary to keep track of $T$ in the various estimates. 

For simplicity we consider only the first correction in the EFT expansion. In the context of the particular UV theory described by \eq{rhothetaeq1}, \eq{rhothetaeq2}, we will show that an EFT solution to order $3$ does exist for $\lambda <2$. However,  in order to construct such a solution through iterated solution of linear equations the effective expansion parameter becomes $T/M^2$ which means that we have to perform more iterations than in the case for which $T$ is independent of $M$. In particular, if $\lambda$ is close to $2$ then many iterations are required. By contrast, if we work with \eqref{nonlinear_wave}, then it suffices to solve this single nonlinear equation to obtain an EFT solution to order $3$, assuming that the solution satisfies our boundedness conditions (which we have not attempted to prove). So \eqref{nonlinear_wave} seems better suited to constructing EFT solutions over such long time intervals than the iterative approach.

For $\lambda <4/3$ we shall establish the following result showing that EFT solutions do approximate the UV solution, but that the accuracy of the EFT approximation is lessened over long timescales.
\begin{Lemma}\label{LongTLemma}
Let $0<T \leq C M^\lambda$ for some $0\leq \lambda <  \frac{4}{3}$. Suppose $(\rho, \theta)$ solve the UV equations of motion on the interval $[0,T)$, and satisfy the conditions of Theorem \ref{thm:EFTexp} to order $l=3$. Let $\tilde{\theta}$ be an EFT solution of \eq{EFT_asympexp} to order $3$ such that
\[
\theta|_{x^0=0}= \tilde{\theta}|_{x^0=0}, \qquad \p_0\theta|_{x^0=0}= \p_0\tilde{\theta}|_{x^0=0}.
\]
Then we have
\[
\theta= \tilde{\theta} + O_\infty \left( \frac{T^3}{M^4} \right).
\]
\end{Lemma}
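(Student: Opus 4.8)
The plan is to follow the comparison argument behind Theorem~\ref{Thm:EFTApr}, but to carry the length $T$ of the time interval explicitly through every estimate rather than absorbing it into constants. I would first record the two equations satisfied to order $l=3$. Since $(\rho,\theta)$ satisfy the hypotheses of Theorem~\ref{thm:EFTexp} with $l=3$ (so that $\lfloor l/2\rfloor=1$ and only the $\mathcal{F}_2$ term survives), equation \eqref{EFT_exp} gives
\[
\Box\theta + \frac{2}{M^2}\,\p_\mu\!\big(\mathcal{F}_2[\theta]\big)\p^\mu\theta = F_\theta,\qquad F_\theta = O_\infty(M^{-3}),
\]
while Definition~\ref{def:EFT} gives, for the EFT solution $\tilde\theta$,
\[
\Box\tilde\theta + \frac{2}{M^2}\,\p_\mu\!\big(\mathcal{F}_2[\tilde\theta]\big)\p^\mu\tilde\theta = F_{\tilde\theta},
\]
where $F_{\tilde\theta}$ and all of its derivatives are $O(M^{-4})$. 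Writing $u=\theta-\tilde\theta$ and subtracting, and using that $\mathcal{F}_2=-\p_\mu(\cdot)\p^\mu(\cdot)$ is quadratic, the difference of the nonlinear terms is linear in $u$ with $O(1)$ coefficients built from $\theta,\tilde\theta$ and their first two derivatives, plus a remainder quadratic in $u$ carrying an explicit factor $M^{-2}$. The equation for $u$ then takes the form
\[
G^{\mu\nu}\p_\mu\p_\nu u = \frac{1}{M^2}\big(b^\mu\p_\mu u + c\,u\big) + \mathcal{N}(u) + F,\qquad F = F_\theta - F_{\tilde\theta},
\]
with $G^{\mu\nu}$ the $O(M^{-2})$ perturbation of $g^{\mu\nu}$ appearing in \eqref{nonlinear_wave}, hence uniformly hyperbolic for large $M$; here $b^\mu,c=O(1)$, the term $\mathcal{N}(u)$ collects the quadratic-in-$u$ contributions (each with a factor $M^{-2}$), and the data for $u$ vanish to first order at $x^0=0$.

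The decisive point is the structure of $F$. Naively $F=O_\infty(M^{-3})$, the dangerous contribution arising from $\p_0\rho$, whose remainder beyond $\p_0(\mathcal{F}_2/M^2)$ is only $O_\infty(M^{-3})$ by Theorem~\ref{thm:EFTexp}. I would exploit that this remainder is $\p_0$ of the $\rho$-expansion remainder, which is itself $O_\infty(M^{-4})$; hence, modulo genuinely $O(M^{-4})$ pieces, $F=\p_0 \mathcal{G}$ with $\mathcal{G}=O_\infty(M^{-4})$. This total-time-derivative form is exactly what lets the $M^{-3}$ source behave like an $M^{-4}$ source in the energy identity, after one integration by parts in $x^0$ (the boundary term at $x^0=0$ vanishing) and trading $\p_0^2 u$ for $\Delta u$ through the equation.

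With this in hand I would run the energy estimate for $G^{\mu\nu}\p_\mu\p_\nu$ applied to $u$ and, to reach the $O_\infty$ conclusion, to each spatial derivative $\p^\alpha u$. The lower-order terms contribute $M^{-2}$ times the energy, producing a Gr\"onwall factor $e^{CT/M^2}$ that is $O(1)$ precisely because $\lambda<2$. What remains is polynomial growth in $T$: I expect the argument to establish the graded bounds $\p_0^j u = O_\infty\!\big(T^{3-j}M^{-4+j}\big)$ for $j=0,1,2,3$, whose bottom rung $j=0$ is the claim. The improving powers of $M$ (from $M^{-1}$ at $j=3$ down to $M^{-4}$ at $j=0$) come from the equation, which trades two time derivatives for a Laplacian plus an explicit $M^{-2}$ correction, while the factor $T^{3-j}$ accumulates from the successive integrations in $x^0$ descending the hierarchy, each starting from vanishing data and so contributing one power of $T$. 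Finally the quadratic remainder $\mathcal{N}(u)$ must be reabsorbed: feeding back the bootstrap bound, $\mathcal{N}(u)$ is subleading provided $T^3/M^4$ stays bounded, which under $T\le CM^\lambda$ holds once $3\lambda\le4$; the hypothesis $\lambda<\tfrac{4}{3}$ secures this and closes the argument.

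The main obstacle I anticipate is precisely this bookkeeping: verifying that the total-time-derivative structure of $F$ genuinely upgrades the $M^{-3}$ source to an effective $M^{-4}$ source uniformly over $[0,T]$, and that the three integrations descending the hierarchy produce exactly the power $T^3$ and no worse. A secondary difficulty is maintaining, with $M$- and $T$-independent constants, both the uniform boundedness of the coefficients $b^\mu,c,G^{\mu\nu}$ — which depend on derivatives of $\theta$ and $\tilde\theta$ — and the uniform hyperbolicity of $G^{\mu\nu}$ on the whole long interval, so that the energy method applies throughout.
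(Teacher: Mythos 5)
Your central structural insight is correct, and it is the same one the paper's proof turns on, in different clothing. Writing $r=\rho-\mathcal{F}_2[\theta]/M^2$, Theorem~\ref{thm:EFTexp} with $l=3$ gives $r=O_{\infty}(M^{-4})$ but only $\p_0 r=O_{\infty}(M^{-3})$, and the dangerous part of the source in the $\delta\theta$-equation, $2\p_0 r\,\p_0\theta$, equals $\p_0\left(2r\p_0\theta\right)-2r\,\p_0^2\theta$, i.e.\ a derivative of an $O_{\infty}(M^{-4})$ quantity plus an honest $O_{\infty}(M^{-4})$ term. The paper exploits exactly this, but via a spacetime divergence rather than a time derivative: after a first, crude application of Lemma~\ref{EnEst} giving $\norm{\p_\mu\delta\theta}{C^0_tL^2_x}\le CT/M^2$, it rewrites the right-hand side as $\p_\mu\tilde X^\mu+f$, where $\tilde X^\mu$ and $f$ contain only first derivatives of $\delta\theta$ and the \emph{undifferentiated} remainder $r$, so that $\norm{\tilde X}{C^0_tL^2_x}+\norm{f}{L^1_tL^2_x}\le CT^2/M^4$, and then closes with the duality estimate of Lemma~\ref{dualen}, whose prefactor $(1+T)$ produces the $T^3/M^4$; commuting with spatial derivatives gives the $O_\infty$ statement. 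Note also that in the paper's telescoped difference equation every term is \emph{linear} in $\delta\theta$ (coefficients built from $\theta,\tilde\theta$, both controlled a priori), so no bootstrap is required, and $\lambda<4/3$ enters the estimates nowhere: it only guarantees that the error $T^3/M^4$ actually tends to zero.

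Against this, two steps of your proposal would fail as described. First, the descent mechanism for your graded bounds $\p_0^j u=O_{\infty}(T^{3-j}M^{-4+j})$: integrating in $x^0$ gains a factor $T$ per step but never a factor $M^{-1}$, so ``successive integrations descending the hierarchy'' cannot produce the improving powers of $M$; moreover the data of $\p_0^j u$ at $x^0=0$ vanish only for $j=0,1$ (for instance $\p_0^2 u|_{x^0=0}$, computed from the two equations, is $O(M^{-3})$ but not zero in general). The gain in powers of $M$ in such hierarchies comes from re-decomposing the source at each level, as in the backwards induction proving Theorem~\ref{Thm:EFTApr}, not from time integration. Fortunately you do not need the hierarchy: the lemma asserts only the $j=0$ bound, and your own energy identity closes without it --- integrate $\int\p_0 u\,\p_0\mathcal{G}$ by parts in time (keeping the boundary term at $x^0=t$, which is controlled by $\norm{\mathcal{G}(t)}{L^2}$ times the energy flux), trade $\p_0^2 u$ through the equation and integrate the resulting spatial second derivatives by parts in $x^i$ (spatial derivatives of $\mathcal{G}$ cost nothing), and absorb the $M^{-2}$-linear terms by Gr\"onwall over $[0,T]$ using $\lambda<2$. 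This yields $\norm{\p u}{C^0_tL^2_x}\lesssim (1+T)M^{-4}$ and hence, integrating once from vanishing data, $\norm{u}{C^0_tL^2_x}\lesssim T(1+T)M^{-4}$, which in the regime $T\gtrsim 1$ of interest is at least as strong as the claimed $T^3/M^4$. Second, your bootstrap step is not closable as stated: the quadratic remainder $\mathcal{N}(u)$ contains second derivatives of $u$, which none of your claimed bounds control, and tying its absorption to $3\lambda\le 4$ misidentifies the role of the hypothesis; the clean fix is to telescope the difference of the nonlinearities (writing it with coefficients depending on both $\theta$ and $\tilde\theta$) so that no quadratic-in-$u$ terms appear at all, as the paper does.
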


We also claim (without proof) that this result can be improved as follows. If the UV solution satisfies the conditions of Theorem \ref{thm:EFTexp} to order $l$ and $\tilde{\theta}$ an EFT solution to order $l$ then $\tilde{\theta}-\theta$ is $O_\infty(T^{\lfloor \frac{l}{2}\rfloor+2}/M^{l+1})$. This implies that, for any $\lambda<2$, if $l$ is large enough then the EFT solution approximates the UV solution.

\subsection{Behaviour of solutions of the UV theory}

In order to obtain the EFT equation of motion we needed to make some strong assumptions in Theorem \ref{thm:EFTexp}. 
We now discuss the question of whether there exists a suitably large class of solutions of the UV theory that satisfies these assumptions. So we need to understand the behaviour of solutions of equations \eq{rhothetaeq1}, \eq{rhothetaeq2}  as $M \rightarrow \infty$. 

The equations \eq{rhothetaeq1}, \eq{rhothetaeq2} are nonlinear, and so for general dimension and arbitrary initial data we do not expect a solution to exist for all time. Standard local well-posedness results ensure that for any fixed $M$ there exists a time $T>0$ for which a smooth solution exists in the time interval $(0, T)$. However we need to exclude the possibility that $T \rightarrow 0$ as $M \rightarrow \infty$. We need a lower bound on the time of existence that is independent of $M$, at least for $M$ sufficiently large. 
\begin{Proposition}\label{Prop1}
Suppose that for each $M \geq M_0$ the initial data functions $M\rho_0, \rho_1, \theta_0, \theta_1$ are smooth and $O_{\infty}(1)$. Then for $M_0$ sufficiently large, a smooth solution to \eq{rhothetaeq1}, \eq{rhothetaeq2} exists on some time interval $[0, T)$, where $T>0$ may be taken independent of $M$. Furthermore, we have:
\[
\theta, \ \p_0\theta,\   \p_0^2\theta,\   M \rho, \   \p_0\rho  = O_{\infty}(1) 
\]
in $\mathcal{M}_T$, where the implicit constants depend on the initial data.
\end{Proposition}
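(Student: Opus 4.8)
The plan is to prove existence and the stated bounds \emph{simultaneously}, by running uniform-in-$M$ energy estimates for the coupled system \eqref{rhothetaeq1}, \eqref{rhothetaeq2}, closing them with a continuity (bootstrap) argument, and then invoking the standard continuation criterion to turn the a priori bounds into an $M$-independent lower bound on the lifespan. The key device is to use the \emph{$M$-weighted} Klein--Gordon energy for $\rho$, so that the dangerous mass term is built into the energy rather than treated as a forcing. For a non-negative integer $s$ set
\[
E_s(t) = \sum_{|\alpha|\le s} \frac12 \int_{\{x^0=t\}} \left[ (\p_0 \p_x^\alpha \rho)^2 + \abs{\nabla \p_x^\alpha \rho}^2 + M^2 (\p_x^\alpha \rho)^2 + (\p_0 \p_x^\alpha \theta)^2 + \abs{\nabla \p_x^\alpha \theta}^2 \right],
\]
where $\p_x^\alpha$ is a spatial multi-derivative, supplemented by the spatial means of $\theta$ and $\p_0\theta$ (which the gradient energy cannot see on $\T^n$ and which I would control separately by an elementary ODE read off from \eqref{rhothetaeq2}). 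The hypotheses $M\rho_0,\rho_1,\theta_0,\theta_1 = O_\infty(1)$ give $E_s(0)\le C_s$ uniformly in $M$: the mass term contributes $M^2(\p_x^\alpha \rho_0)^2 = (\p_x^\alpha(M\rho_0))^2 = O(1)$, while $\nabla\rho_0 = O(M^{-1})$ and $\p_0\rho|_0=\rho_1=O(1)$.

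Next I would differentiate \eqref{rhothetaeq1}, \eqref{rhothetaeq2} by $\p_x^\alpha$ and contract with $\p_0\p_x^\alpha\rho$ and $\p_0\p_x^\alpha\theta$. The mass term $M^2\rho$ is then absorbed into $\frac{d}{dt}E_s$, and the only place the large parameter survives is the potential term, which equals $M^2 h(\rho)$ with $h(\rho):=\tfrac12(e^{2\rho}-1-2\rho)=\rho^2 g(\rho)$ for $g$ smooth. This second-order vanishing at the vacuum $\rho=0$ is exactly what tames the large mass: a Moser estimate gives $\norm{\p_x^\alpha h(\rho)}{L^2}\lesssim C(\norm{\rho}{\infty})\,\norm{\rho}{\infty}\norm{\rho}{H^s}$, and since the weighted energy controls $\norm{M\rho}{H^s}\le\sqrt{2E_s}$ (hence $\norm{\rho}{\infty}\lesssim M^{-1}\sqrt{E_s}$ once $s>n/2$), the two factors of $\rho$ each absorb one power of $M$, so $M^2\norm{\p_x^\alpha h(\rho)}{L^2}\lesssim E_s$ uniformly in $M$. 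All remaining contributions ($\p_\mu\rho\p^\mu\rho$, $\p_\mu\theta\p^\mu\theta$, and the coupling $\p^\mu\rho\,\p_\mu\theta$) carry no factor of $M$ and are handled by standard product estimates, the top-order factor measured in $L^2$ by $\sqrt{E_s}$ and the remaining factors in $L^\infty$ by Sobolev embedding. One checks there is no loss of derivatives, so the estimate closes as a system at each order: at the lowest order $\frac{d}{dt}E_{s_0}\le P(E_{s_0})$ for a fixed polynomial $P$, and for larger $s$ the top-order terms appear linearly, $\frac{d}{dt}E_s\le A_s(t)E_s + B_s(t)$ with $A_s,B_s$ controlled by lower-order norms, all with $M$-independent coefficients.

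With these inequalities I would run a continuity argument. Fix $s_0=\lfloor n/2\rfloor+2$, so that $E_{s_0}$ controls $\rho,\p\rho,\theta,\p\theta$ in $L^\infty$; choosing $M_0$ large guarantees that on the bootstrap region $\norm{\rho}{\infty}\le\sqrt{2E_{s_0}}/M_0$ is small, keeping $W(\rho)$, $g(\rho)$, $e^{2\rho}$ and their derivatives uniformly bounded and so legitimising the Moser constants. Comparison of $\frac{d}{dt}E_{s_0}\le P(E_{s_0})$ with the associated ODE yields a time $T>0$ depending only on $\sup_M E_{s_0}(0)$, hence independent of $M$, on which $E_{s_0}$ stays below a fixed constant. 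The continuation criterion for semilinear wave/Klein--Gordon systems (persistence as long as a sufficiently high Sobolev norm stays finite) then forces the maximal existence time $T_M$ to satisfy $T_M\ge T$ for every $M\ge M_0$, since otherwise the norm would blow up before $T$, contradicting the uniform bound. The higher-order inequalities, being linear in the top-order energy with coefficients already controlled on $[0,T)$, propagate the bound to every $s$ on the same interval by Gr\"onwall. Sobolev embedding on $\T^n$ then gives $M\rho=O_\infty(1)$ and $\p_0\rho,\theta,\p_0\theta=O_\infty(1)$, and finally $\p_0^2\theta=O_\infty(1)$ follows by reading it off from \eqref{rhothetaeq2}, namely $\p_0^2\theta=\Delta\theta+2\p^\mu\rho\,\p_\mu\theta$, whose right-hand side is a sum of $O_\infty(1)$ quantities.

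The main obstacle is precisely the large-mass term. A naive energy estimate in the rescaled variable $M\rho$ sees an $O(M)$ forcing and predicts growth on the short time scale $M^{-1}$, giving no $M$-independent lifespan. The resolution, and the step demanding the most care, is the interplay between the $M^2$-weighted energy and the quadratic degeneracy of the potential at $\rho=0$: this is what lets every surviving power of $M$ be paired with a factor of $\rho$ that the weighted energy controls by $M^{-1}$. Verifying that this pairing survives differentiation to all orders $\alpha$ — that the Moser estimates genuinely extract two powers of $\rho$ from $M^2 h(\rho)$ with constants that are $M$-independent and uniform down to the vacuum — is the technical heart of the argument, and is what forces the restriction to $M\ge M_0$ with $M_0$ large.
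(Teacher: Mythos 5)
Your proposal is correct and takes essentially the same route as the paper's proof: an $M$-weighted energy (the paper's $N_k(t)$ is your $E_s$), the key observation that the potential term $M^2\rho^2 W(\rho)$ vanishes quadratically at $\rho=0$ so that each factor of $\rho$ absorbs one power of $M$ through the weighted norm, a bootstrap/continuity argument combined with the standard continuation criterion to obtain an $M$-independent lifespan, induction to propagate higher-order bounds, and Morrey embedding plus rearranging the $\theta$ equation to conclude $\p_0^2\theta = O_\infty(1)$. The only differences are presentational: you use the differential energy identity, tame Moser/product estimates, and $s_0=\lfloor n/2\rfloor+2$, where the paper uses Duhamel-form energy estimates (its Lemma on the linear Klein--Gordon equation), naive Leibniz splitting, and $k_0=2\lfloor n/2\rfloor+1$.
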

We note that the assumption that initial data are smooth is stronger than necessary -- in practice we can work at a finite order of differentiability for the data and the solution, but for clarity it is simpler to assume spatial smoothness.

At this stage we can compare the bounds above with those required by Theorem \ref{thm:EFTexp}.  Proposition \ref{Prop1} gives boundedness of $M\rho$ and $\theta$, provided we take only {\it spatial} derivatives. It does not imply that $M\p_0 \rho$ is bounded, merely that $\p_0 \rho$ is, and it makes no claim regarding $\p_0^2 \rho$. In fact, \eq{rhothetaeq1} gives (assuming the bounds of Proposition \ref{Prop1}):
\ben{oscil}
\p_0^2 \rho + M^2 \rho = O_\infty(1).
\een
In general, we should expect both terms on the left-hand side to be $O_\infty(M)$, so the assumption of uniform boundedness of $M \partial_0^2 \rho$ cannot be valid in general. Equation \eq{oscil} suggests that at top order, at each spatial point $x^i$, $\rho$ satisfies a simple harmonic oscillator equation with frequency $M$. We can expect roughly that solutions arising from initial data as in Proposition \ref{Prop1} will behave as $\rho \sim M^{-1} \kappa(x^i) \sin (Mx^0+\varphi(x^i))$, with $\kappa, \varphi$ determined from the initial data. Therefore we expect that such solutions will have $\p_0^{k}\rho \sim M^{k-1}$.

We should not be surprised that solutions arising from generic data as in Proposition \ref{Prop1} do not have uniformly bounded time derivatives. Since $\{x^0=0\}$ is a non-characteristic surface, for $l\geq 2$ we can find expressions for $\rho_l:=\p_0^l \rho|_{\{x^0=0\}}$ and $\theta_l:= \p_0^l \theta|_{\{x^0=0\}}$ in terms of $\rho_0, \rho_1, \theta_0, \theta_1$ and their derivatives by using equations \eq{rhothetaeq1}, \eq{rhothetaeq2} to replace $\p_0^2$ derivatives. It is clear that  if $M\p_0^l \rho$, $\p_0^l \theta$ in $\mathcal{M}_T$ are uniformly bounded in $M$, then  $M\rho_l$ and $\theta_l$ must be bounded on $\T^n$, so we expect some constraints on initial data. In fact, we can find a necessary and sufficient condition for uniform boundedness of $M\p_0^l \rho$, $\p_0^l \theta$ in $\mathcal{M}_T$ in terms of initial data.

\begin{Proposition}\label{Prop2}
Suppose that for each $M>M_0$ we choose initial data such that $ \theta_0, \theta_{1}$ and $M\rho_j, \rho_{j+1}$ for $0\leq j \leq l$ are smooth and $O_\infty(1)$. Then if $T$ is as in Proposition \ref{Prop1}, for $0\leq k \leq l+2$ and $0\leq j \leq l$ we have
\ben{Bded2}
\p_0^k \theta, \  M\p_0^j \rho, \ \p_0^{l+1} \rho = O_{\infty}(1)
\een
on $\mathcal{M}_T$, with the implicit constants depending on the initial data.
\end{Proposition}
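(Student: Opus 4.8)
The plan is to prove \eqref{Bded2} by induction on $l$, with the base case $l=0$ being exactly Proposition~\ref{Prop1}. Throughout I work with $M$-weighted Sobolev energies and use that on $\T^n$ the operators $\Box$ and the Klein--Gordon operator $-\Box + M^2$ commute with every spatial derivative $\p^\alpha$; thus it suffices to control a finite number of \emph{time} derivatives in $L^2$, weighted with the correct powers of $M$, since Sobolev embedding then upgrades these to the pointwise $O_\infty$ bounds (the $O_\infty$ notation already builds in all spatial derivatives). For the inductive step I suppose \eqref{Bded2} holds at level $l-1$ --- so in particular $\p_0^k\theta = O_\infty(1)$ for $k\leq l+1$, $M\p_0^j\rho=O_\infty(1)$ for $j\leq l-1$, and $\p_0^l\rho=O_\infty(1)$ --- noting that the level-$l$ data hypotheses are strictly stronger than those at level $l-1$, so the inductive hypothesis is available under the level-$l$ assumptions.

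The heart of the step is a single energy estimate for the top $\rho$-derivative. Commuting $\p_0^l\p^\alpha$ through \eqref{rhothetaeq1} gives $(-\Box + M^2)(\p_0^l\p^\alpha\rho) = \p_0^l\p^\alpha[\ldots]$, and I control the $M$-weighted energy
\[
\mathcal{E}^\rho_l(t) = \tfrac12 \sum_{|\alpha|\le s}\int_{\{x^0=t\}} \Big[ (\p_0^{l+1}\p^\alpha\rho)^2 + \sum_i (\p_i\p_0^l\p^\alpha\rho)^2 + M^2 (\p_0^l\p^\alpha\rho)^2 \Big],
\]
for $s>n/2$. The mass weight is essential: it controls $\p_0^{l+1}\rho$ at order $O(1)$ and $M\p_0^l\rho$ at order $O(1)$ simultaneously, which are precisely the two new bounds \eqref{Bded2} demands. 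The data hypotheses $M\rho_l=O_\infty(1)$ and $\rho_{l+1}=O_\infty(1)$ give $\mathcal{E}^\rho_l(0)=O(1)$. Differentiating in time and integrating by parts, $\tfrac{d}{dt}\mathcal{E}^\rho_l$ pairs $\p_0^{l+1}\p^\alpha\rho$ against $\p_0^l\p^\alpha$ of the right-hand side of \eqref{rhothetaeq1}. All $\theta$-factors that appear carry at most $l+1$ time derivatives and so are $O_\infty(1)$ by the inductive hypothesis; the nonlinear $\rho$-terms from $\p_\mu\rho\,\p^\mu\rho$ are quadratic in controlled derivatives. The only dangerous contribution is from the mass nonlinearity $-M^2\rho^2 W(\rho)$, whose worst piece is linear in the top derivative, $-2M^2\rho\, W(\rho)\,\p_0^l\rho$, with coefficient of size $O(M)$ since $M\rho=O_\infty(1)$. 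This is absorbed precisely because the energy controls $\|\p_0^l\rho\|_{L^2}$ with the extra factor $M^{-1}$: schematically $\int \p_0^{l+1}\rho\,(O(M))\,\p_0^l\rho \lesssim M\,\|\p_0^{l+1}\rho\|_{L^2}\,\|\p_0^l\rho\|_{L^2} \lesssim \mathcal{E}^\rho_l$. The remaining mass-term pieces split the $l$ time derivatives across two or more factors of $\rho$, each then carrying at most $l-1$ derivatives and hence bounded at order $O(M^{-1})$ by the inductive hypothesis, so that the accompanying $M^2$ is compensated. A Grönwall inequality with $M$-independent constants (here $T$ is the fixed time of Proposition~\ref{Prop1}) then yields $\mathcal{E}^\rho_l=O(1)$ on $[0,T)$, hence $M\p_0^l\rho,\ \p_0^{l+1}\rho=O_\infty(1)$.

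It remains to produce the single new $\theta$-bound $\p_0^{l+2}\theta=O_\infty(1)$, and here lies the main obstacle, which also dictates the shape of the argument. One cannot obtain this from a wave energy for $\p_0^{l+1}\theta$: the forcing $\p_0^{l+1}(\p^\mu\rho\,\p_\mu\theta)$ contains $\p_0^{l+2}\rho$, and \eqref{oscil} shows that this high time-derivative of the oscillatory heavy field is only $O(M)$, which would spoil the estimate and force $\p_0^{l+2}\theta=O(M)$. The resolution is to read off $\p_0^{l+2}\theta$ \emph{algebraically} rather than through energies: applying $\p_0^l$ to \eqref{rhothetaeq2} gives $\p_0^{l+2}\theta = \Delta \p_0^l\theta - 2\p_0^l(\p_0\rho\,\p_0\theta) + 2\sum_i \p_0^l(\p_i\rho\,\p_i\theta)$, and every term on the right is now $O_\infty(1)$: the Laplacian term by the inductive hypothesis, and the nonlinear terms because the highest $\rho$-derivative appearing is $\p_0^{l+1}\rho$ (paired with $\p_0\theta$), which we have just bounded, while all lower $\rho$-derivatives are $O(M^{-1})$. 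Finally, to see that the initial energy $\mathcal{E}^\rho_l(0)$ and the derived data $\p_0^m\theta|_{\{x^0=0\}}$ for $m\le l+1$ are $O_\infty(1)$, I use \eqref{rhothetaeq1}, \eqref{rhothetaeq2} to express these higher normal derivatives on the non-characteristic surface $\{x^0=0\}$ in terms of $\theta_0,\theta_1$ and $\rho_j$ with $j\le l$, all of which are controlled by hypothesis. The one delicate accounting point throughout is tracking powers of $M$ in the mass nonlinearity so that every factor of $M^2$ is compensated either by the $M^{-1}$ smallness of lower-order $\rho$-derivatives or by the $M^{-1}$ weight built into $\mathcal{E}^\rho_l$; modulo this bookkeeping the estimates are routine.
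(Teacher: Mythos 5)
Your proposal is correct and follows essentially the same route as the paper: an induction in the number of time derivatives, propagated by the $M$-weighted Klein--Gordon energy (the paper's Lemma \ref{EnEst}, packaged there into the norms $N_{k,l}$ of Theorem \ref{tThm}), with the $M^2\rho^2 W(\rho)$ nonlinearity absorbed by pairing each factor of $\rho$ with an $M^{-1}$ weight, the top derivative $\p_0^{l+2}\theta$ recovered algebraically from the $\theta$ equation rather than by an energy estimate, and the initial energies controlled by solving for normal derivatives on the non-characteristic surface $\{x^0=0\}$. The only difference is organizational: the paper runs the induction on a single combined norm $N_{k,l}$ controlling $\rho$ and $\theta$ jointly, whereas you separate the top-order $\rho$ energy estimate from the algebraic $\theta$ step --- an equivalent bookkeeping.
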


Systems of symmetric hyperbolic PDE with singular limits similar to that appearing in (\ref{rhothetaeq1}, \ref{rhothetaeq2}) have been considered in the mathematical literature, in particular with reference to incompressible limits of compressible fluids \cite{Kr, BrKr, KlMj,Sch}. Versions of some of our results above could be derived from those present in these papers, although a naive translation of our equations to symmetric hyperbolic form would result in working at higher order in derivatives. More concretely, our Proposition \ref{Prop2} could be obtained by casting (\ref{rhothetaeq1}, \ref{rhothetaeq2}) as a symmetric hyperbolic system and appealing to the results of \cite{Kr, BrKr,KlMj}. In particular, this result is an example of the ``bounded derivative principle'' elucidated by Kreiss \cite{Kr}.

The hypotheses of this proposition can be re-expressed as a condition on $\rho_0, \rho_1$, which essentially requires that the EFT expansion for $\rho$ is satisfied to a certain order at the level of the initial data.
\begin{Lemma} \label{ICLem}
Suppose that $\theta_0, \theta_1$ are given smooth functions, which are $O_\infty(1)$. Then for $0\leq j \leq l$ the functions $M\rho_j, \rho_{j+1}$ are smooth and have derivatives which are uniformly bounded in $M$ if and only if
\begin{align*}
\rho_0 &= \frac{P_2[\theta_0, \theta_1]}{M^2} + \ldots + \frac{P_{2 \lfloor\frac{l}{2} \rfloor}[\theta_0, \theta_1]}{M^{2 \lfloor\frac{l}{2} \rfloor}} + O_{\infty}(M^{-l-1}) \\
\rho_1 &= \frac{\tilde{P}_2[\theta_0, \theta_1]}{M^2} + \ldots + \frac{\tilde{P}_{2 \lfloor\frac{l-1}{2} \rfloor} [\theta_0, \theta_1]}{M^{2 \lfloor\frac{l-1}{2} \rfloor}} + O_{\infty}(M^{-l}) 
\end{align*}
Here $P_j[\theta_0, \theta_1], \tilde{P}_j[\theta_0, \theta_1]$ are explicit polynomials of $\theta_0, \theta_1$ and their derivatives which can be determined iteratively. The first terms in the expansion are:
\[
P_2[\theta_0, \theta_1] = \theta_1^2-\p_i \theta_0 \p_i \theta_0, \qquad \tilde{P}_2[\theta_0, \theta_1] = 2 \theta_1 \p_i \p_i \theta_0 - 2\p_i \theta_0 \p_i \theta_1.
\]
\end{Lemma}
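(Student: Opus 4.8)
The plan is to view this as the restriction to the initial slice $\{x^0=0\}$ of the iteration underlying Theorem~\ref{thm:EFTexp}. Since $\{x^0=0\}$ is non-characteristic, the quantities $\rho_k := \p_0^k \rho|_{\{x^0=0\}}$ and $\theta_k := \p_0^k \theta|_{\{x^0=0\}}$ are, for $k\geq 2$, determined as explicit polynomials in $\rho_0,\rho_1,\theta_0,\theta_1$ and their spatial derivatives: differentiating \eqref{rhothetaeq1}, \eqref{rhothetaeq2} $k$ times in $x^0$ and restricting to $\{x^0=0\}$ yields, with $\Delta := \sum_i \p_i\p_i$,
\ben{slicerec}
\rho_{k+2} = \Delta\rho_k - M^2 \rho_k + F_k, \qquad \theta_{k+2} = \Delta\theta_k + G_k,
\een
where $F_k$ and $G_k$ are polynomials in $\rho_0,\ldots,\rho_{k+1}$, $\theta_0,\ldots,\theta_{k+1}$ and their spatial derivatives; $F_k$ contains the distinguished term $-M^2 \p_0^k\left(\rho^2 W(\rho)\right)|_{\{x^0=0\}}$ carrying an explicit $M^2$, while every term of $G_k$ carries a derivative of $\rho$. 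The engine of the argument is the rearrangement
\ben{engine}
M^2 \rho_k = \Delta\rho_k - \rho_{k+2} + F_k,
\een
which is \eqref{rho_eq_rearranged} evaluated on the slice. The polynomials $P_{2j}$, $\tilde P_{2j}$ are nothing but the slice reductions of $\mathcal{F}_{2j}[\theta]$ and $\p_0\mathcal{F}_{2j}[\theta]$: one substitutes the $\theta_k$ and uses the second recursion in \eqref{slicerec} to trade all time derivatives of $\theta$ for spatial derivatives of $\theta_0,\theta_1$, the corrections $G_k$ being discarded into higher order in $M^{-1}$ because they are built from the small quantities $\rho_j$. This is exactly why $\tilde P_2 = 2\theta_1 \p_i\p_i\theta_0 - 2\p_i\theta_0\p_i\theta_1$ keeps only $\theta_2 \approx \Delta\theta_0$.

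For the ``if'' direction I would assume the stated expansions for $\rho_0,\rho_1$ together with $\theta_0,\theta_1 = O_{\infty}(1)$ and propagate through \eqref{slicerec} by induction on $k$. Boundedness of the $\theta_k$ is immediate, since $\theta_{k+2}=\Delta\theta_k + G_k$ with $G_k=O_{\infty}(1)$ (every term carries a $\rho$-derivative factor, and the $\theta$-factors are bounded by the inductive hypothesis). For the $\rho_k$ the point is that, with the $P_{2j}$ defined as above, the leading contributions to $M^2\rho_k$ and to $F_k$ cancel in the first recursion of \eqref{slicerec}, so that each $\rho_{k+2}$ inherits an EFT expansion of its own; tracking the orders then gives $M\rho_j,\rho_{j+1}=O_{\infty}(1)$ for $j\leq l$.

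For the ``only if'' direction I would assume $M\rho_j,\rho_{j+1} = O_{\infty}(1)$ for $0\leq j\leq l$ and extract the expansions from \eqref{engine}, mirroring the bootstrap preceding Theorem~\ref{thm:EFTexp}. Taking $k=0$ with only the crude input $\rho_0 = O_{\infty}(M^{-1})$, one controls $F_0$ --- in particular the distinguished term $-M^2\rho_0^2 W(\rho_0)$, which is merely $O_{\infty}(1)$ at this stage --- to conclude $M^2\rho_0 = O_{\infty}(1)$, i.e. $\rho_0 = O_{\infty}(M^{-2})$. Re-inserting this improved bound makes $-M^2\rho_0^2 W(\rho_0) = O_{\infty}(M^{-2})$, so that \eqref{engine} becomes $M^2\rho_0 = \theta_1^2 - \p_i\theta_0\p_i\theta_0 - \rho_2 + O_{\infty}(M^{-2})$; since $\rho_2 = O_{\infty}(M^{-1})$ when $l\geq 2$, this gives $\rho_0 = P_2/M^2 + O_{\infty}(M^{-3})$. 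Iterating $k\mapsto k+2$, and using \eqref{engine} at odd $k$ after eliminating $\theta_2$ via the second recursion in \eqref{slicerec}, produces the remaining $P_{2j}$ and the $\tilde P_{2j}$ governing $\rho_1$.

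The main obstacle, exactly as in the bulk derivation, is the distinguished term $-M^2\p_0^k(\rho^2 W(\rho))|_{\{x^0=0\}}$: its explicit $M^2$ must be defeated at each stage by the smallness of $\rho$ and its time derivatives, which is only available after the bootstrap has run, so the ordering of the estimates is essential. Keeping this rigorous uses Taylor's theorem and smoothness of $W$ on $\R$ (so that $\abs{W(\rho)-1}\leq C\abs{\rho}$ with $W',W'',\ldots$ bounded) together with the inductively improved bounds on the $\rho_j$. The second point requiring care is the staircase of precisions: \eqref{engine} determines $\rho_k$ only as accurately as $\rho_{k+2}$ is known, and each additional pair of time derivatives consumes one more of the boundedness hypotheses, so the attainable remainder worsens by one power of $M^{-1}$ per time derivative. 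This is precisely what produces the asymmetric remainders $O_{\infty}(M^{-l-1})$ for $\rho_0$ and $O_{\infty}(M^{-l})$ for $\rho_1$.
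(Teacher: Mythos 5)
Your proposal is correct and takes essentially the same approach as the paper: restrict the equations of motion to the non-characteristic slice $\{x^0=0\}$ to generate the recursion for $\rho_k,\theta_k$, rearrange the $\rho$-equation as $M^2\rho_k = \p_i\p_i\rho_k - \rho_{k+2} + F_k$, and bootstrap exactly as in the bulk derivation of Theorem \ref{thm:EFTexp}, with $P_{2j},\tilde{P}_{2j}$ arising as the slice reductions of the $\mathcal{F}_{2j}$. The only difference is presentational: the paper carries this computation out explicitly for $l\leq 3$ (leaving the general case to the reader, or to an iterative appeal to Theorem \ref{thm:EFTexp} with reduction of order), whereas you sketch the general-$l$ induction.
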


The above expressions for $\rho_0$ and $\rho_1$ are the same as those obtained from the EFT expansion \eqref{rho_exp_new1} and using a formal ``reduction of order" approach in which the equations of motion are used to eliminate any second and higher time-derivatives of $\theta$. 

Combining these two results, we see that for any smooth and bounded $\theta_0, \theta_1$ we can find a solution of the UV equations for which the conditions of Theorem \ref{thm:EFTexp} are satisfied. In particular, this means that \emph{every} EFT solution arises as an approximation to a full solution to the UV equations. 

The converse statement -- that every solution to the UV equations is well described by an EFT solution -- is clearly \emph{not} true. If we do not insist that $\rho_0$, $\rho_1$ are given by the expressions in Lemma \ref{ICLem} then we do not expect that $\theta$ will be well approximated by an EFT solution. The issue, as described below Proposition \ref{Prop1}, is that we expect $\rho$ to be highly oscillatory, with frequency $M$ and nonlinearity will transmit these oscillations to $\theta$. In the language of \cite{Sch}, this is a `fast' limit, rather than the `slow' limits that result when the initial data is prepared according to Proposition \ref{Prop2}. For most practical purposes, these rapid oscillations will be very difficult to observe, which suggests that $\theta$ might still be approximated by an EFT solution, but only in an  {\it averaged} sense. We shall show that this is indeed the case, at least to leading non-trivial order, provided we modify the EFT equation of motion and the initial data in a suitable way. 

Assume that the initial data is as in Proposition \ref{Prop1}. Let $\varepsilon = \frac{1}{2}\left(\rho_1^2 + M^2 \rho_0^2 \right)$ be the initial energy density of the $\rho$ field. We now define a modified EFT solution by solving iteratively the following linear wave equations:
\ben{LinApr1}
\begin{array}{crc}
\displaystyle- \Box \tilde{\theta}^0 = 0,&\qquad& \displaystyle- \Box \tilde{\theta}^1 = 0, \\[.4cm]
\tilde{\theta}^0 |_{x^0=0} = \theta_0, &\qquad & \tilde{\theta}^1 |_{x^0=0} = 0, \\[.4cm]
\p_0 \tilde{\theta}^0 |_{x^0=0} = \theta_1& \qquad &\p_0 \tilde{\theta}^1 |_{x^0=0} = 2M \rho_0 \theta_1.
\end{array}
\een
and
\ben{LinApr2}
\begin{array}{c}
 \displaystyle- \Box \tilde{\theta}^{2} =  -6 \p_\mu\left(\varepsilon \p^\mu \tilde{\theta}^0 \right) -2\p^\mu\left( \p_\nu \tilde{\theta}^0 \p^\nu \tilde{\theta}^0 \p_\mu \tilde{\theta}^0 \right), \\[.4cm]
 \tilde{\theta}^{2} |_{x^0=0} = -2\rho_1 \theta_1 , \\[.4cm]
\p_0 \tilde{\theta}^{2} |_{x^0=0} = 2 M^2 \rho_0^2  \theta_1+ 2( \p_i \theta_0 \p^i \theta_0- \theta_1^2) \theta_1 +2\p_i  (\rho_1 \p_i \theta_0)+6 \varepsilon \theta_1.
\end{array}
\een
We shall show that $\tilde{\theta}^0 + \tilde{\theta}^1/M + \tilde{\theta}^2/M^2$ approximates $\theta$ to order $M^{-2}$ when averaged against an arbitrary compactly supported weight function $\eta$:

\begin{Theorem}\label{Prop3}
Let the initial data and $T>0$ be as in Proposition \ref{Prop1}. Then for any  weight function $\eta \in C^\infty_c(\mathcal{M}_T)$, we have that:
\[
\int_{\mathcal{M}_T} \left(\theta - \tilde{\theta}^0- \frac{1}{M}\tilde{\theta}^1- \frac{1}{M^2}\tilde{\theta}^{2}  \right) \eta = O(M^{-3}).
\]
\end{Theorem}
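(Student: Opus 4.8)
The plan is to analyse the UV solution through a two-timescale (WKB) expansion that separates the slowly-varying part of $\theta$ from the rapid oscillations induced by the heavy field, and then to show that, after integration against the smooth weight $\eta$, only the slow part survives to the order claimed. Concretely, I introduce the fast time $\tau = M x^0$ and treat $\theta$ and $\rho$ as functions of $\tau$ together with the slow coordinates $x^\mu$, so that $\p_0 \mapsto \p_0 + M\p_\tau$. The first observation is that the statement is really a statement about the slow part: any piece of $\theta$ oscillating in $\tau$ with mean zero, say of amplitude $O(M^{-2})$, contributes $O(M^{-3})$ to $\int_{\mathcal{M}_T}(\cdot)\eta$, because integrating by parts once in $x^0$ converts each factor $e^{ik\tau}=e^{ikMx^0}$ into a gain of $M^{-1}$ (a non-stationary phase estimate, using smoothness and compact support of $\eta$). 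Thus it suffices to identify the slow part of $\theta$ to $O(M^{-2})$ and to match it against $\Theta := \tilde\theta^0 + M^{-1}\tilde\theta^1 + M^{-2}\tilde\theta^2$.

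Second, I would solve the WKB hierarchy. Writing $\rho = M^{-1}\rho^{(1)} + M^{-2}\rho^{(2)} + \cdots$ and $\theta = \bar\theta + M^{-1}\theta^{(1)} + \cdots$ and demanding the absence of secular ($\tau$-unbounded) terms, \eqref{rhothetaeq1} forces $\rho^{(1)} = A\cos\tau + B\sin\tau$ with amplitudes that are independent of $x^0$ at leading order and fixed by the data as $A = M\rho_0$, $B = \rho_1$; in particular the oscillation energy $\tfrac12(A^2+B^2) = \varepsilon$ is conserved. The massless operator for $\theta$ has leading symbol $M^2\p_\tau^2$, whose bounded kernel is $\tau$-independent, so $\theta$ carries no free oscillation and its oscillatory part is purely forced, appearing first at order $M^{-2}$ and proportional to $\p_\tau\rho^{(1)}\,\p_0\bar\theta$. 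Imposing non-resonance order by order then yields an effective slow equation for $\bar\theta$ whose source is exactly $M^{-2}\big(-6\,\p_\mu(\varepsilon\,\p^\mu\bar\theta) - 2\,\p^\mu(\p_\nu\bar\theta\,\p^\nu\bar\theta\,\p_\mu\bar\theta)\big)$ up to $O(M^{-3})$: the cubic term is the standard EFT correction (it is $2\p_\mu(\mathcal{F}_2 M^{-2}\p^\mu\bar\theta)$ from \eqref{EFT_exp}), while the $\varepsilon$-term arises as the nonzero $\tau$-average of $\p^\mu\rho\,\p_\mu\theta_{\mathrm{osc}}$, i.e. the resonant back-reaction of the heavy oscillation on $\theta$ through the oscillation it induces in $\theta$. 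Matching the equations at orders $M^0, M^{-1}, M^{-2}$ to \eqref{LinApr1}, \eqref{LinApr2} identifies $\bar\theta = \Theta + O(M^{-3})$; in particular the data $\p_0\tilde\theta^1|_{0} = 2M\rho_0\theta_1$ and the data for $\tilde\theta^2$ are precisely the ``initial-layer'' corrections needed to match the data of the slow part $\bar\theta$, since $\p_0\theta_{\mathrm{osc}}|_{0}$ is $O(M^{-1})$ and shifts $\p_0\bar\theta|_0$ away from $\theta_1$.

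To turn this formal scheme into a proof I would build an approximate solution $(\rho_{\mathrm{app}},\theta_{\mathrm{app}})$ from the expansion, carried to high enough order that it solves \eqref{rhothetaeq1}, \eqref{rhothetaeq2} with residual $O_\infty(M^{-K})$ for $K$ large, with $\theta_{\mathrm{app}} = \theta_{\mathrm{app,slow}} + \theta_{\mathrm{app,osc}}$ and $\theta_{\mathrm{app,slow}} = \Theta + O(M^{-3})$. An energy estimate for the coupled linearisation about this approximate solution (using the conserved-energy structure, and the bounds of Proposition \ref{Prop1} to control coefficients) then gives $\theta - \theta_{\mathrm{app}} = O(M^{-3})$ in $L^2(\mathcal{M}_T)$. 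Splitting
\[
\int_{\mathcal{M}_T}(\theta - \Theta)\eta = \int_{\mathcal{M}_T}(\theta-\theta_{\mathrm{app}})\eta + \int_{\mathcal{M}_T}(\theta_{\mathrm{app,slow}} - \Theta)\eta + \int_{\mathcal{M}_T}\theta_{\mathrm{app,osc}}\,\eta,
\]
the first term is $O(M^{-3})$ by Cauchy--Schwarz, the second by the matching above, and the third by the non-stationary phase estimate of the first paragraph.

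The hard part will be two-fold. First, the algebra of the averaging step: extracting the $\varepsilon$-source requires carrying the $\rho$- and $\theta$-oscillations one order beyond the naive one, since the leading candidate averages $\langle \p_\tau\rho^{(1)}\,\rho^{(1)}\rangle$ vanish, and the surviving contribution comes from the frequency-one part of the third-order oscillation of $\theta$ paired with $\p_\tau\rho^{(1)}$; keeping track of which products have nonzero $\tau$-mean, and checking the coefficient is exactly $6\varepsilon$, is delicate. Second, the quantitative error control in this singular limit: because $\p_0^k\rho \sim M^{k-1}$, the residual of the truncated expansion and the coefficients in the energy estimate carry compensating powers of $M$, so one must expand to sufficiently high order and arrange the energy estimate so that the final bound still beats $M^{-3}$ uniformly in $M$; the coupling of the $\theta$-equation to the large quantities $\p_0\rho = O(1)$ and $\p_0^2\rho = O(M)$ is the principal source of difficulty here.
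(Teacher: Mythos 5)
Your proposal is sound in outline but takes a genuinely different route from the paper. You resolve the fast oscillations explicitly: a two--timescale ansatz, a high-order approximate solution, a uniform-in-$M$ stability estimate about that oscillatory background, and a non-stationary-phase argument to discard the oscillatory remainder after pairing with $\eta$. The paper never constructs the oscillatory profile at all: it works weakly from the outset, pairing $\theta$ with the solution $\chi$ of the backward problem \eqref{chidef} via the identity \eqref{WEcomp}, and then repeatedly integrating by parts using the exact equations \eqref{rhothetaeq1}, \eqref{rhothetaeq2}; each application of the rearranged $\rho$-equation $M^2(\rho+\rho^2)=\Box\rho+\cdots$ trades a bulk term for initial-data terms plus terms smaller by $M^{-2}$, and the only structural input beyond the bounds of Proposition \ref{Prop1} is Lemma \ref{EngLem}, the pointwise near-conservation $(\p_0\rho)^2+M^2\rho^2=2\varepsilon+O(M^{-1})$, which is precisely the counterpart of your statement that the envelope energy $\tfrac12(A^2+B^2)=\varepsilon$ is frozen at leading order. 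Your checkable claims are correct: $A=M\rho_0$, $B=\rho_1$; the forced oscillation $\theta_{\mathrm{osc}}=\tfrac{2}{M^2}\p_\tau\rho^{(1)}\p_0\bar\theta+O(M^{-3})$, whose trace and velocity at $x^0=0$ reproduce exactly the data corrections $-2\rho_1\theta_1$ and $2M\rho_0\theta_1$ in \eqref{LinApr1}, \eqref{LinApr2}; and the vanishing of $\langle\rho^{(1)}\p_\tau\rho^{(1)}\rangle$ and $\langle\p_\tau\rho^{(1)}\p_\tau^2\rho^{(1)}\rangle$, which is why the naive order contributes nothing. Two cautions, both inside the steps you flag as hard. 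First, the $\varepsilon$-source does not arise only from pairings of oscillations: the $\tau$-average of $\rho$ itself is shifted, $M^2\langle\rho\rangle=\mathcal{F}_2[\bar\theta]-2\varepsilon+O(M^{-1})$ (from $\langle(\p_0\rho)^2\rangle$ and $M^2\langle\rho^2\rangle$ in the averaged $\rho$-equation), and this channel contributes to the coefficient $6$; your bookkeeping must include it or the coefficient will come out wrong. Second, those two flagged steps (the hierarchy bookkeeping and the uniform-in-$M$ stability estimate with $\p_0^2\rho=O(M)$ in the background) carry essentially all the content and are not executed, whereas the paper's scheme dispatches both at once: the coefficient $6$ and the splitting into $\tilde{\theta}^{2,1}$, $\tilde{\theta}^{2,2}$ drop out of the integrations by parts in \S\ref{AvgProof}, and no stability estimate about an oscillatory background is ever needed. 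What your approach would buy, if completed, is strictly more than the theorem claims --- a pointwise description of $\theta$ including its oscillatory part, and in principle averaged expansions to all orders --- while the paper's duality argument is far shorter but yields only the weighted-average statement.
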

Rather than constructing our approximation by iterated linear equations we can, if we prefer, solve a single nonlinear equation. Note that if $M$ is sufficiently large, there will exist a solution  $\hat{\theta}^2:\mathcal{M}_T\to \R$  to the nonlinear wave equation:
\ben{NLinApr}
\begin{array}{c}
 \displaystyle- \Box \hat{\theta}^2 =  -\frac{6}{M^2} \p_\mu\left(\varepsilon \p^\mu \hat{\theta}^2 \right) -\frac{2}{M^2}\p^\mu\left( \p_\nu \hat{\theta}^2 \p^\nu \hat{\theta}^2 \p_\mu \hat{\theta}^2 \right), \\[.4cm]
 \hat{\theta}^2 |_{x^0=0} = \theta_0 - \frac{2}{M^2} \rho_1 \theta_1, \\[.4cm]
\p_0 \hat{\theta}^2 |_{x^0=0} = \theta_1\left( 1+ 2 \rho_0  +2 \rho_0^2+ \frac{2}{M^2} \left( \p_i \theta_0 \p^i \theta_0- \theta_1^2\right)  +\frac{6}{M^2} \varepsilon \right)+\frac{2}{M^2}\p_i  (\rho_1 \p_i \theta_0).
\end{array}
\een
Then one can show that
\[
\int_{\mathcal{M}_T} \left(\hat{\theta}^2 - \tilde{\theta}^0- \frac{1}{M}\tilde{\theta}^1- \frac{1}{M^2}\tilde{\theta}^{2}  \right) \eta = O(M^{-3}),
\]
so that $\hat{\theta}^2$ also approximates $\theta$ to $O(M^{-3})$ in the sense of averages. 

Comparing the above equation with the defining equation of an EFT solution of order $2$, we see that they differ via the term involving $\varepsilon$. The initial data is also modified via the terms involving $\rho_0$ and $\rho_1$. However, these differences become subleading when $\rho_1 = O_\infty (M^{-2})$ and $\rho_0 = \frac{1}{M^2} \left( \theta_1^2-\p_i \theta_0 \p^i \theta_0\right)+O_\infty(M^{-3})$, in particular they are subleading under the conditions of Lemma \ref{ICLem} (with $l=2$) when the standard EFT approach without averaging should be valid. Thus Theorem \ref{Prop3} is consistent with the EFT results above. 


While in this setting we have only constructed an approximation to $\theta$ to $O(M^{-2})$, we see in principle no obstacle to continuing to higher orders in the asymptotic expansion, however even to find the terms of $O(M^{-2})$ involves some complicated manipulations, and this is compounded as we move to higher order.

\subsection*{Layout of the paper} 
In \S \ref{BdedEFT} we establish the results above which connect the assumptions of uniform boundedness to the validity of the EFT expansion, and to the existence and uniqueness of an EFT solution. In \S\ref{IVP} we establish the existence of a large class of solutions which satisfy the necessary boundedness assumptions for the EFT expansion to be valid, and we show that under these conditions the EFT solution approximates the UV solution to the expected degree of accuracy. In \S\ref{LTS} we consider the existence and accuracy of EFTs over longer timescales. In \S\ref{Average} we discuss the process of establishing estimates for the averaged solution and show that in general a modified EFT equation is necessary to describe the UV solution, if we do not assume our initial data for $\rho$ is carefully prepared. Finally in \S\ref{Numerics} we present some numerics that confirm our previous analytical results.

\section{Boundedness and the EFT expansion} \label{BdedEFT}

\subsection{Proof of Theorem \ref{thm:EFTexp}}

We recall
\begin{T1}
Let $(\rho,\theta)$ be a smooth solution of the equations of motion in a fixed open subset $\mathcal{R}$ of spacetime, satisfying 
\[
\theta,  \ \p_0\theta, \   M \rho, \   \p_0\rho = O_{\infty}(1).
\]
Fix a non-negative integer $l$. Then for all $0\leq k \leq l+2$ and $0\leq j \leq l$ we have
\ben{Bded}
\p_0^k \theta , \    M\p_0^j \rho, \   \p_0^{l+1} \rho = O_{\infty}(1)
\een
if and only if
\ben{rho_exp_new}
\p_0^j \rho= \p_0^j \left( \frac{\mathcal{F}_2[\theta]}{M^2}+ \frac{\mathcal{F}_4[\theta]}{M^4} + \ldots + \frac{\mathcal{F}_{2 \lfloor\frac{\ell-j}{2} \rfloor}[\theta]}{M^{2 \lfloor\frac{\ell-j}{2} \rfloor}} \right ) +O_{\infty}({M^{-\ell-1+j}})
\een
holds for $j\leq \ell+1\leq l +1$, where $\mathcal{F}_{2j}$ are polynomials in $\theta$ and its derivatives up to order $2j-1$ which can be computed iteratively.
\end{T1}
\begin{proof}
We first show that \eq{Bded} implies \eq{rho_exp_new}. The proof is a double induction, first on $l \in \N_0$ and then on $j\leq l$. To start the induction on $l$ we note that the case $l=0$ is trivially true. Now suppose the result holds for $l=L\geq 0$, and that $\rho, \theta$ satisfy \eq{Bded}  for $l=L+1$. We need to establish that \eq{rho_exp_new} holds for $\ell = L+1$, since the case $\ell \leq L$ follows from the induction assumption.

The assumption that \eq{Bded} holds for $l=L+1$ immediately implies that \eq{rho_exp_new} holds for $\ell=L+1$ and $j=L+1, L+2$. We now differentiate \eq{rho_eq_rearranged} $J$ times with respect to $x^0$ to obtain:
\ben{dJrho}
\p_0^{J} \rho = -\frac{\p_0^{J}(\partial_\mu \theta \partial^\mu \theta)}{M^2}  +\frac{\Box \p_0^{J} \rho}{M^2}+ \frac{\p_0^{J}(\partial_\mu \rho \partial^\mu \rho)}{M^2}  -\p_0^{J}(\rho^2 W(\rho))
\een
For $J=L$, we see that the first, second and third terms are $O_{\infty}(M^{-2})$ from the bounds we have assumed on $\theta$ and $\rho$ ($\theta$ is differentiated at most $L+1$ times and $\rho$ at most $L+2$ times with respect to $x^0$). Expanding out the final product by Leibniz rule, and using the fact that $\p_0^j \rho = O_{\infty}(M^{-1})$ for $j \leq L+1$ we see that $\p_0^{L} \rho = O_{\infty}(M^{-2})$, which establishes  \eq{rho_exp_new} for $\ell=L+1$ and $j=L$.

Now, let $J < L$ and suppose that we have established \eq{rho_exp_new} for $\ell=L+1$ and $J+1\leq j \leq L+1$. We now wish to use \eq{rho_exp_new} to replace $\rho$ on the right-hand side of \eq{dJrho}. In the second term on the right hand side, we have terms involving spatial derivatives of $\p_0^J\rho$, which we can replace with an expansion in $\theta$ by using \eq{rho_exp_new} with $\ell=L$ and $j=J$, with an error which is $O_{\infty}(M^{-L-3+J})$. We also have a term involving $\p_0^{J+2}\rho$, which we can replace with an expansion in $\theta$ by using \eq{rho_exp_new} with $\ell=L+1$ and $j=J+2$ to give an error which is $O_{\infty}(M^{-L-2+J})$. 

Expanding the third term in \eq{dJrho}, we will have a term of the form $M^{-2}\p_0^{J+1}\rho \p_0\rho$, which we can replace with an expansion in $\theta$ by using \eq{rho_exp_new} with $\ell=L+1$ and $j=J+1$ to give an error which is $O_{\infty}(M^{-L-3+J})$. We also have terms of the form $M^{-2}\p_0^{K}\rho \p_0^M\rho$ for $K\leq M \leq J$, which we can replace using  \eq{rho_exp_new} with $\ell=L$ and $j=K, M$ to give an error which is $O_{\infty}(M^{-L-3+M}) = O_{\infty}(M^{-L-3+J})$. Similarly expanding the derivatives in the final term of \eq{rho_exp_new}, we can use  \eq{rho_exp_new} with $\ell=L$ to replace all of the terms that appear with an expansion in $\theta$ with error $O_{\infty}(M^{-L-3+J})$. Thus, we have established an expansion of the form we require exists for $\ell=L+1, j=J$. By examining the substitutions that we make, we see that each term in the expansion for $\ell=L+1, j=J$ must be the $x^0$ derivative of the corresponding term in the expansion for $\ell=L, j=J-1$, so our result holds for $\ell = L+1, j=J$. We can thus work backwards to $j=1$ by induction. At $j=0$ we again find an expansion, but a new term may be generated. This new term must be a polynomial in the lower order terms, and their derivatives up to order $2$, which recovers the induction assumption that $\mathcal{F}_{2k}$ is a polynomial in $\theta$ and its derivatives up to order $2k-1$. This closes the induction on $l$.

Now assume that \eq{rho_exp_new} holds. This immediately implies that the bounds on $\rho$ in \eq{Bded} must hold by taking $\ell = j$ for $j = 0, \ldots, l$ and also $j=l+1$, $\ell = l$. To obtain the bounds on $\theta$, we recall that
\[
\p_0^2\theta = - \p_i \p_i \theta + 2 \p^\mu \rho \p_\mu \theta.
\]
Since we assume $\theta, \p_0 \theta$ are $O_\infty(1)$, we deduce $\p_0^2\theta$ is $O_\infty(1)$. Differentiating the equation with respect to $\p_0$, we can inductively recover the bounds on $\p_0^j \theta$.
\end{proof}
We note that the condition on $\theta$ in \eq{Bded} actually follows from the conditions on $\rho$, together with the assumptions on $\theta, \p_0 \theta$ but for simplicity we include it in the hypotheses.

\subsection{Proof of Theorem \ref{thm:EFT_sol}}\label{pf:EFT_sol}

To prove Theorem \ref{thm:EFT_sol}, we require:
\begin{Lemma} \label{EFTEnEst}
Let $\psi,g$ be smooth functions of $x^\mu \in \mathcal{M}_T$ and $M$ satisfying $-\Box \psi + m^2 \psi = g$ (with $m,T$ independent of $M$) with $\psi|_{x^0}=\psi_0$ and $\partial_0 \psi |_{x^0}=\psi_1$. Assume that, as $M \rightarrow \infty$, $\p_0^j g= O_{\infty}(M^{-l})$ for all $0\leq j \leq p$ and that $\psi_0,\psi_1$ and all of their derivatives, are also $O_{\infty}(M^{-l})$. Then $\p_0^j \psi = O_{\infty}(M^{-l})$ for all $0\leq j \leq p+1$. In particular if $\p_0^j g= O_{\infty}(M^{-l})$ for all $j\geq 0$, then all derivatives of $\psi$ are $O(M^{-l})$.
\end{Lemma}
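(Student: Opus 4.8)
The plan is to prove this by the classical energy estimate for the linear wave equation, upgraded to control all spatial derivatives by commuting the constant-coefficient operator $-\Box+m^2$ with spatial derivatives, and then to reach higher time derivatives algebraically using the equation itself. Since the spacetime $\mathcal{M}_T$, the operator $\Box$, the mass $m$ and the time $T$ are all independent of $M$, every constant produced along the way will be $M$-independent, which is exactly what makes the resulting bounds uniform. Rewriting $-\Box = \p_0^2 - \Delta$ with $\Delta$ the spatial Laplacian on $\T^n$, the equation reads $\p_0^2 \psi - \Delta \psi + m^2 \psi = g$. For a spatial multi-index $\alpha$ I set $\psi_\alpha = \p^\alpha \psi$; because $\p^\alpha$ commutes with $\Box$ and with multiplication by the constant $m^2$, each $\psi_\alpha$ solves the same equation with source $\p^\alpha g$ and data $\p^\alpha \psi_0, \p^\alpha \psi_1$.

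First I would work, for each fixed integer $s$, with the higher-order energy
\[
\mathcal{E}_s(t) = \frac{1}{2} \sum_{|\alpha|\le s} \int_{\{x^0=t\}} \left[ (\p_0 \psi_\alpha)^2 + \abs{\nabla \psi_\alpha}^2 + m^2 \psi_\alpha^2 \right].
\]
Differentiating in $t$, integrating by parts on $\T^n$ (no boundary terms arise), and substituting the equation gives $\frac{d}{dt}\mathcal{E}_s = \sum_{|\alpha|\le s} \int \p_0 \psi_\alpha \, \p^\alpha g$, so by Cauchy--Schwarz $\frac{d}{dt}\mathcal{E}_s \le C\, \mathcal{E}_s^{1/2} \norm{g}{H^s}$ and hence $\mathcal{E}_s(t)^{1/2} \le \mathcal{E}_s(0)^{1/2} + C\int_0^t \norm{g}{H^s}\, ds'$ on $[0,T]$. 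Since $\psi_0,\psi_1$ and all their spatial derivatives are $O_\infty(M^{-l})$ we have $\mathcal{E}_s(0)^{1/2} \le C_s M^{-l}$, and since $g = O_\infty(M^{-l})$ (the case $j=0$) we have $\norm{g}{H^s} \le C_s M^{-l}$ uniformly on the $M$-independent interval $[0,T]$; this yields $\mathcal{E}_s(t)^{1/2} \le C_s M^{-l}$, i.e. $L^2$ bounds of size $M^{-l}$ on $\psi$, $\p_0\psi$ and their spatial derivatives up to order $s$. (When $m=0$ the energy does not see $\norm{\psi}{L^2}$ directly, but this is recovered from $\norm{\psi(t)}{L^2} \le \norm{\psi_0}{L^2} + \int_0^t \norm{\p_0\psi}{L^2}$.) Taking $s > n/2$ and invoking the Sobolev embedding $H^s(\T^n) \hookrightarrow L^\infty(\T^n)$, with its $M$-independent constant, converts these into sup bounds; as $s$ is arbitrary this is precisely the statement $\psi, \p_0\psi = O_\infty(M^{-l})$.

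Finally I would reach the higher time derivatives algebraically. The equation gives $\p_0^{j+2}\psi = \Delta \p_0^j \psi - m^2 \p_0^j \psi + \p_0^j g$, and I induct on $j$: the base cases $j=0,1$ are the previous paragraph; if $\p_0^j \psi = O_\infty(M^{-l})$ then $\Delta \p_0^j \psi$ and $m^2 \p_0^j \psi$ are $O_\infty(M^{-l})$ (spatial derivatives are free inside $O_\infty$), while $\p_0^j g = O_\infty(M^{-l})$ is available provided $j \le p$, so $\p_0^{j+2}\psi = O_\infty(M^{-l})$. As the inductive step advances $j$ by two from the two base cases, this propagates through all $0 \le j \le p+1$ (indeed through $p+2$). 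The closing ``in particular'' is the case $p=\infty$: the induction then never terminates, so every spatial derivative of every time derivative of $\psi$, that is every derivative of $\psi$, is $O(M^{-l})$.

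I do not expect a genuine obstacle here, since the core is the standard energy inequality for the wave equation; the content is entirely in its uniformity. The two points requiring care are (i) checking that each constant is truly independent of $M$ — which holds because $\Box$, $m$, $T$ and $\T^n$ are all fixed — and (ii) the harmless degeneracy of the energy when $m=0$, handled by the elementary $L^2$ estimate noted above. The step most worth spelling out is the combination of the commutation argument with Sobolev embedding, as it is what upgrades a single energy identity to the full $O_\infty$ conclusion.
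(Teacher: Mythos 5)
Your proof is correct, and it differs from the paper's in an instructive way. For the base case (control of $\psi$ and $\p_0\psi$) you and the paper do essentially the same thing, except that the paper simply cites its standard energy estimate (Lemma \ref{EnEst}) together with Morrey's inequality \eq{Morrey}, whereas you re-derive that estimate by commuting spatial derivatives through $-\Box+m^2$ and invoking Sobolev embedding; this is fine, but it duplicates work the paper has already packaged into Lemma \ref{EnEst}. The genuine difference is in how the higher time derivatives are reached. The paper inducts on $p$: it differentiates the equation in time, checks that $\psi'=\p_0\psi$ has Cauchy data of size $O_{\infty}(M^{-l})$ (the second data function being $\p_0^2\psi|_{x^0=0}$, computed algebraically from the equation on the initial slice), and then re-applies the lemma --- hence the energy estimate --- to $\psi'$ at every level of the induction. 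You instead use the equation purely algebraically in the bulk, $\p_0^{j+2}\psi = \Delta\p_0^j\psi - m^2\p_0^j\psi + \p_0^j g$, stepping up two time derivatives at a time with no further PDE input beyond the single base-case estimate. Your route works precisely because the $O_{\infty}$ notation is stable under spatial differentiation, so $\Delta\p_0^j\psi$ comes for free; the paper's route would survive even in a setting where spatial derivatives were not free, at the cost of invoking the energy estimate once per time derivative. Your bootstrap also yields the marginally stronger range $j \leq p+2$ (the paper's induction naturally stops at $p+1$), and it is in fact the same algebraic trick the paper itself uses to recover the bounds on $\p_0^j\theta$ at the end of the proof of Theorem \ref{thm:EFTexp}.
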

\begin{proof} 
The case $j=0$ follows from Lemma \ref{EnEst}, together with Morrey's inequality \eq{Morrey} below. Now suppose the result holds for $p=P$. Differentiating the equation, we have:
\[
-\Box \psi' + m^2  \psi' = \p_0 g,
\]
where $\psi' = \p_0 \psi$. Clearly $ \psi' |_{x^0} = \partial_0 \psi |_{x^0}=\psi_1 = O_{\infty}(M^{-l})$. We can rearrange the equation to give $\p_0 \p_0 \psi = \p_i \p_i \psi - m^2 \psi - g$, so that $\p_0 \psi' |_{x^0} = \partial_0 \p_0 \psi |_{x^0}=\p_i \p_i \psi_0 - m^2 \psi_0 - g|_{\{x^0=0\}} = O_{\infty}(M^{-l})$. Thus we can apply the result for $p=P$ to $\psi'=\p_0 \psi$, which shows that the result must hold for $p=P+1$.
 \end{proof}

To prove the ``existence" part of Theorem \ref{thm:EFT_sol} we define functions $\theta^{(k)}$, $0 \le k \le l$, on $\mathcal{M}_T$ as follows. Substitute
\ben{thetaexp}
 \theta = \sum_{k=0}^l \frac{\theta^{(k)}}{M^k}
\een
into \eqref{genEFT} and formally equate coefficients of $M^{-1}$ up to $M^{-l}$. At leading order we obtain
\ben{theta0eq}
 -\Box \theta^{(0)} + m^2 \theta^{(0)} = 0
\een
and at order $M^{-k}$ ($k \le l$) we obtain
\ben{thetakeq}
 -\Box \theta^{(k)} + m^2 \theta^{(k)} = P_k
\een
where $P_k$ is a polynomial in $\theta^{(0)}, \ldots, \theta^{(k-1)}$ and their derivatives, with coefficients independent of $M$. We now define $\theta^{(0)}$ as the solution of the Klein-Gordon (or wave) equation \eqref{theta0eq} on $\mathcal{M}_T$ with initial data
\be
 \theta^{(0)} |_{x^0=0} = \theta_0 \qquad \qquad \partial_0 \theta^{(0)} |_{x^0=0} = \theta_1
 \ee
Since all derivatives of $\theta_0$ and $\theta_1$ are uniformly bounded as $M \rightarrow \infty$, Lemma \ref{EFTEnEst} with $l=0$ shows that all derivatives of $\theta^{(0)}$ are uniformly bounded on $\mathcal{M}_T$ as $M \rightarrow \infty$ (with fixed $T$). 

Now define $\theta^{(k)}$ for $1 \le k \le l$ as the solution of the inhomogeneous Klein-Gordon (or wave) equation \eqref{thetakeq} on $\mathcal{M}_T$ with trivial initial data
\be
 \theta^{(k)} |_{x^0=0} = \partial_0 \theta^{(k)} |_{x^0=0} = 0
 \ee
Since $P_1$ is a polynomial in $\theta^{(0)}$ and its derivatives, it is uniformly bounded as $M \rightarrow \infty$. So we can again use standard energy estimates to deduce that $\theta^{(1)}$ and its derivatives are uniformly bounded as $M \rightarrow \infty$. Proceeding inductively we deduce that $\theta^{(k)}$ and its derivatives are uniformly bounded as $M \rightarrow \infty$. 

We now define $\theta$ by \eqref{thetaexp}. Clearly $\theta$ and its derivatives are uniformly bounded on $\mathcal{M}_T$ as $M \rightarrow \infty$, and $\theta$ satisfies the desired initial conditions. 

Finally we use equation \eqref{genEFT} to define $R_l$. Using the Klein-Gordon equations for $\theta^{(k)}$ this gives
\be
 \frac{R_l}{M^{l+1}} = \sum_{k=1}^l \frac{(P_k - S_k(\theta,\partial\theta,\ldots))}{M^k}
\ee 
The RHS is a polynomial in $M^{-1}$ with coefficients that are polynomials in the $\theta^{(k)}$ and their derivatives. From the definition of the $P_k$, the coefficients of $M^{-1}, M^{-2}, \ldots, M^{-l}$ in this polynomial must vanish, so the first term with non-vanishing coefficient is $M^{-l-1}$. Hence $R_l$ is a polynomial in $M^{-1}$ with coefficients that are polynomials in the $\theta^{(k)}$ and their derivatives. This implies that $R_l$ and all of its derivatives are uniformly bounded as $M \rightarrow \infty$. Hence $\theta$ is an EFT solution to order $l$. 

Now we prove the ``uniqueness" part of the theorem.  Let $\theta$ and $\tilde{\theta}$ be two EFT solutions to order $l$, with corresponding functions $R_l$ and $\tilde{R}_{l}$. Define $\delta \theta = \tilde{\theta} - \theta$. We then have
\ben{deltathetaeq}
  -\Box \delta\theta + m^2\delta  \theta = \sum_{k=1}^l \frac{[S_k]}{M^k} + \frac{[R_l]}{M^{l+1}}
\een
where
\be
[S_k] = S_k(\tilde{\theta},\partial\tilde{\theta},\ldots)-S_k(\theta,\partial\theta,\ldots) \qquad [R_l] = \tilde{R}_l - R_l
\ee
Since $\theta$ and $\tilde{\theta}$ satisfy the same initial conditions we have
\be
 \delta \theta |_{x^0=0} = \partial_0 \delta\theta |_{x^0=0} = 0
\ee
$[S_k]$ is a polynomial in $\theta$, $\tilde{\theta}$ and their derivatives, with coefficients independent of $M$. 
Uniform boundedness of $\theta$, $\tilde{\theta}$ and their derivatives implies that $\p_0^j [S_k] = O_{\infty}(1)$  for all $j \geq 0$. Hence the RHS of \eqref{deltathetaeq}, and all of its derivatives, are $O(M^{-1})$. Furthermore our initial data are trivially $O_{\infty}(M^{-1})$. Hence we can apply the Lemma to deduce that $\p_0^j \delta \theta= O_{\infty}(M^{-1})$ for all $j \geq 0$. 

If $l=0$ then we are done. For $l \ge 1$ we now consider $\partial_0^q [ S_k]$. Writing $\tilde \theta = \theta + \delta \theta$, this is a polynomial in $\theta,\delta \theta$ and their derivatives, with coefficients independent of $M$. Furthermore, each term in the polynomial contains at least one factor of $\delta \theta$ or a derivative of $\delta \theta$. Each such term is $O_{
\infty}(M^{-1})$ using the result just established, together with the boundedness of $\theta$ and its derivatives. Hence  $\partial_0^q [ S_k]=O_{\infty}(M^{-1})$. This implies that the RHS of \eqref{deltathetaeq}, and all of its derivatives, are actually $O(M^{-2})$. So our Lemma lets us improve our estimate to deduce that $\delta \theta$ and all of its derivatives are $O(M^{-2})$. 

If $l=1$ then we are done. If $l>1$ then we can continue: using $\partial_0^r \delta \theta = O_{\infty}(M^{-2})$ and our boundedness assumptions gives $\partial_0^q [S_k]=O_{\infty}(M^{-2})$. Hence the RHS of \eqref{deltathetaeq} is $O(M^{-3})$ so our Lemma implies that $\delta \theta$ and all of its derivatives are $O(M^{-3})$. We  can then repeat this process, at each stage improving our estimate of the RHS of \eqref{deltathetaeq} (and its derivatives), until we reach $O(M^{-l-1})$ at which point we cannot do any better because of the final term in \eqref{deltathetaeq}. Hence $\delta \theta$ and all of its derivatives are $O(M^{-l-1})$.

Our existence and uniqueness results above constitute two of the usual three requirements of well-posedness for the Cauchy problem for \eq{genEFT}, suitably modified to account for the fact that our equation only holds to a certain order in $1/M$. Typically one would also ask that the solution depends continuously on initial data. To establish such a result in our problem, we need to specify a topology on the space of EFT solutions. We specialise to the case where $\theta_0, \theta_1$ are independent of $M$ (hitherto we have only required boundedness in $M$). Under this assumption, we observe that our uniqueness result above, together with the proof of the existence result, implies that any EFT solution to \eq{genEFT} must take the form
\[
 \theta = \sum_{k=0}^l \frac{\theta^{(k)}}{M^k} + \frac{R_{l}}{M^{l+1}},
\]
with $R_l$ and all its derivatives uniformly bounded, and $\theta^{(k)}$ independent of $M$ and uniquely determined by solving \eq{thetakeq}. We can define a topology on the EFT solutions by identifying $\theta$ with the vector $(\theta^{(k)})_{k=0}^l$, thought of as an element of $C^\infty(\mathcal{M}_T)$ with its usual topology. Assuming we consider our initial data $\theta_0, \theta_1$ to belong to $C^\infty(\T^n)$ with its usual topology, then it is straightforward to see that $\theta$ depends continuously on the initial data.

\subsection{Causality}

Now we discuss briefly the topic of causality in EFT solutions. Specifically, for our EFT for $\theta$, does the effective metric $(G^{-1})_{\mu\nu}$ of \eqref{nonlinear_wave} provide a better description of causality than the Minkowski metric $g_{\mu\nu}$? 

Consider a solution $\theta$ of \eqref{nonlinear_wave} and deform its initial data $\theta_0,\theta_1$ in some compact subset $\Omega$ of $\{x^0=0\}$ to obtain new initial data $\theta_0',\theta_1'$ that agree with the original data outside $\Omega$. Let $\theta'$ be the corresponding solution of \eqref{nonlinear_wave}. Then a standard domain of dependence argument implies that $\theta'-\theta$ vanishes outside the hypersurface ${\mathcal N}_G$ that is null w.r.t. $(G^{-1})_{\mu\nu}$ and emanates outwards orthogonally (w.r.t.$(G^{-1})_{\mu\nu}$) from $\partial \Omega$.\footnote{More precisely, ${\mathcal N}_G$ is the Cauchy horizon, defined w.r.t. $(G^{-1})_{\mu\nu}$, of the complement of $\Omega$.} 

Now let $\tilde{\theta}$ be any other EFT solution of order $3$, with initial data $\theta_0,\theta_1$ and $\tilde{\theta}'$ any other EFT solution of order $3$ with the deformed initial data $\theta_0',\theta_1'$. From Theorem \eqref{thm:EFT_sol} we know that $\tilde{\theta} - \theta$ and $\tilde{\theta}' - \theta'$, and their derivatives, are $O(M^{-4})$ everywhere. But we can write $\tilde{\theta}' - \tilde{\theta} = (\tilde{\theta}' - \theta')+ (\theta' - \theta) + (\theta - \tilde{\theta}) = \theta'-\theta + O(M^{-4})$ and hence $\tilde{\theta}' - \tilde{\theta}$ is $O(M^{-4})$ outside ${\mathcal N}_G$ (and so are its derivatives). 

Causal properties of equation \eqref{nonlinear_wave} have let us understand causal properties of any EFT solution. But we can deduce exactly the same result without using \eqref{nonlinear_wave}. For the above initial data, let $\theta,\theta'$ be the EFT solutions  of order $l$ constructed using the perturbative method used in the proof of Theorem \ref{thm:EFT_sol} (i.e. expanding in powers of $M^{-2}$). In this case, at each step of the iteration one solves a standard wave equation with principal symbol $g^{\mu\nu}$ and so $\theta,\theta'$ agree outside ${\mathcal N}_g$, the hypersurface that is null w.r.t. $g_{\mu\nu}$ emanating outwards orthogonally from $\partial \Omega$. This lies outside ${\mathcal N}_G$. However, in the region between ${\mathcal N}_G$ and ${\mathcal N}_g$, $\theta'-\theta$ is negligibly small. To see this, fix $C,\alpha>0$ and define ${\mathcal V}(M) = \{ (x^0,x^i): \exists (x^0,y^i) \in {\mathcal N}_g \,\, s.t. \,,\, ||x^i-y^i||<CM^{-\alpha} \}$, i.e., a region within distance $CM^{-\alpha}$ of ${\mathcal N}_g$. Since the speed of light w.r.t. $(G^{-1})_{\mu\nu}$ differs from $1$ by $O(M^{-2})$ we expect ${\mathcal N}_G$ to lie in the region ${\mathcal V}(M)$ for some $C$ if $\alpha< 2$. Now, given a point $(x^0,x^i) \in {\mathcal V}(M)$ we can Taylor expand $\theta'-\theta$ to order $k$ around a point $(x^0,y^i)$ of ${\mathcal N}_g$. On ${\mathcal N}_g$, all derivatives of $\theta'-\theta$ vanish and so only the ``error" term contributes to the Taylor expansion. Boundedness of the derivatives of our EFT solution implies that this error term is $O(M^{-(k+1)\alpha})$. Since $k$ is arbitrary, this implies that $\theta'-\theta$ vanishes faster than any power of $M^{-1}$ in ${\mathcal V}(M)$. Theorem \ref{thm:EFT_sol} now implies that any other EFT solution of order $l$ is $O(M^{-l-1})$ in ${\mathcal V}(M)$ (and outside ${\mathcal N}_g$). In particular any EFT solution of order $3$ is $O(M^{-4})$ outside ${\mathcal N}_G$, as deduced from \eqref{nonlinear_wave}. 

We could consider a different EFT in which we reverse the sign of ${\mathcal F}_2$, with a corresponding reversal of sign in the second term of $G^{\mu\nu}$ in \eqref{nonlinear_wave}. In this case, the null cone of $(G^{-1})_{\mu\nu}$ lies {\it outside}, or on, the null cone of $g_{\mu\nu}$ so in this case \eqref{nonlinear_wave} permits superluminal propagation  \cite{Adams:2006sv}, i.e., ${\mathcal N}_G$ lies outside ${\mathcal N}_g$. But arguing as above with the roles of $g_{\mu\nu}$ and $(G^{-1})_{\mu\nu}$ interchanged one can see that \eqref{nonlinear_wave} implies that $\theta'-\theta$ is negligibly small in the region between ${\mathcal N}_g$ and ${\mathcal N}_G$. Thus the superluminality is not observable within the framework of EFT solutions as we have defined them.

The above discussion refers to the solution defined on a time interval $T$ that does not depend on $M$. If we consider a long time interval $T \propto M^2$ then the difference between ${\mathcal N}_g$ and ${\mathcal N}_G$ would be $O(1)$ as $M \rightarrow \infty$ and one might expect \eqref{nonlinear_wave} to provide the better description of causality. However, we will see in section \ref{LTS} that we have no reason to expect a solution of \eqref{nonlinear_wave} to approximate the UV solution over such a long time interval. (We will also see that there is no reason to expect convergence of the perturbative approach when $T \propto M^2$.)

\section{The initial value problem} \label{IVP}

\subsection{Preliminaries} \label{normdef}We first introduce some notation. We say that a multi-index $A = (A_1, \ldots, A_n) \in \N^{n}_0$ is an $n$-tuple of non-negative integers and we let $\abs{A} = A_1 + \cdots + A_n$. We can compactly write higher-order spatial derivatives as:
\[
D^A := (\p_1)^{A_1} \cdots (\p_n)^{A_n}.
\]
For a function $f \in C^\infty(\T^n)$ and $k \in \N_0$, we denote:
\[
\norm{f}{C^k} = \sup_{\substack{\abs{A} \leq k \\ x \in \T^n}} \abs{D^Af(x)}, \qquad \norm{f}{\dot{C}^k} = \sup_{\substack{\abs{A} = k \\ x \in \T^n}} \abs{D^Af(x)}.
\]
and introduce the Sobolev norms:
\[
\norm{f}{H^k} = \left( \sum_{\abs{A}\leq k} \norm{D^A f }{L^2}^2 \right)^{\frac{1}{2}}, \qquad \norm{f}{\dot{H}^k} = \left( \sum_{\abs{A}= k} \norm{D^A f }{L^2}^2 \right)^{\frac{1}{2}}.
\]
We note Morrey's inequality, which states that if $k, l \in  \N_0$ and $k > \frac{n}{2}+l$, then 
\ben{Morrey}
\norm{f}{C^l} \leq c \norm{f}{H^k},
\een
for some constant $c$ depending on $k, l, n, L$ only\footnote{Here and subsequently, $c$ will denote a universal constant, independent of $M$ and initial data which may change from line to line.}.

For $u \in C^\infty(\mathcal{M}_T)$, $l \in \N_0$ and $X^k$ standing for one of $C^k$, $\dot{C}^k$ $H^k$ or $\dot{H}^k$, we introduce the norms:
\[
\norm{u}{\dot{C}^l_t X_x^k} := \sup_{ t \in (0, T)} \norm{\p_0^l u(t, \cdot)}{X^k}, \qquad \norm{u}{C^l_t X_x^k} = \max_{j \leq l} \norm{u}{\dot{C}^j_t X_x^k}.
\]
Finally for $1\leq p <\infty$ we define:
\[
\norm{u}{L^p_t X_x^k} := \left(\int_0^T \norm{u(t, \cdot)}{X^k}^p dt \right)^{\frac{1}{p}}.
\]

We recall the following standard result:
\begin{Lemma}\label{EnEst}
Suppose $\mu \geq 0$,  $u_0, u_1 \in C^\infty(\T^n)$ and $f \in C^\infty(\overline{\mathcal{M}_T})$. Then there exists a unique $u \in C^\infty(\overline{\mathcal{M}_T})$ solving:
\[
\begin{array}{c}
\displaystyle- \Box u + \mu^2 u= f \\[.4cm]
u|_{x^0=0} = u_0, \qquad \p_0 u|_{x^0=0} = u_1.
\end{array}
\]
Furthermore, for any $k$, we have the estimate:
\begin{align*}
& \left( \norm{u}{\dot{C}^{1}_t \dot{H}^k_x}^2 + \norm{u}{C^0_t \dot{H}^{k+1}_x }^2+ \mu^2 \norm{u}{C^0_t \dot{H}^k_x}^2 \right)^{\frac{1}{2}} \leq \norm{f }{L^1_t \dot{H}^k_x}\\&\qquad + \left( \norm{u_0}{\dot{H}^{k+1}}^2 + \norm{u_1}{\dot{H}^k }^2+ \mu^2 \norm{u_0}{\dot{H}^k}^2 \right)^{\frac{1}{2}}.
\end{align*}
\end{Lemma}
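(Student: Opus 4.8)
The plan is to treat this as the standard energy estimate for the constant-coefficient Klein--Gordon equation $\p_0^2 u - \Delta u + \mu^2 u = f$ on the torus (using $\Box = -\p_0^2 + \Delta$, with $\Delta$ the spatial Laplacian), and to extract existence, uniqueness and the quantitative bound from essentially one computation.

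For existence and uniqueness I would expand everything in a spatial Fourier series on $\T^n$. Writing $u(t,x) = \sum_{\xi}\hat{u}(t,\xi)e^{i\xi\cdot x}$ with $\xi$ ranging over the dual lattice, each coefficient solves the forced harmonic-oscillator ODE $\ddot{\hat{u}}(t,\xi) + \omega_\xi^2\,\hat{u}(t,\xi) = \hat{f}(t,\xi)$ with $\omega_\xi^2 = \abs{\xi}^2 + \mu^2$ and the prescribed data $\hat{u}_0(\xi),\hat{u}_1(\xi)$; this is solved explicitly by Duhamel's formula. Since $u_0,u_1,f$ are smooth their Fourier coefficients decay faster than any power of $\abs{\xi}$, and the bound below (applied at every $k$) shows that the resulting series converges in $C^\infty(\overline{\mathcal{M}_T})$, giving a smooth solution. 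Uniqueness is then immediate from the energy estimate: the difference of two solutions has vanishing data and source, so its energy vanishes identically.

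For the estimate, the key observation is that the equation has constant coefficients, so for any multi-index $A$ the function $u_A := D^A u$ solves the same equation with source $f_A := D^A f$. I would introduce the natural energy $\mathcal{E}_A(t) = \int_{\T^n}\big((\p_0 u_A)^2 + \abs{\nabla u_A}^2 + \mu^2 u_A^2\big)$ and differentiate in $t$; integrating the gradient term by parts on $\T^n$ (no boundary terms, by periodicity) collapses the derivative to $2\int_{\T^n}\p_0 u_A\, f_A$, the cross terms cancelling exactly because $u_A$ solves the equation. Summing over $\abs{A}=k$, setting $\mathcal{E}_k = \sum_{\abs{A}=k}\mathcal{E}_A$, and applying Cauchy--Schwarz (once in space, once in the sum over $A$) gives $\tfrac{d}{dt}\mathcal{E}_k \le 2\norm{\p_0 u}{\dot{H}^k}\norm{f}{\dot{H}^k} \le 2\sqrt{\mathcal{E}_k}\,\norm{f}{\dot{H}^k}$, hence $\tfrac{d}{dt}\sqrt{\mathcal{E}_k} \le \norm{f(t,\cdot)}{\dot{H}^k}$. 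Integrating in $t$ and taking the supremum yields $\sup_t \sqrt{\mathcal{E}_k(t)} \le \sqrt{\mathcal{E}_k(0)} + \norm{f}{L^1_t\dot{H}^k_x}$, which is the claimed inequality once $\mathcal{E}_k(t)$ is identified with the left-hand side and $\mathcal{E}_k(0)$ with the data term.

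I expect the only real care to be bookkeeping rather than substance. The energy identity and the step $\tfrac{d}{dt}\sqrt{\mathcal{E}_k}\le\norm{f}{\dot{H}^k}$ (which is exactly what produces the $L^1_t$ norm of the source, via Minkowski's integral inequality) are robust, and crucially everything is uniform in $\mu\ge 0$ since $\mu$ enters only through the manifestly nonnegative term $\mu^2 u_A^2$. The one point needing attention is matching the gradient term $\sum_{\abs{A}=k}\norm{\nabla u_A}{L^2}^2$ to the homogeneous Sobolev norm $\norm{u}{\dot{H}^{k+1}}^2$: reindexing $\p_i D^A = D^{A+e_i}$ shows these agree up to combinatorial multiplicities bounded between $1$ and $n$, so the two sides are equivalent norms; to recover the stated inequality with its clean coefficients one uses instead the Fourier-multiplier form of the norms (weight $\abs{\xi}^{2k}$), for which $\mathcal{E}_k$ closes exactly.
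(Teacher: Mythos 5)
The paper offers no proof of Lemma \ref{EnEst}: it is recalled as a ``standard result'', so there is no internal argument to compare against. Your proposal supplies precisely the proof that is being taken for granted --- commuting $D^A$ through the constant-coefficient operator, the energy identity with integration by parts on $\T^n$, Cauchy--Schwarz and integration in time, together with Fourier series/Duhamel for existence and the $k=0$ energy for uniqueness --- and it is correct in substance. Your attention to the mismatch between the paper's multi-index definition of $\norm{\cdot}{\dot{H}^k}$ and the Fourier-multiplier norm is well placed; that is genuinely where the only bookkeeping lies.

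One caveat, which is a defect of the printed statement rather than of your argument. Your last step identifies $\sup_t \mathcal{E}_k(t)$ with the square of the lemma's left-hand side, but the lemma takes the three suprema \emph{separately}: $\norm{u}{\dot{C}^{1}_t \dot{H}^k_x}$, $\norm{u}{C^0_t \dot{H}^{k+1}_x}$ and $\norm{u}{C^0_t \dot{H}^k_x}$ may be attained at different times, and a sum of sups can strictly exceed the sup of the sum. Indeed, with constant exactly $1$ the printed inequality is false: take $n=1$, $L=2\pi$, $\mu=0$, $f=0$, $k=0$ and $u=\cos x \cos x^0$, so $u_0=\cos x$, $u_1=0$; then for $T\ge \pi/2$ the left-hand side equals $\sqrt{2\pi}$ while the right-hand side equals $\sqrt{\pi}$. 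What your energy argument actually proves is the correct form $\sup_t\bigl(\norm{\p_0 u(t,\cdot)}{\dot{H}^k}^2+\norm{u(t,\cdot)}{\dot{H}^{k+1}}^2+\mu^2\norm{u(t,\cdot)}{\dot{H}^k}^2\bigr)^{1/2}\le \norm{f}{L^1_t \dot{H}^k_x}+\bigl(\norm{u_0}{\dot{H}^{k+1}}^2+\norm{u_1}{\dot{H}^k}^2+\mu^2\norm{u_0}{\dot{H}^k}^2\bigr)^{1/2}$, from which the paper's version follows with a harmless factor (at most $\sqrt{3}$ from separating the sups, and dimension-dependent constants if one insists on the multi-index norms rather than the Fourier-weight norms). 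Every later application of the lemma in the paper tolerates such universal constants, so nothing downstream is affected; but you should state your conclusion in the sup-of-the-energy form rather than claim the verbatim inequality.
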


\subsection{Bounding existence time} Recall that we are interested in solutions to 
\begin{align}
\displaystyle- \Box \rho + M^2 \rho &= \displaystyle \p_\mu \rho \p^\mu \rho - \p_\mu \theta  \p^\mu \theta- M^2 \rho^2 W(\rho), \label{rhothetaeq12}\\[.1cm]
- \Box \theta  &= \ 2 \p^\mu \rho \p_\mu \theta, \label{rhothetaeq22}
\end{align}
subject to:
\ben{initcond}
\rho|_{\{x^0=0\}} = \rho_0, \quad \p_0 \rho|_{\{x^0=0\}} = \rho_1, \quad \theta|_{\{x^0=0\}} = \theta_0, \quad \p_0 \theta|_{\{x^0=0\}} = \theta_1. 
\een
We assume that $\rho_0, \rho_1, \theta_0, \theta_1$ are given, smooth, functions on $\T^n$, which may depend on $M$, but will later be assumed to satisfy some uniform bounds. Standard results concerning the local well-posedness of hyperbolic equations (see for example \cite{sogge1995lectures, ringstrom, luk}) give:

\begin{Theorem}\label{LocEx}
Let $k_0 = 2 \lfloor \frac{n}{2} \rfloor + 1$ and set:
\[
d =  \norm{ \theta_0}{H^{k_0+1}} + \norm{\theta_1}{H^{k_0}} +  \norm{ \rho_0}{H^{k_0+1}} + \norm{\rho_1}{H^{k_0}} 
\]
Then a unique smooth solution to  \eq{rhothetaeq12}, \eq{rhothetaeq22} subject to \eq{initcond} exists on some time interval $[0, T)$, where $T=T(M, d) >0$. Further, if $[0, T_{max})$ is the \emph{maximal} time on which a smooth solution exists, either $T_{max}=\infty$ or else there exists a sequence $t_k \to T_{max}$ such that
\[
\norm{ \theta(t_k, \cdot)}{H^{k_0+1}} + \norm{\p_0\theta(t_k, \cdot)}{H^{k_0}} +  \norm{ \rho(t_k, \cdot)}{H^{k_0+1}} + \norm{\p_0\rho(t_k, \cdot)}{H^{k_0}} \to \infty.
\]
\end{Theorem}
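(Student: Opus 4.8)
The plan is to treat \eqref{rhothetaeq12}, \eqref{rhothetaeq22} as a semilinear system of wave equations whose principal part is the constant-coefficient operator $-\Box$, and whose nonlinearities are quadratic in first derivatives ($\p_\mu\rho\p^\mu\rho$, $\p_\mu\theta\p^\mu\theta$, $\p^\mu\rho\p_\mu\theta$) together with the zeroth-order term $M^2\rho^2 W(\rho)$. Throughout, $M\geq M_0$ is \emph{fixed}, so every factor of $M$ may be absorbed into constants; the content of the theorem is local existence with an existence time controlled by the single quantity $d$. The exponent $k_0=2\lfloor n/2\rfloor+1$ is chosen so that $k_0>n/2$: by Morrey's inequality \eqref{Morrey} this yields the embeddings $H^{k_0}\hookrightarrow C^0$ and $H^{k_0+1}\hookrightarrow C^1$, and it makes $H^{k_0}$ a Banach algebra. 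These two facts are exactly what is needed to estimate the nonlinear terms at the level of the linear energy estimate of Lemma \ref{EnEst}. (Equivalently one could recast the system in first-order symmetric hyperbolic form and quote the standard theory, but running the energy method directly through Lemma \ref{EnEst} is cleaner and self-contained.)

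First I would set up a Picard iteration: let $(\rho^{(0)},\theta^{(0)})$ solve the free equations with the prescribed data \eqref{initcond}, and given $(\rho^{(m)},\theta^{(m)})$ define $(\rho^{(m+1)},\theta^{(m+1)})$ by placing the previous iterate into the nonlinear right-hand sides of \eqref{rhothetaeq12}, \eqref{rhothetaeq22} and solving the resulting linear problems via Lemma \ref{EnEst}. Introduce the energy
\[
\mathcal{E}_{s}(t) = \sup_{0\le \tau\le t}\left( \norm{\rho(\tau,\cdot)}{H^{s+1}} + \norm{\p_0\rho(\tau,\cdot)}{H^{s}} + \norm{\theta(\tau,\cdot)}{H^{s+1}} + \norm{\p_0\theta(\tau,\cdot)}{H^{s}}\right),
\]
and write $\mathcal{E}^{(m)}_{s}$ for the corresponding energy of the $m$-th iterate, so that $\mathcal{E}^{(m)}_{k_0}(0)\lesssim d$. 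Applying Lemma \ref{EnEst} at each order $k\le k_0$ and summing, the algebra property bounds the quadratic terms, e.g. $\norm{\p_\mu\rho^{(m)}\p^\mu\rho^{(m)}}{H^{k_0}}\lesssim \mathcal{E}^{(m)}_{k_0}(s)^2$, while a tame Moser estimate (using that $W$ is smooth on all of $\R$) gives $\norm{(\rho^{(m)})^2 W(\rho^{(m)})}{H^{k_0}}\lesssim C\big(\mathcal{E}^{(m)}_{k_0}(s)\big)$. This produces a closed inequality $\mathcal{E}^{(m+1)}_{k_0}(t)\le C d + \int_0^t P\big(\mathcal{E}^{(m)}_{k_0}(s)\big)\,ds$ for a fixed polynomial $P$. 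A continuity argument then gives $T=T(M,d)>0$ with $T\lesssim d/P(2Cd)$ on which every iterate satisfies $\mathcal{E}^{(m)}_{k_0}(t)\le 2Cd$, by induction on $m$.

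Next I would prove convergence by showing the iteration contracts in the lower norm $\mathcal{E}_{k_0-1}$ of the differences $\delta\rho^{(m+1)}=\rho^{(m+1)}-\rho^{(m)}$, $\delta\theta^{(m+1)}$, which solve linear wave equations whose sources are differences of the nonlinearities. These sources are estimated by the uniform $\mathcal{E}_{k_0}$-bound (supplying $C^1$ control of the coefficients via \eqref{Morrey}) times the lower-order difference energy, so shrinking $T$ yields a genuine contraction. The limit $(\rho,\theta)$ solves \eqref{rhothetaeq12}, \eqref{rhothetaeq22}, \eqref{initcond}; uniqueness follows by applying the same difference estimate to two solutions. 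Regularity is upgraded by the usual bootstrap: since smooth data lie in $H^{s+1}\times H^s$ for every $s$, rerunning the energy estimate at each order $s\ge k_0$ shows $(\rho,\theta)\in C^0_t H^{s+1}_x$ for all $s$, hence spatial smoothness, and using the equations to trade $\p_0^2$ for spatial derivatives gives smoothness in all variables, i.e. $(\rho,\theta)\in C^\infty(\overline{\mathcal{M}_T})$.

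Finally, the continuation criterion follows by contradiction, exploiting that the existence time above depends only on $M$ and on $d$, and so can be restarted from any later slice $t_0$ with $d$ replaced by $\norm{\theta(t_0,\cdot)}{H^{k_0+1}}+\norm{\p_0\theta(t_0,\cdot)}{H^{k_0}}+\norm{\rho(t_0,\cdot)}{H^{k_0+1}}+\norm{\p_0\rho(t_0,\cdot)}{H^{k_0}}$. If this quantity did \emph{not} diverge along any sequence $t_k\to T_{max}$, it would be bounded by some $B$ on a sequence approaching $T_{max}$; restarting from a slice close enough to $T_{max}$ then extends the solution by the fixed time $T(M,B)>0$ past $T_{max}$, contradicting maximality. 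The main technical obstacle throughout is the nonlinear estimate, and specifically the term $M^2\rho^2 W(\rho)$: because $W$ is a composition with a smooth but nonpolynomial function one must invoke a tame Moser estimate, and one must verify that the quadratic derivative nonlinearities close at the level $H^{k_0}$ \emph{without} any gain of regularity from $\Box$. This is precisely why the field data is taken one derivative smoother than its time derivative ($H^{k_0+1}$ versus $H^{k_0}$), matching the structure of the estimate in Lemma \ref{EnEst}.
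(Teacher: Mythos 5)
Your proposal is correct, but it cannot be compared line-by-line with the paper's proof for a simple reason: the paper does not prove Theorem \ref{LocEx} at all. It is quoted as a standard local well-posedness result for hyperbolic equations, with citations to the references of Sogge, Ringstr\"om and Luk. What you have written is a correct, self-contained rendition of precisely the standard argument those references contain: Picard iteration off the linear energy estimate of Lemma \ref{EnEst}; closure of the quadratic first-derivative nonlinearities via the algebra property of $H^{k_0}$ (valid since $k_0=2\lfloor n/2\rfloor+1>n/2$) and of $M^2\rho^2 W(\rho)$ via a tame Moser estimate (harmless that $W(0)\neq 0$, since $\T^n$ is compact, and harmless that the term carries $M^2$, since $M$ is fixed here); contraction of differences in one lower derivative; tame persistence of regularity; and the restart argument for the continuation criterion. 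This is also consistent in spirit with how the paper proves its subsequent quantitative results (Theorem \ref{UnifEx} and the propagation theorems), which run exactly this kind of energy argument through Lemma \ref{EnEst}, Morrey's inequality \eqref{Morrey} and Leibniz splitting of products; what your write-up buys is self-containedness and an explicit demonstration that the existence time depends only on $(M,d)$, which is the feature the continuation criterion and the later bootstrap actually rely on. Two small refinements. First, for Theorem \ref{LocEx} itself any integer $k_0>n/2$ suffices, as you note; the specific value $k_0=2\lfloor n/2\rfloor+1$ is dictated by the paper's later bootstrap, which needs that in any splitting $A=B+C$ of a multi-index with $|A|\le k_0$ at least one of $|B|,|C|$ is $<k_0-n/2$, forcing $k_0>2\lfloor n/2\rfloor$. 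Second, in the persistence-of-regularity step the essential point (which you gesture at but should state explicitly) is that the order-$s$ energy inequality is tame --- the top norm enters linearly with coefficients controlled by the $k_0$-level energy --- so Gr\"onwall keeps the higher norms finite on the \emph{same} interval rather than a possibly shorter one. Your continuation argument is fine as stated: boundedness of the data norm along a single sequence $t_k\to T_{max}$ is exactly what the restart requires.
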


A priori, $T_{max}$ will depend on $M$. Our goal is to show that, provided we control the initial data appropriately, we can bound $T_{max}$ from below, independently of $M$. Our argument will be based on a bootstrap. Let us introduce
\begin{align*}
N(t) &:= \left(\sum_{i=1}^n \norm{ \p_i \theta(t, \cdot)}{H^{k_0}}^2 + \norm{\p_0 \theta(t, \cdot)}{H^{k_0}}^2 +  \norm{ \rho(t, \cdot)}{H^{k_0+1}}^2 + M^2 \norm{ \rho(t, \cdot)}{H^{k_0}}^2 + \norm{\p_0\rho(t, \cdot)}{H^{k_0}}^2\right)^{\frac{1}{2}}, \\
N &= N(0)= \left( \sum_{i=1}^n \norm{ \p_i \theta_0}{H^{k_0}}^2 + \norm{\theta_1}{H^{k_0}}^2 +  \norm{ \rho_0}{H^{k_0+1}}^2 + M^2 \norm{ \rho_0}{H^{k_0}}^2 + \norm{\rho_1}{H^{k_0}}^2\right)^{\frac{1}{2}}.
\end{align*}
We define $T^* = \sup\{ t | N(s) \leq 10 N \textrm{ for all }s\leq t \}$. We observe that by \eq{LocEx} the solution exists and remains smooth on $[0, T^*)$. If we can show that $T^*$ is bounded below by a positive quantity depending only on $N$, then we will be done.

\begin{Lemma}
Suppose $\sup_{t \in (0, T)} N(t) \leq 10N$. Then we can estimate:
\[
\sup_{t \in (0, T)} N(t) \leq \sqrt{2} N + cT N^2  (1+ e^{3 N}),
\]
for a constant $c$ that depends only on $n, L$
\end{Lemma}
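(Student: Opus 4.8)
The plan is to treat \eq{rhothetaeq12} and \eq{rhothetaeq22} as linear Klein--Gordon/wave equations in which the nonlinear terms are regarded as sources, and to feed these into the energy estimate of Lemma \ref{EnEst}. Concretely, I would apply Lemma \ref{EnEst} to $\rho$ (with $\mu = M$ and source $f_\rho = \p_\mu\rho\p^\mu\rho - \p_\mu\theta\p^\mu\theta - M^2\rho^2 W(\rho)$) and to $\theta$ (with $\mu = 0$ and source $f_\theta = 2\p^\mu\rho\p_\mu\theta$), at each spatial order $k=0,1,\ldots,k_0$, and then sum the resulting estimates in $k$, using Minkowski's inequality both in the $\ell^2$-sum over $k$ and in the time integral. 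Summed over $k$, the left-hand side of Lemma \ref{EnEst} reconstructs exactly the ingredients of $N(t)$: $\norm{\p_0\theta}{C^0_t H^{k_0}_x}$ and $\norm{\nabla\theta}{C^0_t H^{k_0}_x}$ from the $\theta$-estimate, and $\norm{\p_0\rho}{C^0_t H^{k_0}_x}$, $\norm{\nabla\rho}{C^0_t H^{k_0}_x}$ and $M\norm{\rho}{C^0_t H^{k_0}_x}$ from the $\rho$-estimate, while the data terms assemble into $N = N(0)$. This yields
\[
\sup_{t\in(0,T)} N(t) \leq \sqrt{2}\,N + c\left(\norm{f_\theta}{L^1_t H^{k_0}_x} + \norm{f_\rho}{L^1_t H^{k_0}_x}\right),
\]
where the constant $\sqrt 2$ comes from controlling the missing $\norm{\rho}{L^2}$ piece of $\norm{\rho}{H^{k_0+1}}$ by the mass term $M^2\norm{\rho}{H^{k_0}}^2$ (valid for $M\geq 1$). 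The factor $T$ will then enter through $\norm{f}{L^1_t H^{k_0}_x}\leq T\,\norm{f}{C^0_t H^{k_0}_x}$, so it remains only to bound the two source norms pointwise in time.

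Since $k_0 = 2\lfloor n/2\rfloor + 1 > n/2$, the space $H^{k_0}(\T^n)$ is a Banach algebra, so products of the first-derivative factors appearing in the sources are controlled in $H^{k_0}$. For the quadratic terms $\p_\mu\rho\p^\mu\rho$, $\p_\mu\theta\p^\mu\theta$ and $2\p^\mu\rho\p_\mu\theta$ the algebra property gives pointwise-in-time bounds of the form $c\left(\norm{\p\rho}{H^{k_0}}^2 + \norm{\p\theta}{H^{k_0}}^2\right)\leq c\,N(t)^2$, since $N(t)$ controls $\norm{\p\rho}{H^{k_0}}$ (through $\norm{\rho}{H^{k_0+1}}$ and $\norm{\p_0\rho}{H^{k_0}}$) and $\norm{\p\theta}{H^{k_0}}$. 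Invoking the bootstrap hypothesis $N(t)\leq 10N$ bounds these contributions by $cTN^2$.

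The delicate term, and the main obstacle, is the potential contribution $M^2\rho^2 W(\rho)$ to $f_\rho$, where the explicit factor $M^2$ threatens uniformity in $M$. The key is to rewrite it as $(M\rho)^2 W(\rho)$ and to exploit the fact that $N(t)$ controls $M\rho$ rather than $\rho$: the algebra property gives $\norm{(M\rho)^2 W(\rho)}{H^{k_0}}\leq c\,\norm{M\rho}{H^{k_0}}^2\,\norm{W(\rho)}{H^{k_0}}\leq c\,N(t)^2\,\norm{W(\rho)}{H^{k_0}}$. The composition factor is handled by a Moser estimate $\norm{W(\rho)}{H^{k_0}}\leq C(\norm{\rho}{C^0})(1+\norm{\rho}{H^{k_0}})$, whose constant involves $\sup_{|s|\leq\norm{\rho}{C^0}}|W^{(j)}(s)|$ for $j\leq k_0$. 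Since $W$ and its derivatives grow like $e^{2s}$, and Morrey's inequality \eq{Morrey} with the bootstrap gives $\norm{\rho}{C^0}\leq c\,\norm{\rho}{H^{k_0}}\leq c\,M^{-1}N(t)\leq 10c\,N$, this constant is bounded by $c\,e^{3N}$ for a suitable choice of universal constant. Collecting the quadratic and potential contributions, inserting the factor $T$, and using $N(t)\leq 10N$ throughout gives $\norm{f_\theta}{L^1_t H^{k_0}_x}+\norm{f_\rho}{L^1_t H^{k_0}_x}\leq cTN^2(1+e^{3N})$, which combined with the first display yields the claim. The only genuinely subtle point is the simultaneous taming of the factor $M^2$ and of the exponential nonlinearity $W$; everything else is a routine, if lengthy, application of Sobolev product and composition estimates.
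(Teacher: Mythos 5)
Your proposal follows essentially the same route as the paper's proof: apply the energy estimate of Lemma \ref{EnEst} to both equations treating the nonlinearities as sources, sum over $k \le k_0$, convert $L^1_t$ to $C^0_t$ norms at the price of a factor $T$, and tame the dangerous term $M^2\rho^2 W(\rho)$ by writing it as $(M\rho)^2 W(\rho)$ so that $N(t)$ controls the quadratic factor while Morrey's inequality plus the bootstrap controls $W(\rho)$ by an exponential in $N$. The only differences are cosmetic packaging --- you invoke the $H^{k_0}$ algebra property and a Moser composition estimate where the paper carries out the Leibniz/Fa\`a di Bruno expansions by hand, bounding one factor in $C^0$ via Morrey and the rest in $L^2$ --- and both arguments are equally (and acceptably) loose about the precise constants, such as the exact provenance of the $\sqrt{2}$ and the constant in the exponent $e^{3N}$.
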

\begin{proof}
We first consider  \eq{rhothetaeq22}. Let us assume $k \leq k_0$. Applying Lemma \ref{EnEst} on some time interval $(0, T)$:
\[
\norm{\theta}{\dot{C}^{1}_t \dot{H}^k_x}^2 + \norm{\theta}{C^0_t \dot{H}^{k+1}_x }^2 \leq 8 \norm{\p^\mu \rho \p_\mu \theta}{L^1_t \dot{H}^k_x}^2 + 2\norm{\theta_0}{\dot{H}^{k+1}}^2 + 2\norm{\theta_1}{\dot{H}^{k}}^2,
\]
where the spacetime norms refer to $(0, T) \times \T^n$. We consider the first term on the right-hand side. For $\abs{A} = k$ we can write:
\[
D^A \left(\p^\mu \rho \p_\mu \theta \right) = \sum_{B+C = A} c_{B} \p^\mu D^B \rho \p_\mu D^C \theta
\]
for some combinatorial constants $c_B$. Now, we claim that either $\abs{B}< k_0 - \frac{n}{2}$ or $\abs{C}< k_0 - \frac{n}{2}$ holds. If not, then $k_0 \geq k = \abs{B}+\abs{C}   \geq 2k_0 - 2 \lfloor \frac{n}{2} \rfloor$ which implies $k_0 \leq 2 \lfloor \frac{n}{2} \rfloor$, which is false. We deduce, by Morrey's inequality that either:
\[
\norm{\p^\mu D^B \rho (t, \cdot)}{C^0} \leq c' N(t), \quad \textrm{or} \quad \norm{\p_\mu D^C \theta(t, \cdot)}{C^0} \leq c' N(t)
\]
holds, where $c'$ is a universal constant depending on $k, k_0, n$. By virtue of the estimate:
\[
\norm{fg}{L^2} \leq \norm{f}{C^0} \norm{g}{L^2},
\]
we can bound
\[
\norm{D^A \left(\p^\mu \rho \p_\mu \theta \right)}{L^2} \leq \sum_{B+C = A} |c_{B}| \norm{\p^\mu D^B \rho \p_\mu D^C \theta}{L^2} \leq Cc'^2 N^2,
\]
for a combinatorial constant $C$. We conclude that there exists a constant $c$, depending on $k, k_0, n$ such that:
\[
\norm{\p^\mu \rho \p_\mu \theta}{L^1_t \dot{H}^k_x} \leq c T N^2.
\]

Next, we consider \eq{rhothetaeq12}. Applying Lemma \ref{EnEst}, we obtain for $k \leq k_0$:
\begin{align}
&\norm{\rho}{\dot{C}^{1}_t \dot{H}^k_x}^2 + \norm{\rho}{C^0_t \dot{H}^{k+1}_x }^2+ M^2 \norm{\rho}{C^0_t \dot{H}^{k}_x }^2 \leq \nonumber \\
&\qquad \qquad 2\norm{\rho_0}{\dot{H}^{k+1}}^2 +2M^2\norm{\rho_0}{\dot{H}^{k}}^2 + 2\norm{\rho_1}{\dot{H}^{k}}^2\label{rhoest}\\
&\qquad \qquad + 6\norm{\p^\mu \rho \p_\mu \rho}{L^1_t \dot{H}^k_x}^2+6\norm{\p^\mu \theta \p_\mu \theta}{L^1_t \dot{H}^k_x}^2+6 \norm{M^2 \rho^2 W(\rho)}{L^1_t \dot{H}^k_x}^2.\nonumber
\end{align}
Here we recall that $W:\rho \mapsto (e^{2 \rho}-1 - 2 \rho) /(2\rho^2)$ satisfies, for each $l$:
\[
\abs{W^{(l)}(\rho)} \leq c_l e^{2 \abs{\rho}}
\]
for some constants $c_l$. The first two terms on the final line of \eq{rhoest} may be dealt with precisely as for the nonlinearity in the $\theta$ equation to give:
\[
 \norm{\p^\mu \rho \p_\mu \rho}{L^1_t \dot{H}^k_x}+\norm{\p^\mu \theta \p_\mu \theta}{L^1_t \dot{H}^k_x} \leq c T N^2.
\]
For the final term in \eq{rhoest}, we note that if $\abs{A} = k$ we can write:
\[
D^A(M^2 \rho^2 W(\rho)) = \sum_{B + C + E = A} c_{B, C} D^B (M\rho) D^C (M \rho) D^E(W(\rho))
\]
Now, by repeated application of the chain and product rules, we may expand $D^E(W(\rho))$ as a sum of terms, with universal coefficients, of the form
\[
W^{(l)}(\rho) \times D^{E^1} \rho \times \cdots D^{E^l} \rho
\]
where $l \leq \abs{E}$ and the multi-indices $E^i$ satisfy $E^1+\ldots +E^l = E$. This gives us an expression:
\[
D^A(M^2 \rho^2 W(\rho)) = \sum_{\substack{l \leq \abs{A} \\ B + C + E^1+\cdots+ E^l = A}} c_{B, C, E^1, \ldots, E^l} W^{(l)}(\rho)D^B (M\rho) D^C (M \rho) D^{E^1} \rho \cdots D^{E^l} \rho
\]
where $c_{B, C, E^1, \ldots, E^l}$ are universal constants.  By a similar argument to above, all the derivatives of $\rho$ on the RHS have order less than $k_0-\frac{n}{2}$, except for possibly one which we can estimate in $L^2$. We can estimate the terms individually as:
\[
\norm{W^{(l)}(\rho)D^B (M\rho) D^C (M \rho) D^{E^1} \rho \cdots D^{E^l} \rho}{L^2} \leq c' N^{2+l} e^{2N} \leq c_l N^2 e^{3N}
\]
so that combining these estimates we find
\[
\norm{M^2 \rho^2 W(\rho)}{L^1_t \dot{H}^k_x} \leq cT N^2e^{3N}
\]
for some universal constant $c$.

Putting together our estimates, we have:
\begin{align*}
&\norm{\theta}{\dot{C}^{1}_t \dot{H}^k_x}^2 + \norm{\theta}{C^0_t \dot{H}^{k+1}_x }^2+\norm{\rho}{\dot{C}^{1}_t \dot{H}^k_x}^2 + \norm{\rho}{C^0_t \dot{H}^{k+1}_x }^2+ M^2 \norm{\rho}{C^0_t \dot{H}^{k}_x }^2 &
\\&  \qquad \qquad \leq 2\norm{\theta_0}{\dot{H}^{k+1}}^2 + 2\norm{\theta_1}{\dot{H}^{k}}^2+2\norm{\rho_0}{\dot{H}^{k+1}}^2 +2M^2\norm{\rho_0}{\dot{H}^{k}}^2 + 2\norm{\rho_1}{\dot{H}^{k}}^2 \\
& \qquad \qquad \qquad + c(T^2N^4 + N^4T^2 e^{6N}).
\end{align*}
Summing over $k=0, \ldots, k_0$ and taking a square root, we conclude:
\[
\sup_{t \in (0, T)} N(t) \leq \sqrt{2} N + cT N^2  (1+ e^{3 N}). \qedhere
\]
\end{proof}
This immediately gives us the first part of Proposition \ref{Prop1}:
\begin{Theorem}\label{UnifEx}
There exists a universal constant $c>0$ such that
\[
T^* \geq \frac{c}{N(1+e^{3N})}.
\]
In particular, provided the initial data is such that $N$ is bounded independently of $M$, a unique smooth solution to \eq{rhothetaeq12}, \eq{rhothetaeq22} subject to \eq{initcond} exists on some time interval $[0, T)$, where $T>0$ is independent of $M$.
\end{Theorem}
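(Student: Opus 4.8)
The plan is to close a continuity (bootstrap) argument using the self-improving estimate of the preceding Lemma. Recall $T^* = \sup\{t : N(s) \le 10N \text{ for all } s \le t\}$, and that a smooth solution exists on $[0,T^*)$ with $N(t) \le 10N$ there. The point is that the Lemma's conclusion, valid whenever $\sup_{(0,T)} N \le 10N$, improves the effective constant multiplying $N$ from $10$ to $\sqrt 2 + cTN(1+e^{3N})$, and since $\sqrt2 < 10$ this is a genuine gain for small $T$. Rearranging the Lemma's inequality at $T = T^*$ will directly yield the stated lower bound, with no further estimates required.

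If $T^* = \infty$ the claimed inequality is vacuous, so suppose $T^* < \infty$. First I would verify $T^* < T_{max}$, i.e. that the solution really extends past the bootstrap time. The bound $N(t) \le 10N$ controls every derivative norm appearing in the continuation criterion of Theorem~\ref{LocEx} except the undifferentiated part of $\norm{\theta}{H^{k_0+1}}$, since $N(t)$ sees only derivatives of $\theta$. That remaining piece is handled by the elementary estimate $\norm{\theta(t)}{L^2} \le \norm{\theta_0}{L^2} + \int_0^t \norm{\partial_0\theta(s)}{L^2}\,ds \le \norm{\theta_0}{L^2} + 10Nt$, which is finite on the bounded interval $[0,T^*)$. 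Hence the full blow-up quantity stays bounded as $t \to T^*$, so by the continuation criterion the solution extends smoothly to a neighbourhood of $T^*$.

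With the solution smooth across $T^*$, the map $t\mapsto N(t)$ is continuous there, and the definition of $T^*$ as a supremum forces saturation, $N(T^*) = 10N$: otherwise $N < 10N$ on a neighbourhood of $T^*$, contradicting maximality. Applying the Lemma on $(0,T^*)$ and using $N(T^*) \le \sup_{(0,T^*)} N$ then gives
\[
10N = N(T^*) \le \sqrt 2\, N + cT^* N^2(1+e^{3N}),
\]
so that $T^* \ge (10-\sqrt2)/\big(cN(1+e^{3N})\big)$, which is the claim once $10-\sqrt2$ is absorbed into the universal constant. The ``in particular'' clause follows since $x \mapsto 1/\big(x(1+e^{3x})\big)$ is decreasing: a bound $N \le N_0$ uniform in $M$ gives an $M$-independent lower bound on $T^*$, hence an $M$-independent existence time, while uniqueness is already furnished by Theorem~\ref{LocEx}.

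I expect the only real subtlety to be the bootstrap bookkeeping of the second paragraph, namely ruling out $T^* = T_{max}$; this is precisely where one must supplement the derivative control contained in $N(t)$ with the separate $L^2$ bound on $\theta$ itself, because $N$ does not control $\theta$ directly. All the genuinely analytic work has already been carried out in the energy Lemma, so the remainder is a routine continuity argument.
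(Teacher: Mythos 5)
Your proof is correct and follows essentially the same route as the paper: both close the bootstrap by combining the preceding Lemma's self-improving estimate with continuity of $N(t)$ and the maximality of $T^*$, the paper phrasing this as a contradiction ($T^*N(1+e^{3N})$ arbitrarily small forces $\sup N(t)\leq 9N$, contradicting the definition of $T^*$) while you argue directly via saturation $N(T^*)=10N$ and rearrange. Your second paragraph, ruling out $T^*=T_{max}$ by supplementing $N(t)$ with the elementary bound $\norm{\theta(t)}{L^2}\leq\norm{\theta_0}{L^2}+10Nt$ so that the continuation criterion of Theorem~\ref{LocEx} applies, makes explicit a point the paper leaves implicit, and is a welcome addition rather than a deviation.
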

\begin{proof}
Suppose not. Then for any $\epsilon>0$ we can find data for which
\[
T^* N(1+e^{3N}) < \epsilon.
\]
Taking $\epsilon$ sufficiently small, by the previous result we have that:
\[
\sup_{t\in (0, T^*)} N(t)  \leq 9N,
\]
but since $N(t)$ is a continuous function of its argument, this contradicts the definition of $T^*$.
\end{proof}

\subsection{Propagating bounds}
Having established a uniform time of existence for our solutions, we next turn to showing uniform bounds on derivatives of the solution.

\begin{Theorem}
Let $k_0$ and $T^*$ be as above. For each $k \geq k_0$ let:
\begin{align*}
N_k(t) &:= \left(\sum_{i=1}^n \norm{ \p_i \theta(t, \cdot)}{H^{k}}^2 + \norm{\p_0 \theta(t, \cdot)}{H^{k}}^2 +  \norm{ \rho(t, \cdot)}{H^{k+1}}^2 + M^2 \norm{ \rho(t, \cdot)}{H^{k}}^2 + \norm{\p_0\rho(t, \cdot)}{H^{k}}^2\right)^{\frac{1}{2}}, \\
N_k &= N_k(0)= \left( \sum_{i=1}^n \norm{ \p_i \theta_0}{H^{k}}^2 + \norm{\theta_1}{H^{k}}^2 +  \norm{ \rho_0}{H^{k+1}}^2 + M^2 \norm{ \rho_0}{H^{k}}^2 + \norm{\rho_1}{H^{k}}^2\right)^{\frac{1}{2}}.
\end{align*}
Then we have the bound:
\[
\sup_{t \in (0, T^*)} N_k(t) \leq c_k
\]
for some constant $c_k$ depending on the initial data only through $N_k$.
\end{Theorem}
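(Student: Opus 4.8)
The plan is to run a high-order energy estimate on the bootstrap interval $[0,T^*)$ and close it with Grönwall's inequality, using the low-order control already in hand. By definition of $T^*$ we have $N(t)\le 10N$ on $[0,T^*)$, and since $k_0>\frac{n}{2}$, Morrey's inequality \eqref{Morrey} converts this into uniform-in-$M$ $C^0$ bounds on $\rho$, $\theta$, $\p_\mu\rho$, $\p_\mu\theta$ and on $M\rho$, all controlled by $N$. These will serve as the ``low-frequency'' factors in tame product estimates, while the single factor carrying all $k$ derivatives is controlled in $L^2$ by $N_k(t)$. The key structural feature is that this splitting is exactly what makes the estimate linear in the top-order quantity $N_k(t)$.

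Concretely, I would apply Lemma \ref{EnEst} to \eqref{rhothetaeq22} (with $\mu=0$) and to \eqref{rhothetaeq12} (with $\mu=M$) at each homogeneous order $k'\le k$. For \eqref{rhothetaeq12} the mass term in the estimate produces precisely the $M^2\norm{\rho}{\dot{H}^{k'}}^2$ contribution appearing in $N_k$, so after summing over $0\le k'\le k$ the left-hand side reconstructs $N_k(t)^2$ and the data terms reconstruct $N_k^2$. It then remains to bound the $\dot{H}^k$ norms of the nonlinearities $\p^\mu\rho\,\p_\mu\theta$, $\p^\mu\rho\,\p_\mu\rho$, $\p^\mu\theta\,\p_\mu\theta$ and $M^2\rho^2W(\rho)$ by a constant times $N_k(t)$. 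For the quadratic first-derivative terms this is the Moser product estimate $\norm{fg}{\dot{H}^k}\le c\left(\norm{f}{C^0}\norm{g}{\dot{H}^k}+\norm{g}{C^0}\norm{f}{\dot{H}^k}\right)$, with the $C^0$ factors controlled by $N$ and the $\dot{H}^k$ factors by $N_k(t)$; this is exactly the computation carried out at order $k_0$ in the preceding lemma, now performed at order $k$.

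The main obstacle is the singular term $M^2\rho^2W(\rho)$, which must be estimated with constants independent of $M$. The device is to keep the mass powers bundled with the field, writing $M^2\rho^2W(\rho)=(M\rho)(M\rho)W(\rho)$; expanding $D^A[(M\rho)^2W(\rho)]$ by Leibniz and the chain rule (as in the preceding lemma) produces a sum of terms $W^{(l)}(\rho)\,D^B(M\rho)\,D^C(M\rho)\,D^{E^1}\rho\cdots D^{E^l}\rho$. Using $|W^{(l)}(\rho)|\le c_l e^{2|\rho|}$ together with the $C^0$ bound on $\rho$, the coefficient is bounded; by the Moser estimate at most one differentiated factor carries top order (bounded in $L^2$ by $N_k(t)$, a bare $\rho$ factor even by $M^{-1}N_k(t)$) while the remaining factors sit in $C^0$ and are controlled by $N$. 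Crucially, every explicit power of $M$ is absorbed into an $M\rho$ factor, so no positive power of $M$ survives and the bound is again linear in $N_k(t)$ with an $M$-independent, $N$-dependent coefficient.

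Assembling these estimates and taking square roots yields an integral inequality of the form $N_k(t)\le c\,N_k+c\,P(N)\int_0^t N_k(s)\,ds$ on $[0,T^*)$, where $P$ is a polynomial (with exponential factors inherited from $W$) in the low-order quantity $N$. Grönwall's inequality then gives $\sup_{t\in(0,T^*)}N_k(t)\le c_k$, and since $N=N_{k_0}\le N_k$ for $k\ge k_0$ and $T^*$ is itself controlled by $N$ via Theorem \ref{UnifEx}, the constant $c_k$ depends on the data only through $N_k$. If one prefers to avoid the full Moser machinery, the identical conclusion follows by induction on $k$: the bound for $N_{k-1}$ places all but the top-order factor in $L^\infty$, and the base case $k=k_0$ is precisely the preceding bootstrap theorem.
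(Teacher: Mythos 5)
Your proposal is correct, and its core structure (apply Lemma \ref{EnEst} order by order, bound the nonlinearities linearly in the top-order quantity $N_k(t)$ with $M$-independent coefficients controlled by low-order data, then close in time) is the same as the paper's. The technical route differs in two respects, though. First, the paper does \emph{not} prove the theorem directly at order $k$ from the bootstrap bound $N(t)\le 10N$ alone: it inducts on $k$, using the already-established bound on $N_{k-1}(t)$ to control the $C^0$ factors. This matters because the elementary Leibniz-plus-Morrey splitting used at order $k_0$ (``if both factors had high order then $k_0 \le 2\lfloor \tfrac{n}{2}\rfloor$, contradiction'') does not survive at order $k > k_0$ with only level-$k_0$ pointwise control: a term like $D^B\p\rho\,D^C\p\theta$ with $\abs{B}\approx\abs{C}\approx k/2$ has factors of intermediate order that Morrey at level $k_0$ cannot bound in $C^0$. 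The paper's induction restores the dichotomy (one factor has order $<k-1-\tfrac{n}{2}$, hence is controlled pointwise by $N_{k-1}$); your direct route instead needs the genuine Moser/tame product estimate, whose proof requires Gagliardo--Nirenberg interpolation. So your parenthetical claim that this ``is exactly the computation carried out at order $k_0$'' is not accurate --- you are invoking a strictly stronger tool --- but the proof is still valid since that tool is standard, and the same remark applies to the multi-factor expansion of $M^2\rho^2 W(\rho)$ written as $(M\rho)^2W(\rho)$, which you treat correctly. Second, you close with Gr\"onwall, whereas the paper closes by absorption: it derives $\sup_{(0,T)}N_k \le \sqrt{2}N_k + c_{N_k}T\sup_{(0,T)}N_k$, takes $T = 1/(2c_{N_k})$, and iterates over finitely many such intervals to cover $(0,T^*)$; these closings are essentially interchangeable. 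What each approach buys: yours is shorter and avoids the induction entirely at the price of importing interpolation machinery; the paper's is self-contained and elementary, needing nothing beyond Leibniz, Morrey and the energy estimate. Your closing remark --- induction on $k$ with the $N_{k-1}$ bound supplying the $L^\infty$ factors, base case the bootstrap theorem --- is precisely the paper's proof.
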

\begin{proof}
We work by induction. For $k=k_0$, the base case, this follows immediately from the definition of $T^*$. Suppose that we have established the bound for $k-1$ where $k>k_0$. Considering \eq{rhothetaeq22} and applying Lemma \ref{EnEst} on a time interval $(0, T)$ for $T<T^*$, we obtain:
\[
\norm{\theta}{\dot{C}^{1}_t \dot{H}^k_x}^2 + \norm{\theta}{C^0_t \dot{H}^{k+1}_x }^2 \leq 8 \norm{\p^\mu \rho \p_\mu \theta}{L^1_t \dot{H}^k_x}^2 + 2\norm{\theta_0}{\dot{H}^{k+1}}^2 + 2\norm{\theta_1}{\dot{H}^{k}}^2,
\]
As in the proof of the previous theorem, for $\abs{A} = k$ we can write:
\[
D^A \left(\p^\mu \rho \p_\mu \theta \right) = \sum_{B+C = A} c_{B} \p^\mu D^B \rho \p_\mu D^C \theta
\]
for some combinatorial constants $c_B$. Arguing as before, we see that either $\abs{B}< k -1 - \frac{n}{2}$ or $\abs{C}< k - 1-\frac{n}{2}$ holds. Thus we can bound one of the factors of $\p^\mu D^B \rho \p_\mu D^C \theta$ in $C^0$ by $N_{k-1}(t)$ and the other we bound in $L^2$. Since $N_{k-1}\leq N_k$, by the induction assumption we see that we can bound $N_{k-1}(t)$ by some constant $c_{N_k}$ depending only on $N_k$. We thus have
\[
\norm{\p^\mu \rho \p_\mu \theta}{L^1_t \dot{H}^k_x} \leq c_{N_{k}} T \sup_{t \in (0, T)} N_k(t).
\]

We can argue similarly with \eq{rhothetaeq12} and then combine the estimates as in the previous Theorem to obtain:
\[
\sup_{t \in (0, T)} N_k(t) \leq \sqrt{2}N_k + c_{N_{k}} T  \sup_{t \in (0, T)} N_k(t).
\]
for some constant $c_{N_{k}}$. Taking $T = 1/(2c_{N_{k}})$ we obtain:
\[
\sup_{t \in (0, T)} N_k(t) \leq 2\sqrt{2}N_k
\]
and we can iterate this estimate on the intervals $(T, 2T), \ldots, ((m-1)T, mT)$ to obtain:
\[
\sup_{t \in (0, T^*)} N_k(t)  \leq (2\sqrt{2})^m N_k
\]
where $m$ is chosen such that $m/(2c_{N_{k}})>T^*$, which gives the result we require and so we are done by induction.
\end{proof}

This result allows us to propagate bounds on spatial derivatives of the initial data. Assuming $k >p +\frac{n}{2}$ we may use Morrey's inequality to establish control of $\norm{\theta}{C^0_tC^{p}_x}$, $\norm{\p_0\theta}{C^0C^p_x}$, $M\norm{\rho}{C^0_tC^p_x}$ and  $\norm{\rho}{C^0_tC^p_x}$ in terms of $N_k(t)$ and hence $N_k$. To bound $\norm{\p_0\p_0\theta}{C^0C^p_x}$, we rearrange the equation for $\theta$ to write $\p_0\p_0\theta$ in terms of spatial derivatives of $\theta, \rho, \p_0\rho$ which we have already bounded. This completes the proof of Proposition \ref{Prop1}. In order to establish Proposition \ref{Prop2} we wish to propagate also time derivatives of the initial data.

\begin{Theorem} \label{tThm}
Let $k_0$ and $T^*$ be as above. For each $l \geq 0$ and $k \geq k_0+l$ let :
\begin{align*}
N_{k,l}(t) &:= \Bigg[ \sum_{j=0}^l \bigg ( \sum_{i=1}^n \norm{ \p_0^j\p_i \theta(t, \cdot)}{H^{k-j}}^2+ \norm{\p_0^{j+1} \theta_1(t, \cdot)}{H^{k-j}}^2 +  \norm{ \p_0^j\rho(t, \cdot)}{H^{k-j+1}}^2 +\\
&\qquad \qquad  + M^2 \norm{ \p_0^j \rho(t, \cdot)}{H^{k-j}}^2 + \norm{\p_0^{j+1}\rho(t, \cdot)}{H^{k-j}}^2 \bigg ) \Bigg]^{\frac{1}{2}}\\
N_{k,l} &= N_{k,l}(0).
\end{align*}
Then we have the bound:
\[
\sup_{t \in (0, T^*)} N_{k,l}(t) \leq c_{k,l},
\]
for some constant $c_{k,l}$ depending on the initial data only through $N_{k,l}$.
\end{Theorem}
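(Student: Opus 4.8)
The plan is to argue by induction on the number $l$ of time derivatives, treating the spatial regularity index $k$ by a subsidiary induction exactly as in the preceding theorem. The base case $l=0$ is precisely the preceding theorem, since $N_{k,0}(t)$ coincides with $N_k(t)$. For the inductive step I would assume the stated bound at level $l-1$ (for all admissible $k$) and retain the spatial bounds already established on $[0,T^*)$. Differentiating \eq{rhothetaeq12} and \eq{rhothetaeq22} $j$ times in $x^0$ for each $0\le j\le l$ yields
\[
-\Box(\p_0^j\rho) + M^2 \p_0^j\rho = \p_0^j\big(\p_\mu\rho\,\p^\mu\rho - \p_\mu\theta\,\p^\mu\theta - M^2\rho^2 W(\rho)\big),
\]
\[
-\Box(\p_0^j\theta) = \p_0^j\big(2\,\p^\mu\rho\,\p_\mu\theta\big).
\]
I would then apply Lemma \ref{EnEst} (with mass $M$ for the first equation and mass $0$ for the second) at spatial regularity $k-j$. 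The left-hand sides of the resulting energy inequalities reproduce exactly the combination of norms assembled in $N_{k,l}(t)$: the $\theta$-equation controls $\p_0^{j+1}\theta$ and $\p_0^j\p_i\theta$, while the $\rho$-equation controls $M\p_0^j\rho$, $\p_0^{j+1}\rho$ and $\p_0^j\p_i\rho$. The data terms are bounded by $N_{k,l}=N_{k,l}(0)$ by definition, so the only real work is estimating the sources.

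The source terms are handled by the same Leibniz-plus-Morrey scheme used previously, now counting total (time and space) derivative order. Expanding $D^A\p_0^j$ of each quadratic nonlinearity produces a sum of products of two factors $\p_0^{a}D^{B}(\cdot)$ and $\p_0^{a'}D^{B'}(\cdot)$ with $a+a'\le j+1$ and $\abs{B}+\abs{B'}\le k-j+1$. Since $k\ge k_0+l$, a pigeonhole argument identical to that in the preceding theorem shows that at most one of the two factors can be of top order; the other has total order below $k_0-\tfrac{n}{2}$ and may therefore be bounded in $C^0$ via Morrey's inequality \eq{Morrey} by a lower-order norm, controlled either by the level-$(l-1)$ induction hypothesis or by the established spatial bounds. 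The single remaining top-order factor is estimated in $L^2$ by $N_{k,l}(t)$, and the time integral in $\norm{\cdot}{L^1_t\dot{H}^{k-j}_x}$ contributes a factor of $T$. The potential nonlinearity $M^2\rho^2W(\rho)$ needs the extra bookkeeping of powers of $M$. Writing $M^2\rho^2=(M\rho)(M\rho)$ and expanding $\p_0^jD^A$ by Leibniz together with the chain-rule expansion of derivatives of $W(\rho)$, every resulting term has the form $W^{(p)}(\rho)$ times two factors carrying the weight $M\rho$ and $p$ further unweighted factors of $\rho$. Each $M\rho$-factor is controlled by the $M$-weighted entries of $N_{k,l}$; each unweighted $\rho$-factor is $O(1/M)$ by boundedness of $M\rho$; and $W^{(p)}(\rho)$ is bounded in $C^0$ using $\abs{W^{(p)}(\rho)}\le c_p e^{2\abs{\rho}}$ and the uniform $C^0$-bound on $\rho$. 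Thus the two powers of $M$ are exactly absorbed by the two $M\rho$-factors and no net power of $M$ survives, so this term obeys the same $T\sup N_{k,l}$ estimate as the others.

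Summing the energy inequalities over $0\le j\le l$ and over spatial orders, and taking a square root, gives a bootstrap inequality of the form
\[
\sup_{(0,T)} N_{k,l}(t) \le \sqrt{2}\,N_{k,l} + c_{k,l}\,T \sup_{(0,T)} N_{k,l}(t) + c_{k,l}',
\]
where $c_{k,l}'$ collects the genuinely lower-order contributions already bounded by the induction. Choosing $T$ small enough to absorb the middle term and iterating over the successive intervals $(0,T),(T,2T),\ldots$ covering $[0,T^*)$, exactly as in the proof of Theorem \ref{UnifEx}, yields the claimed bound $\sup_{(0,T^*)}N_{k,l}(t)\le c_{k,l}$. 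I expect the main obstacle to be the top-order case $j=l$: one must verify that the top-order quantities $\p_0^{l+1}\rho$, $\p_0^{l+1}\theta$ and the highest spatial derivatives enter the sources only linearly, so that they can be absorbed after the short-time iteration, and in particular that in the potential term $M^2\rho^2W(\rho)$ the derivative distribution never forces more powers of $M$ than can be matched by the $M\rho$-weights. Once this linear-in-top-order structure and the $M$-counting are checked, the estimate closes by the same mechanism as the preceding two theorems.
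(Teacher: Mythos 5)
Your proposal is correct and follows essentially the same route as the paper's proof: induction on $l$ with the preceding theorem as base case, time-differentiated energy estimates from Lemma \ref{EnEst}, a Leibniz--Morrey pigeonhole placing at most one factor at top order (that factor estimated in $L^2$ against $N_{k,l}(t)$, the other in $C^0$ via the level-$(l-1)$ hypothesis), and a small-time absorption followed by iteration over subintervals to cover $[0,T^*)$. Two cosmetic slips, neither affecting the argument: the pigeonhole threshold must be $k$-dependent (total order below $k-\tfrac{n}{2}$, as in the paper), not $k_0-\tfrac{n}{2}$, and the interval-iteration you invoke is the one from the proof of the spatial-propagation theorem rather than Theorem \ref{UnifEx}.
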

\begin{proof}
This time we induct on $l$. We have already established that the result holds with $l=0$ and for any $k\geq k_0$. Suppose that we have established the bound for $l-1$, where $l \geq 1$. We consider \eq{rhothetaeq22} and act on both sides with $\p_0^l$:
\[
- \Box \p_0^l \theta  = \ 2 \p_0^l(\p^\mu \rho \p_\mu \theta).
\]
Applying Lemma \ref{EnEst} on a time interval $(0, T)$ for $T<T^*$, we obtain:
\begin{align*}
\norm{\p_0^l \theta}{\dot{C}^{1}_t \dot{H}^k_x}^2 + \norm{\p_0^l \theta}{C^0_t \dot{H}^{k+1}_x }^2 &\leq 8 \norm{\p_0^l(\p^\mu \rho \p_\mu \theta)}{L^1_t \dot{H}^k_x}^2 \\& \qquad + 2\norm{(\p_0^l\theta)_{\{x^0=0\}}}{\dot{H}^{k+1}}^2 + 2\norm{(\p_0^{l+1}\theta)_{\{x^0=0\}}}{\dot{H}^{k}}^2.
\end{align*}
As previously, for $\abs{A} = k$ we can write:
\[
D^A \p_0^l \left(\p^\mu \rho \p_\mu \theta \right) = \sum_{\substack{B+C = A\\i+j=l}} c_{B, i} \p^\mu \p_0^i D^B \rho \p_\mu \p_0^j D^C  \theta
\]
This time we note that either $\abs{B} + i < k-\frac{n}{2}$ or $\abs{C} + j < k-\frac{n}{2}$, so that either
\[
\norm{\p^\mu \p_0^i D^B \rho (t, \cdot)}{C^0} \leq c N_{k, l-1}(t), \quad \textrm{or} \quad \norm{\p_\mu \p_0^j D^C \theta(t, \cdot)}{C^0} \leq c N_{k, l-1}(t)
\]
for some universal constant $c$.  Thus we can bound one of the factors of $\p^\mu \p_0^i D^B \rho \p_\mu \p_0^j D^C \theta$ in $C^0$ by $N_{k, l-1}(t)$ and the other we bound in $L^2$. Since $N_{k, l-1}\leq N_{k,l}$, by the induction assumption we see that we can bound $N_{k, l-1}(t)$ by some constant $c_{N_{k,l}}$ depending only on $N_{k,l}$. We thus have
\[
\norm{\p_0^l(\p^\mu \rho \p_\mu \theta)}{L^1_t \dot{H}^k_x} \leq c_{N_{k, l}} T \sup_{t \in (0, T)} N_{k,l}(t).
\]
We can again make a similar argument with \eq{rhothetaeq12}: act with $\p_0^l$ and then use Lemma \ref{EnEst}, controlling the nonlinear terms as above to obtain:
\[
\sup_{t \in (0, T)} N_{k,l}(t) \leq \sqrt{2}N_{k,l} + c_{N_{k}} T  \sup_{t \in (0, T)} N_{k,l}(t).
\]
The remainder of the proof follows precisely as the end of the proof of the previous Theorem.
\end{proof}
To complete the proof of Proposition \ref{Prop2}, we first note that if $ \theta_0, \theta_{1}$ and $M\rho_j, \rho_{j+1}$ for $0\leq j \leq l$ are smooth functions whose derivatives are uniformly bounded in $M$, then by making use of \eq{rhothetaeq22} and iteratively differentiating with respect to $\p_0$ we can show that $\theta_j$ is smooth with uniformly bounded derivatives for $j \leq l+2$. In particular $N_{k,l}$ is uniformly bounded for any $k$. Invoking Theorem \ref{tThm}, together with Morrey's inequality to control pointwise norms of the solution in terms of $N_{k,l}(t)$ we have established the result.

\subsection{Proof of Lemma \ref{ICLem}}
The full result of Lemma \ref{ICLem} follows from Theorem \ref{thm:EFTexp}, used iteratively with the equation for $\theta$ to remove all $x^0$ derivatives of $\theta$ of order greater than one. We choose instead to establish Lemma \ref{ICLem} for $l\leq 3$ explicitly, and leave the general case for the reader. We first need to find $\rho_l, \theta_l$ in terms of $\rho_0, \rho_1, \theta_0, \theta_1$ by using the equations satisfied by $\rho, \theta$. For instance, \eq{rhothetaeq1} may be rearranged to give:
\ben{rhott}
\p_0^2 \rho = \p_i\p_i \rho -M^2 \rho + \p_\mu \rho \p^\mu \rho - \p_\mu \theta \p^\mu \theta - M^2 \rho^2 W(\rho)
\een
which on restricting to $\{x^0=0\}$ gives:
\ben{rho2}
\rho_2 = \p_i\p_i \rho_0 -M^2 \rho_0 + \p_i \rho_0 \p_i \rho_0 - \rho_1^2 - \p_i \theta_0 \p_i \theta_0 + \theta_1^2 - M^2 \rho_0^2 W(\rho_0).
\een
Similarly, we find:
\ben{theta2}
\theta_2 =  \p_i\p_i \theta_0 -2 \rho_1 \theta_1 + 2 \p_i \rho_0 \p_i \theta_0.
\een
We can also compute higher derivatives. Differentiating \eq{rhott} with respect to $x^0$ gives:
\[
\p_0^3 \rho = \p_i\p_i \p_0\rho -M^2 \p_0\rho + 2 \p_\mu \p_0\rho \p^\mu \rho - 2 \p_\mu \p_0\theta \p^\mu \theta - M^2 \p_0\rho[ 2\rho W(\rho) + \rho^2 W'(\rho)]
\]
restricting to $\{x^0=0\}$ gives:
\begin{align*}
\rho_3 &=  \p_i\p_i \rho_1 - M^2 \rho_1 -2\rho_2\rho_1 + 2 \p_i \rho_1 \p_i \rho_0+2\theta_2\theta_1 - 2 \p_i \theta_1 \p_i \theta_0\\&\qquad- M^2 \rho_1[ 2\rho_0 W(\rho_0) + \rho_0^2 W'(\rho_0)]
\end{align*}
and similarly
\[
\theta_3 = \p_i \p_i \theta_1 - 2 \rho_2 \theta_1 - 2 \rho_1 \theta_2 + 2 \p_i \rho_1 \p_i \theta_0+ 2 \p_i \rho_0 \p_i \theta_1.
\]
Here, we understand $\rho_2, \theta_2$ to be given in terms of $\rho_0, \rho_1, \theta_0, \theta_1$ through \eq{rho2}, \eq{theta2}. We can continue in this fashion to iteratively construct $\rho_l, \theta_l$ for all $l$.

Now, let us consider when it will occur that $M\rho_0, M\rho_1, \rho_2, \theta_0, \theta_1, \theta_2$ and their derivatives are uniformly bounded in $M$. We see that \eq{rho2} constrains $\rho_0$, while \eq{theta2} imposes no constraint.  A necessary and sufficient condition is that
\[
\rho_0 = \frac{R_2}{M^2}, \qquad \rho_1 = \frac{\tilde{R}_1}{M},
\]
where here and below $R_j$ and $\tilde{R}_j$ are smooth remainder terms whose derivatives are uniformly bounded in $M$.

Next, we consider when $M\rho_0, M\rho_1, M\rho_2, \rho_3, \theta_0, \theta_1, \theta_2, \theta_3$ and derivatives will be bounded in  $M$. We see from examining  the expressions for $\rho_2, \rho_3$ that a necessary and sufficient condition  is
\[
\rho_0 = \frac{\theta_1^2-\p_i \theta_0 \p_i \theta_0}{M^2} + \frac{R_3}{M^3}, \qquad \rho_1 = \frac{\tilde{R}_2}{M^2}
\]
On the other hand, \eq{theta2} imposes no further constraint.

Next, we consider when $M\rho_j, \rho_{j+1}, \theta_j, \theta_{j+1}$ will be bounded for $0\leq j \leq 3$. We can find an expression for $\rho_4$, and although we do not give it explicitly, we can see that $\rho_4$ is bounded if and only if $M^2 \rho_2$ is bounded. Making use of this fact, and examining also  the expressions for $\rho_2, \rho_3$ and $\theta_2$ we see that a necessary and sufficient condition is given by:
\[
\rho_0 = \frac{\theta_1^2-\p_i \theta_0 \p_i \theta_0}{M^2} + \frac{R_4}{M^4}, \qquad \rho_1 = \frac{2 \theta_1 \p_i \p_i \theta_0 - 2\p_i \theta_0 \p_i \theta_1}{M^2}+ \frac{\tilde{R_3}}{M^3}
\]
We can continue iteratively in this way to establish the result for higher $l$.

\subsection{Approximation of UV solutions by EFT solutions}

In order to show that the EFT solutions indeed approximate the true solution, we shall require the following result:
\begin{Lemma}\label{dualen} 
Suppose that $u: \mathcal{M}_T \to \R$ is a smooth function satisfying:
\[
\begin{array}{c}
\displaystyle- \Box u =\p_\mu X^\mu+ f \\[.4cm]
u|_{x^0=0} = u_0, \qquad \p_0 u|_{x^0=0} = u_1.
\end{array}
\]
Then we have the bound:
\[
\norm{u}{C^0_tL^2_x} \leq c(1+T) \left(   \norm{X}{C^0_t L^2_x} +  \norm{f}{L^1_tL^2_x} +  \norm{u_1}{L^2} + \norm{u_0}{L^2}\right)
\]
for some constant $c$. Furthermore, if $u_0, u_1, X^\mu, f$ are $O_{\infty}(M^{-l})$, then $u = O_{\infty}(M^{-l})$.
\end{Lemma}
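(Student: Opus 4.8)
The estimate cannot be read off directly from the energy inequality of Lemma~\ref{EnEst}, because the source $\p_\mu X^\mu$ is in divergence form: bounding it in $L^1_t \dot H^0_x$ would cost a derivative of $X$, whereas the statement only permits $\norm{X}{C^0_t L^2_x}$. The plan is therefore to argue by duality. Fix $t_* \in (0,T)$; since $\norm{u(t_*,\cdot)}{L^2} = \sup\{ \int_{\T^n} u(t_*,\cdot)\,\psi : \psi \in C^\infty(\T^n),\ \norm{\psi}{L^2}\leq 1 \}$, it suffices to bound $\int_{\T^n} u(t_*,\cdot)\,\psi$ for a fixed such $\psi$. For this I would let $v$ solve the homogeneous equation $-\Box v = 0$ on $(0,t_*)\times\T^n$ \emph{backwards} from $x^0=t_*$, with data $v|_{x^0=t_*}=0$ and $\p_0 v|_{x^0=t_*}=-\psi$; this is solvable, and enjoys the same energy estimates as the forward problem (Lemma~\ref{EnEst} after time reversal).

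Multiplying $-\Box u = \p_\mu X^\mu + f$ by $v$, subtracting $u\,(-\Box v)=0$, integrating over $(0,t_*)\times\T^n$ and integrating by parts --- using self-adjointness of $-\Box$, the periodicity in $x^i$ to discard spatial boundary terms, and the data for $v$ at $x^0=t_*$ --- should yield
\[
\int_{\T^n} u(t_*,\cdot)\,\psi = \int_{\{x^0=0\}} \left( v\,u_1 - u_0\,\p_0 v - v\,X^0 \right) + \int_0^{t_*}\!\!\int_{\T^n} \left( v\,f - \p_\mu v\,X^\mu \right).
\]
The crucial feature is that integrating the divergence term by parts has moved the derivative off $X$ onto $v$, at the cost only of a temporal boundary term $v\,X^0$ at $x^0=0$. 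It then remains to bound the right-hand side. The energy estimate for $v$ with $k=0$, $\mu=0$ gives $\norm{\p_\mu v}{C^0_t L^2_x} \leq c\norm{\psi}{L^2}$, and integrating $v$ in time from $t_*$ (where $v=0$) gives $\norm{v}{C^0_t L^2_x}\leq c\,t_*\,\norm{\psi}{L^2}\leq cT\norm{\psi}{L^2}$. Feeding these bounds into the identity via Cauchy--Schwarz, then taking the supremum over $\psi$ and over $t_*\in(0,T)$, produces the claimed inequality with factor $(1+T)$ (the undifferentiated $u_0$ term supplying the ``$1$'', the other terms the ``$T$''). The step needing the most care is the bookkeeping of the temporal boundary terms generated when $\p_0 X^0$ is integrated by parts, so that no derivative of $X$ survives; this is the heart of the matter.

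For the second assertion I would exploit that $-\Box$ commutes with the spatial derivatives $D^A$. Applying $D^A$ to the equation shows that $D^A u$ solves a problem of the same type, with data $D^A u_0, D^A u_1$ and sources $\p_\mu(D^A X^\mu)$, $D^A f$. The hypothesis $X^\mu,f,u_0,u_1 = O_\infty(M^{-l})$ means all their spatial derivatives have sup norm $O(M^{-l})$ on $\mathcal{M}_T$; since $\T^n$ has finite measure and $T$ is fixed, these convert to $\norm{D^A X}{C^0_t L^2_x},\ \norm{D^A f}{L^1_t L^2_x},\ \norm{D^A u_0}{L^2},\ \norm{D^A u_1}{L^2} = O(M^{-l})$. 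The first part of the lemma applied to $D^A u$ then gives $\norm{D^A u}{C^0_t L^2_x}=O(M^{-l})$ for every $A$. Finally, for any fixed multi-index $B$ I would choose $k>\tfrac n2+\abs{B}$ and apply Morrey's inequality \eq{Morrey} on each time slice, $\norm{u(t,\cdot)}{C^{\abs{B}}}\leq c\,\norm{u(t,\cdot)}{H^k}$, the right side being controlled by the $L^2$ norms of $D^A u$ with $\abs{A}\leq\abs{B}+k$, all $O(M^{-l})$ uniformly in $t$. Hence $\sup_{\mathcal{M}_T}\abs{D^B u}=O(M^{-l})$ for every $B$, i.e. $u=O_\infty(M^{-l})$.
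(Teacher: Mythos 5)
Your proof is correct, and it is a duality argument in the same spirit as the paper's, but the dualization is implemented differently enough to be worth comparing. The shared heart of both proofs is identical: pair $u$ with a solution of a backward wave equation, integrate by parts so that the derivative in $\p_\mu X^\mu$ lands on the dual solution rather than on $X$ (leaving only the boundary term $X^0$ at $x^0=0$), control the dual solution's derivatives by the time-reversed energy estimate of Lemma~\ref{EnEst}, and control its undifferentiated $L^2$ norm by integrating in time from its vanishing data at the final slice --- which is exactly where the factor $(1+T)$ arises in both arguments. The difference is in how the $C^0_t L^2_x$ norm of $u$ is extracted. The paper tests $u$ against an arbitrary spacetime weight $\eta \in C^\infty_c(\mathcal{M}_T)$ and solves the \emph{inhomogeneous} backward problem $-\Box \chi = \eta$ with zero data at $x^0 = T$ (recycling the identity \eq{WEcomp} already derived for the averaging arguments of \S\ref{Average}); this yields $\abs{\int_{\mathcal{M}_T} u\eta} \leq C \norm{\eta}{L^1_t L^2_x}$, and then a further step is needed to convert this weak bound into the slice-wise norm: a density argument plus the choice $\eta = \mathds{1}_{A_\epsilon} u / \left( \abs{A_\epsilon} \norm{u(\cdot,t)}{L^2}\right)$ supported on the set $A_\epsilon$ of near-maximizing times. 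You instead dualize on a fixed slice $\{x^0 = t_*\}$, solving the \emph{homogeneous} backward problem with data $(0,-\psi)$, which produces $\norm{u(t_*,\cdot)}{L^2}$ directly as a supremum over $\psi$ and makes the $A_\epsilon$/density step unnecessary --- slightly more self-contained for the stated estimate. What the paper's formulation buys in exchange is that its intermediate inequality against arbitrary $\eta \in L^1_t L^2_x$ is precisely the statement reused elsewhere (it is the mechanism behind the averaged estimates), so the two proofs are optimized for different purposes. Your treatment of the second assertion --- commuting $D^A$ through the equation, converting sup bounds on $\T^n$ to $L^2$ bounds, and applying Morrey's inequality \eq{Morrey} on time slices --- coincides with the paper's.
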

\begin{proof}
Let $\eta \in C^\infty_c(\mathcal{M}_{T})$, and let $\chi$ be given by \eq{chidef}. By \eq{WEcomp} we have:
\begin{align*}
\int_{\mathcal{M}_T} u \eta &= \int_{\mathcal{M}_T} \left( (\p_\mu X^\mu) \chi + f \chi\right) + \int_{\{x^0=0\}}  \left[\ u_1 \chi - u_0 (\p_0 \chi)  \right] \\
&=\int_{\mathcal{M}_T} \left(  -X^\mu \p_\mu \chi + f \chi\right) + \int_{\{x^0=0\}}  \left[\ (u_1+X^0) \chi - u_0 (\p_0 \chi)  \right]
\end{align*}
Noting that
\[
\frac{d}{dt} \int_{x^0 = t} \chi^2 = \int_{x^0 = t} 2 \chi \p_0 \chi \quad \implies \quad \norm{\chi}{C^0_tL^2_x} \leq T  \norm{\chi}{\dot{C}^1_tL^2_x}
\]
and invoking Lemma \ref{EnEst}, we deduce:
\[
\abs{\int_{\mathcal{M}_T} u \eta} \leq c(1+T) \left(   \norm{X}{C^0_t L^2_x} +  \norm{f}{L^1_tL^2_x} +  \norm{u_1}{L^2} + \norm{u_0}{L^2}\right)) \norm{\eta}{L^1_t L^2_x},
\]
for a constant $c$. By a density argument this holds for all $\eta \in L^1((0, T);  L^2(\T^n))$.

Now, let $m = \sup_{t \in (0, T)} \norm{u (\cdot, t)}{L^2}$. If $m=0$ there is nothing to show, otherwise for $0<\epsilon<m$ set:
\[
A_{\epsilon} := \{ t: \norm{u (\cdot, t)}{L^2}> m-\epsilon \}.
\]
By the continuity of $u$ we have $|A_\epsilon|>0$. Let $\eta(x,t) = \frac{\mathds{1}_{A_\epsilon}(t)}{|A_{\epsilon}|} \frac{u(x,t)}{\norm{u (\cdot, t)}{L^2}}$. Then:
\[
\int_{\mathcal{M}_{T}} u \eta = \frac{1}{|A_{\epsilon}|} \int_{A_\epsilon}  \left(\int_{\T^n}  u^2\right)^{\frac{1}{2}} >  m-\epsilon
\]
and
\[
\norm{\eta}{L^1_t L^2_x} = \int_{[0,T]}  \left(\frac{\mathds{1}_{A_\epsilon}}{|A_{\epsilon}|} \frac{1}{\norm{u (\cdot, t)}{L^2}} \left( \int_{\T^n}  u^2 \right)^{\frac{1}{2}}\right) =1
\]
Thus we have:
\[
m-\epsilon \leq c(1+T) \left(   \norm{X}{C^0_t L^2_x} +  \norm{f}{L^1_tL^2_x} +  \norm{u_1}{L^2} + \norm{u_0}{L^2}\right),
\]
since $\epsilon$ was arbitrary, we conclude that 
\[m=\norm{u}{C^0_t L^2_x} \leq c(1+T) \left(   \norm{X}{C^0_t L^2_x} +  \norm{f}{L^1_tL^2_x} +  \norm{u_1}{L^2} + \norm{u_0}{L^2}\right).
\]
The final part of the result follows by noting that for any multi-index $A$, $D^A u$ satisfies 
\[
\begin{array}{c}
\displaystyle- \Box D^Au =\p_\mu (D^AX^\mu)+ D^Af \\[.4cm]
D^Au|_{x^0=0} = D^Au_0, \qquad \p_0 D^Au|_{x^0=0} = D^Au_1,
\end{array}
\]
so applying the first part of the Lemma together with Morrey's inequality completes the proof.
\end{proof}

%
Before we prove the general case of Theorem  \ref{Thm:EFTApr}, we prove the special case of $l=2$ to indicate how the proof proceeds in the general case.
\begin{Lemma}
Suppose the conclusions of Theorem \ref{thm:EFTexp} hold with $l=2$. Let $\tilde{\theta}:\mathcal{M}_{T_*} \to \R$ be an EFT solution to order $2$.  Then for $j\leq 2$:
\[
\p_0^j \theta=\p_0^j \tilde{\theta} + O_{\infty}(M^{-3+j}).
\]
\end{Lemma}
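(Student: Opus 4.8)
The plan is to estimate the difference $\delta\theta := \theta-\tilde\theta$, which has vanishing Cauchy data, by exhibiting the equation it satisfies in a form to which Lemma~\ref{dualen} applies. Writing the order-$2$ EFT equation of Definition~\ref{def:EFT} for $\tilde\theta$ with $m=0$ and $S_2[\psi]=2\p_\mu\mathcal{F}_2[\psi]\,\p^\mu\psi$, subtracting it from the UV equation $-\Box\theta=2\p^\mu\rho\,\p_\mu\theta$, and adding and subtracting $M^{-2}S_2[\theta]$, I would arrive at
\[
-\Box\,\delta\theta=\underbrace{2\,\p^\mu\overline{\rho}_2\,\p_\mu\theta}_{\textrm{(I)}}+\underbrace{\frac{1}{M^2}\bigl(S_2[\theta]-S_2[\tilde\theta]\bigr)}_{\textrm{(II)}}-\frac{\tilde{R}_2}{M^3},\qquad \overline{\rho}_2:=\rho-\frac{\mathcal{F}_2[\theta]}{M^2}.
\]
By Theorem~\ref{thm:EFTexp} with $l=2$ we have $\overline{\rho}_2=O_{\infty}(M^{-3})$, while $\tilde{R}_2/M^3=O_{\infty}(M^{-3})$ by the definition of an EFT solution, so the whole task is to control (I) and (II).

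The term (I) splits as $2\,\p_i\overline{\rho}_2\,\p_i\theta-2\,\p_0\overline{\rho}_2\,\p_0\theta$. The spatial piece is immediately $O_{\infty}(M^{-3})$. The time piece is the crux of the proof: although $\overline{\rho}_2$ and all of its \emph{spatial} derivatives are $O_{\infty}(M^{-3})$, the $l=2$ hypotheses only give $\p_0\overline{\rho}_2=O_{\infty}(M^{-2})$, so treating $-2\,\p_0\overline{\rho}_2\,\p_0\theta$ as a plain source would yield only the suboptimal $\delta\theta=O_{\infty}(M^{-2})$. The key manoeuvre is to integrate by parts in $x^0$,
\[
-2\,\p_0\overline{\rho}_2\,\p_0\theta=\p_\mu Y^\mu+2\,\overline{\rho}_2\,\p_0^2\theta,\qquad Y^0:=-2\,\overline{\rho}_2\,\p_0\theta,\quad Y^i:=0,
\]
trading the poorly controlled $\p_0\overline{\rho}_2$ for $\overline{\rho}_2$ itself. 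Since $\p_0\theta,\p_0^2\theta=O_{\infty}(1)$ by \eqref{Bded}, both $Y^\mu$ and the remainder $2\,\overline{\rho}_2\,\p_0^2\theta$ are $O_{\infty}(M^{-3})$, so (I) takes the desired shape $\p_\mu Y^\mu+f_{\mathrm I}$ with $Y^\mu,f_{\mathrm I}=O_{\infty}(M^{-3})$.

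For (II) I would use that $S_2[\psi]$ is polynomial in the first and second derivatives of $\psi$, so its increment is linear in $\delta\theta$: $S_2[\theta]-S_2[\tilde\theta]=\mathcal{L}[\delta\theta]:=\alpha^{\mu\nu}\p_\mu\p_\nu\delta\theta+\beta^\mu\p_\mu\delta\theta+\gamma\,\delta\theta$, with coefficients built from the uniformly bounded derivatives of $\theta$ and $\tilde\theta$, hence $O_{\infty}(1)$. I would absorb the principal part by moving $M^{-2}\alpha^{\mu\nu}\p_\mu\p_\nu\delta\theta$ to the left, forming $P_M:=-\Box-M^{-2}\alpha^{\mu\nu}\p_\mu\p_\nu$, a uniformly hyperbolic perturbation of $-\Box$ for large $M$, for which the energy estimate of Lemma~\ref{EnEst} and the duality estimate of Lemma~\ref{dualen} continue to hold with $M$-uniform constants. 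The surviving first- and zeroth-order pieces of $M^{-2}\mathcal{L}[\delta\theta]$, together with the principal part rewritten as a total derivative $M^{-2}\p_\mu(\alpha^{\mu\nu}\p_\nu\delta\theta)$, are then handled by a bootstrap: under the assumptions $\delta\theta=O_{\infty}(M^{-3})$ and $\p_0\delta\theta=O_{\infty}(M^{-2})$, each such contribution carries a spare factor $M^{-2}$ and is therefore $O_{\infty}(M^{-4})$.

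Collecting these observations, the equation becomes $P_M\,\delta\theta=\p_\mu X^\mu+f$ with $X^\mu,f=O_{\infty}(M^{-3})$ and zero Cauchy data. The duality estimate then returns $\delta\theta=O_{\infty}(M^{-3})$, while the energy estimate, seeing only the pointwise size $O_{\infty}(M^{-2})$ of the source $\p_\mu X^\mu$, returns $\p_0\delta\theta=O_{\infty}(M^{-2})$ (both with all spatial derivatives controlled via Morrey's inequality), which closes the bootstrap for $M$ large. Finally, rearranging the equation to isolate $\p_0^2\delta\theta$ and using that $\Delta\delta\theta=O_{\infty}(M^{-3})$ while (I) is $O_{\infty}(M^{-2})$ gives $\p_0^2\delta\theta=O_{\infty}(M^{-2})$, which is in particular the claimed $O_{\infty}(M^{-1})$. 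I expect the main obstacle to be precisely the time-derivative term $\p_0\overline{\rho}_2\,\p_0\theta$: its resolution hinges on recognising that it is the value of $\overline{\rho}_2$, and not of its time derivative, that is $O_{\infty}(M^{-3})$, so that only after the integration by parts is Lemma~\ref{dualen} able to convert the resulting total-derivative source into the sharp $O_{\infty}(M^{-3})$ bound. By comparison, verifying that $P_M$ remains hyperbolic and that the self-interaction (II) bootstraps away are routine.
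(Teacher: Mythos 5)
Your proposal is correct and is essentially the paper's own argument: the same decomposition of the equation for $\delta\theta=\theta-\tilde\theta$, the same key step of integrating by parts in $x^0$ to trade the poorly controlled $\p_0\overline{\rho}_2=O_{\infty}(M^{-2})$ for $\overline{\rho}_2=O_{\infty}(M^{-3})$ itself, so that the source takes the divergence form $\p_\mu X^\mu+f$ with $X^\mu,f=O_{\infty}(M^{-3})$, followed by the same final appeal to the duality estimate of Lemma~\ref{dualen}. The only deviations are organisational: the paper first establishes the weaker bounds $\delta\theta,\ \p_0\delta\theta=O_{\infty}(M^{-2})$, $\p_0^2\delta\theta=O_{\infty}(M^{-1})$ unconditionally via Lemma~\ref{EFTEnEst} (which is all the quadratic-in-$\delta\theta$ terms require, so no self-referential bootstrap assumption $\delta\theta=O_{\infty}(M^{-3})$ is needed), and your optional absorption of the principal part into the operator $P_M$ --- for which you assert but do not prove $M$-uniform energy and duality estimates --- is dispensable, since your alternative of writing it as $M^{-2}\p_\mu\left(\alpha^{\mu\nu}\p_\nu\delta\theta\right)$ minus a lower-order term keeps the operator equal to $-\Box$ and lets Lemma~\ref{dualen} apply exactly as stated.
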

\begin{proof}
We note that $\tilde{\theta}$ satisfies:
\[
\textstyle- \Box \tilde{\theta} = -\frac{2}{M^2}\p_\mu\left(\p_\nu \tilde{\theta} \p^\nu\tilde{\theta} \right)\p^\mu\tilde{\theta}   + \frac{R}{M^3}, \qquad \tilde{\theta}|_{\{x^0=0\}} = \theta_0 , \qquad \p_0\tilde{\theta}|_{\{x^0=0\}} = \theta_1,
\]
so that $\delta\theta = \theta - \tilde{\theta}$ satisfies:
\[
- \Box \delta \theta = 2\p_\mu \rho \p^\mu \theta + \frac{2}{M^2}\p_\mu\left(\p_\nu \tilde{\theta} \p^\nu\tilde{\theta} \right)\p^\mu\tilde{\theta}   - \frac{R}{M^3}
\]
with initial conditions:
\[
\delta{\theta}|_{\{x^0=0\}} = \p_0\delta{\theta}|_{\{x^0=0\}} =0.
\]

Our assumptions give us $\p_0^j \theta = O_{\infty}(1)$ for $0\leq j \leq 4$, and from the definition of an EFT solution we have a similar bound on $\tilde{\theta}$. Combining these facts with the bounds on $\rho$ given by Theorem \ref{thm:EFTexp}, we see that 
\[
- \Box \delta \theta = R'
\]
for some $R'$ satisfying $R' = O_{\infty}(M^{-2})$, $\p_0 R' = O_{\infty}(M^{-1})$. By Lemma \ref{EFTEnEst} we deduce that
\[
\p_0^2 \delta \theta = O_{\infty}(M^{-1}), \qquad \p_0 \delta \theta = O_{\infty}(M^{-2}), \qquad  \delta \theta = O_{\infty}(M^{-2}).
\]

Now, we rewrite the equation for  $\delta\theta$ as:
\begin{align*}
- \Box \delta \theta &= 2\p_\mu\left( \rho + \frac{\p_\nu {\theta} \p^\nu{\theta}}{M^2}  \right) \p^\mu \theta  - \frac{2}{M^2} \p_\mu \left( \p_\nu \delta \theta \p^\nu {\theta} +  \p_\nu  \tilde{\theta} \p^\nu \delta{\theta} \right) \p^\mu \theta \\
&\quad  -\frac{2}{M^2}\p_\mu\left(\p_\nu \tilde{\theta} \p^\nu\tilde{\theta} \right)\p^\mu\delta \theta
 - \frac{R}{M^3}
\end{align*}
with initial conditions:
\[
\delta{\theta}|_{\{x^0=0\}} = \p_0\delta{\theta}|_{\{x^0=0\}} =0.
\]
This is of the form
\[
-\Box \delta \theta = \p_\mu \tilde{X}^\mu+ f
\]
with:
\begin{align*}
\tilde{X}^\mu &= 2\left( \rho + \frac{2}{M^2} \p_\nu {\theta} \p^\nu{\theta} \right) \p^\mu \theta  - \frac{2}{M^2} \left( \p_\nu \delta \theta \p^\nu {\theta} +  \p_\nu  \tilde{\theta} \p^\nu \delta{\theta} \right) \p^\mu \theta 
\end{align*}
and
\begin{align*}
f &= -2\left( \rho + \frac{2}{M^2} \p_\nu {\theta} \p^\nu{\theta} \right) \Box \theta + \frac{2}{M^2} \left( \p_\nu \delta \theta \p^\nu {\theta} +  \p_\nu  \tilde{\theta} \p^\nu \delta{\theta} \right) \Box \theta\\
&\qquad  -\frac{2}{M^2}\p_\mu\left(\p_\nu \tilde{\theta} \p^\nu\tilde{\theta} \right)\p^\mu\delta \theta
 - \frac{R}{M^3}
\end{align*}
where now we have $f, \tilde{X}^\mu = O_\infty(M^{-3})$. Hence, applying Lemma \ref{dualen} completes the proof.
\end{proof}

We are now ready to prove the full result
\begin{T2}
Suppose that $(\rho, \theta)$ satisfy the conclusions of Theorem  \ref{thm:EFTexp} on $\mathcal{M}_T$ for some $l$, and suppose that $\tilde{\theta}$ is an EFT solution of \eq{EFT_exp} to order $l$ such that
\[
\theta|_{x^0=0}= \tilde{\theta}|_{x^0=0}, \qquad \p_0\theta|_{x^0=0}= \p_0\tilde{\theta}|_{x^0=0}.
\]
Then for any  $j\leq l$ we have
\ben{ErrorEst}
\p_0^{j}  \theta =  \p_0^j \tilde{\theta}  + O_{\infty}(M^{-l-1+j}).
\een
In particular, $\theta-\tilde{\theta} = O(M^{-l-1})$.
\end{T2}
\begin{proof}
We note that $\tilde{\theta}$ satisfies:
\[
\textstyle- \Box \tilde{\theta} =  2\p_\mu \left(\frac{\mathcal{F}_2[\tilde{\theta}]}{M^2}   +\ldots+  \frac{\mathcal{F}_{2\lfloor \frac{l}{2} \rfloor }[\tilde{\theta}]}{M^{2\lfloor \frac{l}{2} \rfloor }} \right) \p^\mu \tilde{\theta} + \frac{R}{M^{l+1}} , \qquad \tilde{\theta}|_{\{x^0=0\}} = \theta_0 , \qquad \p_0\tilde{\theta}|_{\{x^0=0\}} = \theta_1,
\]
for some remainder $R$ which is uniformly bounded along with all its derivatives. Accordingly, $\delta\theta = \theta - \tilde{\theta}$ satisfies:
\[
- \Box \delta \theta = R' :=  2\p_\mu \rho \p^\mu \theta -2 \p_\mu \left(\frac{\mathcal{F}_2[\tilde{\theta}]}{M^2}   +\ldots+  \frac{\mathcal{F}_{2\lfloor \frac{l}{2} \rfloor }[\tilde{\theta}]}{M^{2\lfloor \frac{l}{2} \rfloor }} \right) \p^\mu \tilde{\theta} - \frac{R}{M^{l+1}}
\]
with initial conditions:
\[
\delta{\theta}|_{\{x^0=0\}} = \p_0\delta{\theta}|_{\{x^0=0\}} =0.
\]
We first note that our assumptions imply $\p_0^j R' = O_\infty(M^{-2})$ for $j < l-1$ and $\p_0^{l-1} R' = O_\infty(M^{-1})$. By Lemma \ref{EFTEnEst}, we deduce that $\p_0^j \delta \theta = O_{\infty}(M^{-2})$ for $j \leq l-1$ and $\p_0^l \delta \theta = O_{\infty}(M^{-1})$. This establishes \eq{ErrorEst} for $j=l, l-1$. 

We now perform an induction backwards in $j$. Suppose that for some $J>1$ we have established that $\p_0^j \delta \theta = O_{\infty}(M^{-l-1+j})$ for $j> J$ and $\p_0^j \delta \theta = O_{\infty}(M^{-l-1+J})$ for $j \leq J$. We write the expression for $R'$ as:
\begin{align}
R' &= 2\p_\mu \left( \rho -  \frac{\mathcal{F}_2[{\theta}]}{M^2}   -\ldots-  \frac{\mathcal{F}_{2\lfloor \frac{l+1-J}{2} \rfloor }[{\theta}]}{M^{2\lfloor \frac{l+1-J}{2} \rfloor }} \right)\p^\mu \theta \nonumber \\
&\qquad + 2\p_\mu \left(  \frac{\mathcal{F}_2[{\theta}]}{M^2}   +\ldots+  \frac{\mathcal{F}_{2\lfloor \frac{l+1-J}{2} \rfloor }[{\theta}]}{M^{2\lfloor \frac{l+1-J}{2} \rfloor }} \right)\p^\mu \delta \theta 
\nonumber \\
&\qquad + 2\p_\mu \left(  \frac{\mathcal{F}_2[{\theta}]- \mathcal{F}_2[\tilde{\theta}]}{M^2}   +\ldots+  \frac{\mathcal{F}_{2\lfloor \frac{l+1-J}{2} \rfloor }[{\theta}]- \mathcal{F}_{2\lfloor \frac{l+1-J}{2} \rfloor }[\tilde{\theta}]}{M^{2\lfloor \frac{l+1-J}{2} \rfloor }} \right)\p^\mu \tilde{ \theta}
\nonumber\\&\qquad -2 \p_\mu \left(\frac{\mathcal{F}_{2\lfloor \frac{l+1-J}{2} \rfloor +2}[\tilde{\theta}]}{M^{2\lfloor \frac{l+1-J}{2} \rfloor +2}}   +\ldots+  \frac{\mathcal{F}_{2\lfloor \frac{l}{2} \rfloor }[\tilde{\theta}]}{M^{2\lfloor \frac{l}{2} \rfloor }} \right) \p^\mu \tilde{\theta} - \frac{2R}{M^{l+1}} \nonumber
\end{align}
We split $R'$ into the terms on each of the four lines above as $R' = R_1' + R_2' + R_3' + R_4'$. We claim that $\p_0^j R' = O_\infty(M^{-l-2+J})$ for $j \leq J-2$. If this is true, then by Lemma \ref{EFTEnEst} we deduce that $\p_0^j \delta \theta = O_{\infty}(M^{-l-1+j})$ for $j> J-1$ and $\p_0^j \delta \theta = O_{\infty}(M^{-l-2+J})$ for $j \leq J-1$, so by induction we will have established \eq{ErrorEst} for $1\leq j \leq l$ and $\delta \theta = O_\infty(M^{-l})$.

To establish the claim, we note that Theorem \ref{thm:EFTexp} immediately establishes that $\p_0^j R'_1= O_\infty(M^{-l-2+J})$ for $j \leq J-2$. For the second term, we note that by our induction assumption $\p_0^j \p_\mu \delta \theta = O_{\infty}(M^{-l-1+J})$ for $j\leq J-2$. Moreover, $\p_0^j \p_\mu \mathcal{F}_{2k}[\theta]$ contains at worst $\p_0^{2k+j} \theta$, so since $2\lfloor \frac{l+1-J}{2} \rfloor +J-2 \leq l$ we can bound the first factor by $M^{-2}$, and hence we have $\p_0^j R'_2= O_\infty(M^{-l-2+J})$ for $j \leq J-2$. To bound the third term, we note that we can factor
\[
\mathcal{F}_{2k}[\theta] - \mathcal{F}_{2k}[\tilde{\theta}] = \sum_{j\leq 2k-1} (\p_0^j \delta \theta )P_{jk}[\theta, \tilde{\theta}]
\]
where $P_{jk}[\theta, \tilde{\theta}]$ is a polynomial in $\theta, \tilde{\theta}$ and their derivatives up to order $2k-1$. For $j \leq J-2$ and $J-2 + 2k \leq l$, by the induction assumption $\p_0^j(\mathcal{F}_{2k}[\theta] - \mathcal{F}_{2k}[\tilde{\theta}] ) = O_{\infty}(M^{-l-3+J+2k})$. For $k \leq 2\lfloor \frac{l+1-J}{2}\rfloor$, we have  $J-2 + 2k \leq l$, so we conclude that $\p_0^j R'_3= O_\infty(M^{-l-2+J})$ for $j \leq J-2$. Finally, we manifestly have that $\p_0^j R'_4= O_\infty(M^{-l-2+J})$ for $j \leq J-2$ by the definition of the EFT solution.

To complete the proof, we note that we may write:
\begin{align}
R' &= 2\p_\mu \left( \rho -  \frac{\mathcal{F}_2[{\theta}]}{M^2}   -\ldots-  \frac{\mathcal{F}_{2\lfloor \frac{l}{2} \rfloor }[{\theta}]}{M^{2\lfloor \frac{l}{2} \rfloor }} \right)\p^\mu \theta \nonumber \\
&\qquad + 2\p_\mu \left(  \frac{\mathcal{F}_2[{\theta}]}{M^2}   +\ldots+  \frac{\mathcal{F}_{2\lfloor \frac{l}{2} \rfloor }[{\theta}]}{M^{2\lfloor \frac{l}{2} \rfloor }} \right)\p^\mu \delta \theta 
\nonumber \\
&\qquad + 2\p_\mu \left(  \frac{\mathcal{F}_2[{\theta}]- \mathcal{F}_2[\tilde{\theta}]}{M^2}   +\ldots+  \frac{\mathcal{F}_{2\lfloor \frac{l}{2} \rfloor }[{\theta}]- \mathcal{F}_{2\lfloor \frac{l}{2} \rfloor }[\tilde{\theta}]}{M^{2\lfloor \frac{l}{2} \rfloor }} \right)\p^\mu \tilde{ \theta}
\nonumber- \frac{2R}{M^{l+1}} \nonumber \\
&= \p_\mu X^\mu + \tilde{R} \nonumber
\end{align}
where
\begin{align*}
X^\mu &= 2\left( \rho -  \frac{\mathcal{F}_2[{\theta}]}{M^2}   -\ldots-  \frac{\mathcal{F}_{2\lfloor \frac{l}{2} \rfloor }[{\theta}]}{M^{2\lfloor \frac{l}{2} \rfloor }} \right)\p^\mu \theta \\
& \qquad + 2 \left(  \frac{\mathcal{F}_2[{\theta}]- \mathcal{F}_2[\tilde{\theta}]}{M^2}   +\ldots+  \frac{\mathcal{F}_{2\lfloor \frac{l}{2} \rfloor }[{\theta}]- \mathcal{F}_{2\lfloor \frac{l}{2} \rfloor }[\tilde{\theta}]}{M^{2\lfloor \frac{l}{2} \rfloor }} \right)\p^\mu \tilde{ \theta}
\end{align*}
and
\begin{align*}
\tilde{R} &= -2 \left( \rho -  \frac{\mathcal{F}_2[{\theta}]}{M^2}   -\ldots-  \frac{\mathcal{F}_{2\lfloor \frac{l}{2} \rfloor }[{\theta}]}{M^{2\lfloor \frac{l}{2} \rfloor }} \right)\Box \theta \nonumber \\
&\qquad + 2\p_\mu \left(  \frac{\mathcal{F}_2[{\theta}]}{M^2}   +\ldots+  \frac{\mathcal{F}_{2\lfloor \frac{l}{2} \rfloor }[{\theta}]}{M^{2\lfloor \frac{l}{2} \rfloor }} \right)\p^\mu \delta \theta 
\\
&\qquad - 2 \left(  \frac{\mathcal{F}_2[{\theta}]- \mathcal{F}_2[\tilde{\theta}]}{M^2}   +\ldots+  \frac{\mathcal{F}_{2\lfloor \frac{l}{2} \rfloor }[{\theta}]- \mathcal{F}_{2\lfloor \frac{l}{2} \rfloor }[\tilde{\theta}]}{M^{2\lfloor \frac{l}{2} \rfloor }} \right)\Box \tilde{ \theta}
\nonumber- \frac{2R}{M^{l+1}} \nonumber 
\end{align*}
By arguing as we did above when bounding $R_3'$, we can see that $f, X^\mu = O_{\infty}(M^{-l-1})$, so applying Lemma \ref{dualen} completes the proof.
\end{proof}

\section{Long timescales}\label{LTS}

In this section we will work on a time interval $[0, T)$, where $T\leq C M^\lambda$ for some  $0\leq \lambda <2$.  As we stated above, we do not assert that in general a solution to the UV theory will exist on such a time interval, but it may be that for particular choices of initial data a long-time solution may exist.  We wish to investigate the effectiveness of the EFT in approximating the UV solution on such a time interval. 

For simplicity, we work at an order involving only the first term in the EFT expansion, which has the advantage that truncating at this order gives rise to a quasilinear wave equation which is locally well posed. We \emph{assume} that a solution $(\rho, \theta)$ of the UV equation of motion exists on the interval $[0,T)$, satisfying the conditions of Theorem \ref{thm:EFTexp} to order 3. We recall that an EFT solution to order 3 is a function $\tilde{\theta}$ which is bounded, together with all its derivatives as $M \to \infty$ and satisfies:
\begin{align*}
&- \Box \tilde{\theta}  =  \frac{2}{M^2} \p^\mu (- \p_\mu \tilde{\theta}  \cdot \p^\mu \tilde{\theta} ) \p_\mu \tilde{\theta} + \frac{R_3}{M^4} \\
&\tilde{\theta}|_{\{x^0=0\}} = \theta|_{\{x^0=0\}}:= \theta_0  , \qquad \p_0 \tilde{\theta}|_{\{x^0=0\}} =\p_0\theta|_{\{x^0=0\}} :=\theta_1,
\end{align*}
for some $R_3$ which is bounded, together with all its derivatives as $M \to \infty$. 

We first assert that such a $\tilde{\theta}$ always exists. Returning to the proof of existence for EFT solutions in \S\ref{pf:EFT_sol}, we see that writing $\tilde{\theta}$ as a formal expansion in $1/M^2$ requires us to solve an iterated system of linear equations, each on the time interval $[0,T)$. It is well known that over long timescales secular growth may cause small perturbations to the source terms of a linear equation to have higher order effects in the solution. At $O(1)$ in the expansion we don't see secular growth, but the term of $O(M^{-2})$ would solve a linear equation whose right-hand side is at best bounded in time and so we'd expect the $O(M^{-2})$ term to grow like $T$. At the next order, the $O(M^{-4})$ term will be sourced by a term potentially growing in time like $T$, so we would expect the $O(M^{-4})$ term to grow like $T^2$. Continuing, we would expect the $O(M^{-2k})$ term to grow like $T^k$. To account for this, rather than expanding in $1/M^2$, we will instead expand in $T/M^2$. This will have the advantage that the terms in our expansion are uniformly bounded in $T$, since the secular growth is accounted for in the expansion coefficient.

We thus make a formal expansion of $\tilde{\theta}$ as:
\[
\tilde{\theta} = \tilde{\theta}^0 + \frac{T}{M^2} \tilde{\theta}^2 + \frac{T^2}{M^4} \tilde{\theta}^4 + \ldots + \frac{T^m}{M^{2m}} \tilde{\theta}^{2 m},
\]
where $m = \left \lceil \frac{2}{2-\lambda} \right \rceil$. Inserting this expansion into 
\[
- \Box \tilde{\theta} = \frac{2}{M^2} \p^\mu (- \p_\mu \tilde{\theta}  \cdot \p^\mu \tilde{\theta} ) \p_\mu \tilde{\theta} + \frac{T^m}{M^{2m+2}}R,
\]
we can equate powers of $M$ to obtain a sequence of linear equations:
\[
\begin{array}{ccccc}
- \Box \tilde{\theta}^0 = 0,& \quad &\tilde{\theta}^0|_{\{x^0=0\}} = \theta_0,& \quad &\p_0\tilde{\theta}^0|_{\{x^0=0\}} = \theta_1, \\
- T \Box \tilde{\theta}^2 = -2\p_\mu (\p_\nu \tilde{\theta}^0\p^\nu \tilde{\theta}^0) \p^\mu \tilde{\theta}^0, & \quad &\tilde{\theta}^2|_{\{x^0=0\}} = 0,& \quad &\p_0\tilde{\theta}^2|_{\{x^0=0\}} = 0, \\
\vdots&&\vdots &&\vdots \\
-T \Box \tilde{\theta}^{2m} = G_{2m}[\tilde{\theta}^0, \ldots, \tilde{\theta}^{2m-2}; T] & \quad &\tilde{\theta}^{2m}|_{\{x^0=0\}} = 0,& \quad &\p_0\tilde{\theta}^{2m}|_{\{x^0=0\}} = 0,
 \\
\end{array}
\]
and
\[
R = G_{2m+2}[\tilde{\theta}^0, \ldots, \tilde{\theta}^{2m}; T]
\]
Here, $G_{2j}[\tilde{\theta}^0, \ldots, \tilde{\theta}^{2j-2}; T]$ is a polynomial in derivatives of $\tilde{\theta}^0, \ldots, \tilde{\theta}^{2j-2}$, with coeffcients which are bounded uniformly in $T$. We can solve these equations sequentially on $\mathcal{M}_{T_*}$. At each stage we are solving a linear equation with smooth initial data and a smooth right-hand side, which is independent of $M$ and bounded in $T$. By commuting the equation with derivatives and applying Lemma \ref{EnEst}, we can bound $\tilde{\theta}^{2j}$ and its derivatives uniformly in $T$ and $M$, recalling that:
\[
\norm{f}{L^1_t \dot{H}^k_x} \leq T \norm{f}{C^0_t \dot{H}^k_x}.
\]
We finally conclude that we can bound $R$ and all its derivatives uniformly. Since $m (2- \lambda) \geq 2$, we see that $R_3 = M^{2-2m} T^{m}R$ satisfies the necessary boundedness conditions. 

Notice that if $\lambda$ is close to $2$ it is necessary to consider an expansion with $m \gg 3$ terms in order to construct a solution to the EFT to order $3$. We note that despite our expansion apparently including terms which are $O(M^{-2m})$, these terms do not `see' any higher order corrections to the EFT expansion. If we were to include the next correction in the EFT expansion in our equation, at highest order it would modify the equation for $-T \Box \tilde{\theta}^4$ by the addition of a term of $O(T^{-1})$, so even accounting for secular growth the effects of the $O(M^{-4})$ term in the EFT expansion would remain subleading.

In practice, it may be that there are more efficient ways to construct an EFT solution. For example, we could consider solving the truncated EFT equation 
\begin{align*}
&- \Box \tilde{\theta}  =  \frac{2}{M^2} \p^\mu (- \p_\mu \tilde{\theta}  \cdot \p^\mu \tilde{\theta} ) \p_\mu \tilde{\theta}  \\
&\tilde{\theta}|_{\{x^0=0\}}= \theta_1 , \qquad  \p_0 \tilde{\theta}|_{\{x^0=0\}} = \theta_1.
\end{align*}
i.e. equation \eqref{nonlinear_wave}. If this results in a solution $\tilde{\theta}$ which is bounded on $[0,T)$, together with its derivatives as $M \to \infty$, then $\tilde{\theta}$ will be an EFT solution to order $3$ (with $R_3 = 0$). A priori, however, it is not clear without further investigation whether a particular choice of initial data will lead to a uniformly bounded solution on such a time interval.

We conclude our discussion of long time intervals with the proof of Lemma \ref{LongTLemma}, which establishes that an EFT solution will approximate the UV solution, but only for a restricted range of $\lambda$. 
\begin{proof}[Proof of Lemma \ref{LongTLemma}]
We note that $\tilde{\theta}$ satisfies:
\[
\textstyle- \Box \tilde{\theta} = -\frac{2}{M^2}\p_\mu\left(\p_\nu \tilde{\theta} \p^\nu\tilde{\theta} \right)\p^\mu\tilde{\theta}   + \frac{R_3}{M^4}, \qquad \tilde{\theta}|_{\{x^0=0\}} = \theta_0 , \qquad \p_0\tilde{\theta}|_{\{x^0=0\}} = \theta_1,
\]
so that $\delta\theta = \theta - \tilde{\theta}$ satisfies:
\begin{align*}
- \Box \delta \theta &= 2\p_\mu\left( \rho + \frac{1}{M^2} \p_\nu {\theta} \p^\nu{\theta} \right) \p^\mu \theta  - \frac{2}{M^2} \p_\mu \left( \p_\nu \delta \theta \p^\nu {\theta} +  \p_\nu  \tilde{\theta} \p^\nu \delta{\theta} \right) \p^\mu \theta \\
&\quad  -\frac{2}{M^2}\p_\mu\left(\p_\nu \tilde{\theta} \p^\nu\tilde{\theta} \right)\p^\mu\delta \theta
 - \frac{R_3}{M^4}
\end{align*}
with initial conditions:
\[
\delta{\theta}|_{\{x^0=0\}} = \p_0\delta{\theta}|_{\{x^0=0\}} =0.
\]
The bounds assumed on the UV solution give us uniform control over $\p_\mu\p_\nu \theta$, and from the definition of an EFT solution we have a similar bound $\p_\mu\p_\nu \tilde{\theta}$. Combining these facts with the expansion for $\rho$ in the conclusion of Theorem \ref{thm:EFTexp}, we see
\[
- \Box \delta \theta = \frac{R'}{M^2}
\]
for some $R'$ satisfying $\sup_{\mathcal{M}_{T}} \abs{R'} \leq C$ where $C$ is independent of $M$ and may increase from line to line. This implies
\[
\norm{R'}{L^1_tL^2_x} \leq CT,
\]
so by Lemma \ref{EnEst} we deduce:
\[
\norm{\p_\mu \delta \theta}{C^0_tL^2_x} \leq C \frac{T}{M^2}.
\]

Returning to the original equation for $-\Box \delta \theta$, we observe that it may be written
\[
-\Box \delta \theta = \p_\mu \tilde{X}^\mu+ f
\]
with:
\begin{align*}
\tilde{X}^\mu &= 2\left( \rho + \frac{1}{M^2} \p_\nu {\theta} \p^\nu{\theta} \right) \p^\mu \theta  - \frac{2}{M^2} \left( \p_\nu \delta \theta \p^\nu {\theta} +  \p_\nu  \tilde{\theta} \p^\nu \delta{\theta} \right) \p^\mu \theta 
\end{align*}
and
\begin{align*}
f &= -2\left( \rho + \frac{1}{M^2} \p_\nu {\theta} \p^\nu{\theta} \right) \Box \theta + \frac{2}{M^2} \left( \p_\nu \delta \theta \p^\nu {\theta} +  \p_\nu  \tilde{\theta} \p^\nu \delta{\theta} \right) \Box \theta\\
&\qquad  -\frac{2}{M^2}\p_\mu\left(\p_\nu \tilde{\theta} \p^\nu\tilde{\theta} \right)\p^\mu\delta \theta
 - \frac{R_3}{M^4}
\end{align*}

where
\[
\norm{f}{L^1_tL^2_x}+  \norm{\tilde{X}}{C^0_tL^2_x} \leq \frac{C}{M^4}T^2.
\]
Hence, applying Lemma \ref{dualen} we deduce that
\[
\norm{\theta - \tilde{\theta}}{C^0_t L^2_x} \leq \frac{c T^3}{M^4}
\]
Commuting with spatial derivatives, and recalling that this does not affect behaviour in $M$, we finally conclude
\[
\theta= \tilde{\theta} + O_\infty \left( \frac{T^3}{M^4} \right).\qedhere
\]
\end{proof}

We remark that for $0 \leq \lambda < 2$ the method of expansion in $T/M^2$ will clearly permit us to construct EFT solutions to order $l$, provided that we expand in $T/M^2$ to sufficiently high order $m$. We claim (without proof) that if the UV solution is moreover assumed to satisfy the conclusions of Theorem \ref{thm:EFTexp} to order $l$, we can improve the estimate above to give
\[
 \norm{\theta-\tilde{\theta}}{C^0_t L^2_x}  \leq \frac{cT^{\lfloor \frac{l}{2}\rfloor+2}}{M^{l+1}},
\]
so that for any $0\leq \lambda < 2$  the EFT will approximate the UV theory in $\mathcal{M}_T$ for $l$ sufficiently large. This implies a modified uniqueness statement for EFT solutions, where two EFT solutions to order $l$ differ by $O\left( \frac{T^{\lfloor \frac{l}{2}\rfloor+1}}{M^{l+1}} \right)$.

\section{EFT and averaging} \label{Average}
\subsection{Averaging for linear equations}
Before we consider the proof of Theorem \ref{Prop3}, we first show how averaging methods can be applied to approximate solutions of linear equations in two situations relevant to our problem -- that of a Klein--Gordon equation whose mass is becoming large, and that of a wave equation with an oscillatory forcing term.
\subsubsection{The Klein--Gordon equation in the large mass limit} Let $\varrho : \mathcal{M}_T \to \R$ solve:
\[
\begin{array}{c}
\displaystyle- \Box \varrho + M^2 \varrho = F \\[.4cm]
\varrho|_{x^0=0} = \varrho_0, \qquad \p_0 \varrho|_{x^0=0} = \varrho_1
\end{array}
\]
for $\varrho_0, \varrho_1 \in C^\infty(\T^n)$, $F \in C^\infty(\mathcal{M}_T)$ satisfying $M\varrho_0, \varrho_1, F = O_{\infty}(1)$ as $M \to \infty$. Under these assumptions, by Lemma \eq{EnEst} we have that $\p_0 \varrho, M\varrho= O_{\infty}(1)$. In general, we do not expect $\p_0^2 \varrho$ to be bounded as $M \to \infty$. Indeed, $\rho(x^\mu) = M^{-1} \sin(M x^0)$ solves the equation with $F = 0$, $\varrho_0 = 0$, $\varrho_1 = 1$, and for this solution we clearly have $\p_0^k \varrho \sim M^{k-1}$.

We wish to investigate $\int_{\mathcal{M}_T} \varrho\eta$ in the limit $M \to \infty$ for some $\eta \in C^\infty_c(\mathcal{M})$ which is independent of $M$. To do this, we can make use of the equation in the form:
\[
\varrho = \frac{\Box \varrho}{M^2} + \frac{F}{M^2}
\]
and integrate by parts, discarding boundary terms since $\eta$ is compactly supported, to find:
\begin{align*}
\int_{\mathcal{M}_T} \varrho\eta &= \int_{\mathcal{M}_T} \left( \frac{\Box \varrho}{M^2} + \frac{F}{M^2} \right) \eta \\
&= \frac{1}{M^2} \int_{\mathcal{M}_T} F \eta + \frac{1}{M^2} \int_{\mathcal{M}_T}  \varrho \Box\eta
\end{align*}
Now, noting that $\Box \eta \in C^\infty_c(\mathcal{M})$, we can recursively rewrite the second term and find:
\begin{align*}
\int_{\mathcal{M}_T} \varrho\eta &= \frac{1}{M^2} \int_{\mathcal{M}_T} F \eta + \frac{1}{M^4} \int_{\mathcal{M}_T}  F \Box\eta + \frac{1}{M^4} \int_{\mathcal{M}_T}  \varrho \Box^2\eta \\
&= \int_{\mathcal{M}_T} F \left( \frac{\eta}{M^2} + \frac{\Box \eta}{M^4}+\cdots + \frac{\Box^{k-1}\eta}{M^{2k}}\right) + \frac{1}{M^{2k}} \int_{\mathcal{M}_T}  \varrho \Box^{k}\eta
\end{align*}
We know that $ \rho = O_\infty(M^{-1})$, and since $\eta$ is smooth, compactly supported, and independent of $M$ we deduce:
\[
\int_{\mathcal{M}_T}  \varrho \Box^{k}\eta = O(M^{-1}).
\]
Integrating by parts we conclude:
\[
\int_{\mathcal{M}_T} \varrho\eta = \int_{\mathcal{M}_T} \left(\frac{F}{M^2} + \frac{\Box F}{M^4}+\cdots + \frac{\Box^{k-1}F}{M^{2k}}  \right) \eta + O(M^{-2k-1}).
\]
We understand this equation as telling us that any average of $\rho$, weighted by a smooth, compactly supported, function $\eta$ will coincide with the corresponding average of the function $M^{-2} F + M^{-4} \Box F+\cdots + M^{-2k}\Box^{k-1}F$ up to corrections of order $M^{-2k-1}$. It will certainly not be the case that these functions agree pointwise to this order -- the example above makes this clear.

If we introduce the space of distributions $\mathscr{D}'(\mathcal{M}_T)$, with its usual topology, we say that a family of distributions $(u_M)_{M\geq M_0}$ is $O_{\mathscr{D}'}(M^{-k})$ as $M \to \infty$ if the set $\{M^k u_M: {M\geq M_0}\}$ is bounded in $\mathscr{D}'(\mathcal{M}_T)$. What we have shown is (with the canonical identification of locally integrable functions as distributions):
\[
\varrho = \frac{F}{M^2} + \frac{\Box F}{M^4}+\cdots + \frac{\Box^{k-1}F}{M^{2k}} + O_{\mathscr{D}'}(M^{-2k-1})
\]
holds for any $k =1, 2, \ldots$. Thus we have found an asymptotic series expansion of $\varrho$ which holds in the sense of distributions, i.e.\ in the sense of local weighted averages.

\subsubsection{A wave equation with highly oscillatory source term} Let $\vartheta : \mathcal{M}_T \to \R$ solve:
\[
\begin{array}{c}
\displaystyle- \Box \vartheta = F \\[.4cm]
\vartheta|_{x^0=0} = \vartheta_0, \qquad \p_0 \vartheta|_{x^0=0} = \vartheta_1
\end{array}
\]
for $\vartheta_0, \vartheta_1 \in C^\infty(\T^n)$, $F \in C^\infty(\mathcal{M}_T)$. We again wish to investigate $\int_{\mathcal{M}_T} \vartheta\eta$ for $\eta \in C^\infty_c(\mathcal{M}_T)$. To do this, we introduce the auxiliary function $\chi$ which solves:
\ben{chidef}
\begin{array}{c}
\displaystyle- \Box \chi = \eta \\[.4cm]
\chi|_{x^0=T} = 0, \qquad \p_0 \chi|_{x^0=T} = 0.
\end{array}
\een
We note for later the standard energy estimate:
\ben{Eest}
\norm{\p \chi}{C^0_t L^2_x} \leq  c\norm{\eta}{L^1_t L^2_x}.
\een

We now compute:
\begin{align}
\int_{\mathcal{M}_T} \vartheta\eta &= -\int_{\mathcal{M}_T} \vartheta\Box \chi = -\int_{\mathcal{M}_T} \Box  \vartheta\chi + \int_{\{x^0=0\}} \left[\left(\p_0 \vartheta\right) \chi - \vartheta \left(\p_0 \chi\right)  \right] \nonumber \\
&= \int_{\mathcal{M}_T} F \chi + \int_{\{x^0=0\}} \left[\ \vartheta_1 \chi - \vartheta_0 (\p_0 \chi)  \right]. \label{WEcomp}
\end{align}
Here we have integrated by parts, using the conditions on $\chi$ at $x^0=T$ to discard future boundary terms.

Thus we have related the average of $\vartheta$, weighted against $\eta$, to the data for $\vartheta$ together with quantities computable from $\eta$. By itself, this computation does not gain us a great deal. Let us suppose, however, that $F$ is highly oscillatory in time, say:
\[
F = \kappa(x^i) \cos (M x^0),
\]
where we understand $M$ to be large and assume $\kappa$ is smooth and independent of $M$. We can integrate by parts to obtain
\begin{align*}
 \int_{\mathcal{M}_T} F \chi &=  \int_{\mathcal{M}_T} \kappa \p_0 \left(\frac{ \sin (Mx^0)}{M}\right) \chi = -\frac{1}{M} \int_{\mathcal{M}_T} \kappa \sin(M x^0) \p_0 \chi  \\
 &= \frac{1}{M} \int_{\mathcal{M}_T} \kappa\p_0 \left(\frac{ \cos (Mx^0)}{M}\right) \p_0 \chi \\
 &= - \frac{1}{M^2} \int_{\{x^0 = 0\}} \kappa \p_0 \chi -\frac{1}{M^2} \int_{\mathcal{M}_T} \kappa \cos (Mx^0) \p_0\p_0 \chi  \\
 &= - \frac{1}{M^2} \int_{\{x^0 = 0\}} \kappa \p_0 \chi -\frac{1}{M^2} \int_{\mathcal{M}_T} \kappa \cos (Mx^0) (\p_i\p_i \chi+\eta) \\
 &=   - \frac{1}{M^2} \int_{\{x^0 = 0\}} \kappa \p_0 \chi -\frac{1}{M^2} \int_{\mathcal{M}_T} \p_i \p_i\kappa \cos (Mx^0) \chi- \frac{1}{M^2} \int_{\mathcal{M}_T} \kappa \cos (Mx^0)\eta.
\end{align*}
In the second-to-last line we have made use of the equation satisfied by $\chi$. We now note two things. Firstly, since $\eta$ is compactly supported, we can continue to integrate by parts without picking up boundary terms to see
\[
\int_{\mathcal{M}_T} \kappa \cos (Mx^0)\eta = O(M^{-k})
\]
for any $k\in \N$. Secondly, we can repeat the first integration by parts of our previous computation to see that:
\begin{align*}
-\frac{1}{M^2} \int_{\mathcal{M}_T} \p_i \p_i\kappa \cos (Mx^0) \chi&=  \frac{1}{M^3} \int_{\mathcal{M}_T} \p_i \p_i\kappa \sin (Mx^0) \p_0 \chi = O(M^{-3}).
\end{align*}
Returning to \eq{WEcomp}, we have established:
\[
\int_{\mathcal{M}_T} \vartheta\eta = \int_{\{x^0=0\}} \left[\ \vartheta_1 \chi - \left(\vartheta_0 + \frac{\kappa}{M^2}\right) (\p_0 \chi)  \right]+O(M^{-3})
\]
Now in principle we have found an asymptotic expansion for $\int_{\mathcal{M}_T} \vartheta\eta$. By repeating the same arguments, we find
\[
\int_{\mathcal{M}_T} \vartheta\eta = \int_{\{x^0=0\}} \left[\ \vartheta_1 \chi - \left(\vartheta_0 + \frac{\kappa}{M^2}-  \frac{\p_i \p_i \kappa}{M^4}\right) (\p_0 \chi)  \right]+O(M^{-5})
\]
and so on. In this form, however, it's difficult to interpret the meaning of this expansion. To see this more plainly, we return to \eq{WEcomp} and interpret the identity in the opposite direction. That is, \eq{WEcomp} permits us to recast certain terms involving integration over $\{x^0 =0\}$ as integration against $\eta$ of a particular solution to the wave equation. More concretely, let $\tilde{\vartheta}: \mathcal{M}_T \to \R$ solve
\[
\begin{array}{c}
\displaystyle- \Box \tilde{\vartheta} = 0 \\[.4cm]
\tilde{\vartheta}|_{x^0=0} = \vartheta_0 + \frac{\kappa}{M^2}-  \frac{\p_i \p_i \kappa}{M^4}, \, \qquad \p_0 \tilde{\vartheta}|_{x^0=0} = \vartheta_1.
\end{array}
\]
We have shown that:
\[
\int_{\mathcal{M}_T} (\vartheta - \tilde{\vartheta}) \eta = O(M^{-5}).
\]
or equivalently:
\[
\vartheta = \tilde{\vartheta} +O_{\mathscr{D}'}(M^{-5})
\]
By continuing with our expansion, we see that we can approximate the solution of our original equation to arbitrary accuracy by a solution of the \emph{free} wave equation with corrected initial data.

\subsection{Proof of Theorem \ref{Prop3}}\label{AvgProof}

We now turn to our original problem and attempt to apply the same arguments. We are presented with two complications. Firstly, we do not have an explicit expression for $\rho$, and so we have to rely on the equation it satisfies when integrating by parts. Secondly, the equation is nonlinear and so the computations we require will be more complicated.

We assume that initial data for $\theta, \rho$ are given satisfying the conditions of Proposition \ref{Prop1} and take $T$ to be the uniform existence time established in that result. Throughout we will use freely the bounds on $\rho, \theta$ given in Proposition \ref{Prop1} in order to estimate terms. In particular, we can bound $\theta, \p_0 \theta$, $\p_0^2 \theta$, $M \rho$ and $\p_0 \rho$ uniformly in $M$, together with as many spatial derivatives of these quantities as we desire. We shall also require the following result, which loosely says that the local energy density in the $\rho$ field is constant in time:
\begin{Lemma}\label{EngLem}
Let
\[
 \varepsilon = \frac{1}{2} \left( \rho_1^2 + M^2 \rho_0^2\right),
\]
be the initial energy density of the $\rho$ field in the limit $M \to \infty$. We have:
\[
\norm{(\p_0 \rho)^2  + M^2 \rho^2 - 2\varepsilon}{L^\infty(\mathcal{M}_T)} \leq \frac{C}{M} 
\]
\end{Lemma}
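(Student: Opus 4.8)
The plan is to run a pointwise-in-space energy identity for the \emph{oscillator} that governs $\rho$ and to show that the local energy $P := (\p_0\rho)^2 + M^2\rho^2$ is driven only by quantities that integrate to $O(M^{-1})$ in time. Using $\Box = -\p_0^2 + \p_i\p_i$, I would rewrite \eq{rhothetaeq1} as $\p_0^2\rho + M^2\rho = S$ with
\[
S = \p_i\p_i\rho - (\p_0\rho)^2 + \p_i\rho\,\p_i\rho + (\p_0\theta)^2 - \p_i\theta\,\p_i\theta - M^2\rho^2 W(\rho).
\]
Multiplying by $2\p_0\rho$ gives $\p_0 P = 2\p_0\rho\, S$, so integrating in $x^0$ and using $P|_{x^0=0} = \rho_1^2 + M^2\rho_0^2 = 2\varepsilon$ reduces the claim to the uniform estimate $\int_0^{x^0} 2\p_0\rho\, S\,\mathrm{d}s = O(M^{-1})$. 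Throughout I would use the Proposition~\ref{Prop1} bounds $M\rho,\p_0\rho,\theta,\p_0\theta,\p_0^2\theta = O_\infty(1)$, which give $\rho = O(M^{-1})$, $\p_i\p_i\rho = O(M^{-1})$ and $\p_i\rho\,\p_i\rho = O(M^{-2})$.

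First I would dispose of the benign groups of terms in $S$. The terms $\p_i\p_i\rho + \p_i\rho\,\p_i\rho$ are already $O(M^{-1})$, so their contribution is $O(M^{-1})$ directly, since $\p_0\rho = O(1)$ and $T$ is fixed. The slowly varying piece $g := (\p_0\theta)^2 - \p_i\theta\,\p_i\theta$ satisfies $g,\p_0 g = O(1)$ by the bounds on $\theta$ (here $\p_0^2\theta = O_\infty(1)$ is used); integrating by parts, $\int 2\p_0\rho\, g = [2\rho g] - \int 2\rho\,\p_0 g = O(M^{-1})$ because $\rho = O(M^{-1})$. The potential term is handled exactly: since $M^2\rho^2 W(\rho) = \tfrac{M^2}{2}(e^{2\rho}-1-2\rho) = G'(\rho)$ with $G(\rho) := \tfrac{M^2}{2}\!\left(\tfrac{e^{2\rho}-1}{2} - \rho - \rho^2\right)$, we have $2\p_0\rho\, M^2\rho^2 W(\rho) = 2\p_0 G(\rho)$; and as $G(\rho) = O(M^2\rho^3) = O(M^{-1})$, this group contributes only boundary values of size $O(M^{-1})$.

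The main obstacle is the genuinely $O(1)$, non-sign-definite term $-(\p_0\rho)^2$, which produces $-2\int(\p_0\rho)^3$. Pointwise this is only $O(1)$, and there is no smallness to exploit directly; the cancellation is entirely due to the fast oscillation of $\p_0\rho$, which I do not want to invoke through an assumed sinusoidal form. The key manoeuvre is to trade the cubic term for total derivatives using the equation itself: from $\p_0[(\p_0\rho)^2\rho] = (\p_0\rho)^3 + 2\rho\,\p_0\rho\,\p_0^2\rho$ together with $\p_0^2\rho = S - M^2\rho$ and $2M^2\rho^2\p_0\rho = \tfrac{2}{3}M^2\p_0(\rho^3)$, one obtains
\[
(\p_0\rho)^3 = \p_0\!\left[(\p_0\rho)^2\rho + \tfrac{2}{3}M^2\rho^3\right] - 2\rho\,\p_0\rho\, S.
\]
Upon integration the boundary terms $(\p_0\rho)^2\rho$ and $\tfrac{2}{3}M^2\rho^3$ are both $O(M^{-1})$, and the remainder $\int 2\rho\,\p_0\rho\, S$ is $O(M^{-1})$ since $\rho = O(M^{-1})$ while $\p_0\rho, S = O(1)$; hence $\int(\p_0\rho)^3 = O(M^{-1})$.

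Combining the four groups gives $\int_0^{x^0} 2\p_0\rho\, S = O(M^{-1})$, whence $P - 2\varepsilon = O(M^{-1})$. Since every constant arises from the uniform $O_\infty$ bounds of Proposition~\ref{Prop1} and from the fixed interval length $T$, the bound is uniform in the spatial point, so it holds in $L^\infty(\mathcal{M}_T)$, which is exactly the assertion. I expect the only subtle point to be the cubic term just discussed; the remaining terms are routine once the total-derivative structure of the potential contribution is recognised.
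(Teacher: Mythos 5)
Your proof is correct and is essentially the paper's own argument: the paper posits the corrected energy $\mathcal{E} = \tfrac{1}{2}(\p_0\rho)^2 + \tfrac{1}{2}\left(M\rho + M^{-1}\p_\nu\theta\p^\nu\theta\right)^2 + M^2\rho^3 + (\p_0\rho)^2\rho$ and verifies pointwise that $\norm{\p_0\mathcal{E}}{L^\infty(\mathcal{M}_T)} \leq C/M$ before integrating in time, and your integrations by parts implicitly reconstruct exactly this quantity --- your boundary terms $2\rho g$, $2(\p_0\rho)^2\rho$ and $\tfrac{4}{3}M^2\rho^3 + 2G(\rho)$ coincide with its correction terms up to $O(M^{-2})$, with the same key cancellation of $(\p_0\rho)^3$ against $\p_0\bigl[(\p_0\rho)^2\rho\bigr]$. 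The only (cosmetic) difference is that you handle the exponential potential exactly through its antiderivative $G$, whereas the paper truncates it to $M^2\rho^3$ and estimates the quartic remainder.
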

\begin{proof}
Let 
\[
\mathcal{E} = \frac{1}{2} (\p_0 \rho)^2 + \frac{1}{2}(M \rho +M^{-1}\p_\nu \theta \p^\nu \theta)^2 + M^2 \rho^3  + (\p_0 \rho)^2 \rho.
\]
Using the equation for $\rho$, we compute:
\begin{align*}
\p_0 \mathcal{E} &= (\p_0 \rho+2\rho \p_0 \rho)  \p_0^2 \rho+  (M \p_0 \rho + 2 M^{-1}\p_\nu \p_0 \theta \p^\nu \theta)(M \rho + M^{-1}\p_\nu \theta \p^\nu \theta)\\&\qquad  + (\p_0 \rho)^3  + 3 M^2 \rho^2 \p_0 \rho \\
&= (\p_0 \rho+2\rho \p_0 \rho) \left(-M^2 \rho -(\p_0 \rho)^2 - \p_\nu \theta \p^\nu \theta - M^2 \rho^2  \right) \\
&\qquad+ (\p_0 \rho)^3 + M \p_0 \rho (M \rho + M^{-1}\p_\nu \theta \p^\nu \theta) + 3 M^2 \rho^2 \p_0 \rho \\&\qquad  + 2 M^{-1}\p_\nu \p_0 \theta \p^\nu \theta(M \rho + M^{-1}\p_\nu \theta \p^\nu \theta)\\
&\qquad + (\p_0 \rho+2\rho \p_0 \rho) \left( \p_i\p_i \rho + \p_i \rho\p_i \rho- M^2 \rho^3 \frac{e^{2\rho} - 1-2\rho - 2\rho^2}{2\rho^3} \right) \\
\end{align*}
We see that all of the terms which are not at least $O(M^{-1})$ cancel to give:
\begin{align*}
\p_0 \mathcal{E} & = -2 \rho(\p_0 \rho)^3 -2 \rho \p_0 \rho \p_\nu \theta \p^\nu \theta -2M^2 \rho^3\p_0 \rho \\
&\qquad  + 2 M^{-1}\p_\nu \p_0 \theta \p^\nu \theta(M \rho + M^{-1}\p_\nu \theta \p^\nu \theta)\\
&\qquad + (\p_0 \rho+2\rho \p_0 \rho) \left( \p_i\p_i \rho + \p_i \rho\p_i \rho- M^2 \rho^3 \frac{e^{2\rho} - 1-2\rho - 2\rho^2}{2\rho^3} \right).
\end{align*}
So that:
\[
\norm{\p_0 \mathcal{E}}{L^\infty(\mathcal{M}_T)} \leq \frac{C}{M}
\]
Integrating in time, we deduce that 
\[
\sup_{x^i \in \T^n} \abs{\mathcal{E}(x^0, x^i) - \mathcal{E}(0, x^i)}  \leq \frac{C}{M}
\]
and the result follows on noting
\[ 
\norm{\mathcal{E} - \frac{1}{2} (\p_0 \rho)^2- \frac{1}{2} M^2 \rho^2}{L^\infty(\mathcal{M}_T)}  \leq \frac{C}{M}.\qedhere
\]
\end{proof}

Now we turn to the proof of Theorem \ref{Prop3}. Let $\eta \in C^\infty_c(\mathcal{M}_T)$ and define the auxiliary function $\chi$ again through \eq{chidef}. By \eq{WEcomp} we have:
\[
\int_{\mathcal{M}_T} \theta\eta = 2\int_{\mathcal{M}_T} \p_\mu \rho \p^\mu \theta\, \chi + \int_{\{x^0=0\}} \left[\ \theta_1 \chi - \theta_0 (\p_0 \chi)  \right]. 
\]
Integrating the first term on the right-hand-side by parts we find:
\begin{align*}
\int_{\mathcal{M}_T} \theta\eta &= \int_{\{x^0=0\}} \left[\ (\theta_1+ 2 \rho_0 \theta_1) \chi - \theta_0 (\p_0 \chi)  \right] - 2\int_{\mathcal{M}_T} \left(\rho \Box \theta\, \chi + \rho \p_\mu \theta \p^\mu \chi \right). \\
&= \int_{\{x^0=0\}} \left[\ (\theta_1+ 2 \rho_0 \theta_1) \chi - \theta_0 (\p_0 \chi)  \right] + \int_{\mathcal{M}_T} \left(4 \rho \p_\mu \rho \p^\mu \theta\, \chi - 2 \rho \p_\mu \theta \p^\mu \chi \right)
\end{align*}
here we have used the equation for $\theta$ to replace $\Box \theta$ on the right hand-side. We can continue in this vein:
\begin{align*}
\int_{\mathcal{M}_T} \theta\eta&= \int_{\{x^0=0\}} \left[\ \theta_1(1+ 2 \rho_0 + 2 \rho_0^2 ) \chi - \theta_0 (\p_0 \chi)  \right] - \int_{\mathcal{M}_T} \left(2\rho^2 \Box \theta\, \chi + 2 (\rho +\rho^2)\p_\mu \theta \p^\mu \chi \right)\\
&= \int_{\{x^0=0\}} \left[\ \theta_1\left(1+ 2 \rho_0 + 2 \rho_0^2 +\frac{4}{3} \rho_0^3\right) \chi - \theta_0 (\p_0 \chi)  \right] \\&\qquad - \int_{\mathcal{M}_T} \left( (2 \rho +2\rho^2 +\frac{4}{3} \rho^3 )\p_\mu \theta \p^\mu \chi - \frac{8}{3} \rho^3 \p_\mu \rho \p^\mu\theta \chi \right)\\
&=\int_{\mathcal{M}_T} \left( \tilde{\theta}^0 + \frac{1}{M}\tilde{\theta}^1+ \frac{1}{M^2}\tilde{\theta}^{2,0}   \right)\eta - 2\int_{\mathcal{M}_T} (\rho+\rho^2) \p_\mu \theta \p^\mu \chi  + O(M^{-3}),
\end{align*}
where 
\[
\begin{array}{crcrc}
\displaystyle- \Box \tilde{\theta}^0 = 0,&\qquad& \displaystyle- \Box \tilde{\theta}^1 = 0&\qquad & \displaystyle- \Box \tilde{\theta}^{2,0} = 0, \\[.4cm]
\tilde{\theta}^0 |_{x^0=0} = \theta_0, &\qquad & \tilde{\theta}^1 |_{x^0=0} = 0,&\qquad & \tilde{\theta}^{2,0} |_{x^0=0} = 0, \\[.4cm]
\p_0 \tilde{\theta}^0 |_{x^0=0} = \theta_1& \qquad &\p_0 \tilde{\theta}^1 |_{x^0=0} = 2M \rho_0 \theta_1& \qquad &\p_0 \tilde{\theta}^{2,0} |_{x^0=0} = 2 M^2 \rho_0^2  \theta_1.
\end{array}
\]
At this stage, we immediately have:
\ben{Conv}
\abs{\int_{\mathcal{M}_T} (\theta - \tilde{\theta}^0)\eta} \leq \frac{C_{data}}{M} \norm{\eta}{L^1_tL^2_x},
\een
by \eq{Eest}. From here it is immediate that $\theta = \tilde{\theta}^0 + O_{\mathscr{D}'}(M^{-1})$, thus to leading order in $M^{-1}$, the field $\theta$ satisfies the wave equation with uncorrected initial data. In order to proceed further in the expansion, we must deal with the bulk term above. To do this, we re-write the $\rho$ equation as:
\[
M^2(\rho + \rho^2) = \Box \rho + \p_\mu \rho \p^\mu \rho - \p_\mu \theta \p^\mu \theta - M^2 \rho^2 \frac{e^{2 \rho} -1-2\rho - 2 \rho^2}{2 \rho^2}
\]
so that:
\[
\int_{\mathcal{M}_T} (\rho+\rho^2) \p_\mu \theta \p^\mu \chi = \frac{1}{M^2} \int_{\mathcal{M}_T}\left( \Box \rho + \p_\nu \rho \p^\nu \rho - \p_\nu \theta \p^\nu \theta\right) \p_\mu \theta \p^\mu \chi + O(M^{-3})
\]
We take the terms one at a time, and integrate by parts as follows:
\begin{align*}
 \int_{\mathcal{M}_T}\Box \rho  \p_\mu \theta \p^\mu \chi &= \int_{\{x^0=0\}} \left[ \rho_1 (\p_i \theta_0 \p_i \chi - \theta_1 \p_0 \chi) - \rho \p_0(\p_\mu \theta \p^\mu \chi)\right] \\
&\quad +  \int_{\mathcal{M}_T}\rho \left( \p_\mu \Box  \theta \p^\mu \chi  +2 \p_\mu \p_\nu   \theta \p^\mu \p^\nu \chi + \p_\mu  \theta \p^\mu \Box  \chi  \right) \\
&= -\int_{\{x^0=0\}}  \left[ \p_i  (\rho_1 \p_i \theta_0)\chi + \rho_1 \theta_1 \p_0 \chi\right]  \\
&\quad -2  \int_{\mathcal{M}_T}\rho  \p_\mu (\p_\nu \rho \p^\nu \theta) \p^\mu \chi  +O(M^{-1})
\end{align*}
The term on the last line we split and again integrate by parts to obtain:
\begin{align*}
-2  \int_{\mathcal{M}_T}\rho  \p_\mu (\p_\nu \rho \p^\nu \theta) \p^\mu \chi &=   \int_{\mathcal{M}_T} \left(\frac{1}{2}\p_\mu \rho  \p^\mu \rho \p_\nu \theta \p^\nu \chi - \frac{3}{2}\rho \Box \rho \p_\mu \theta \p^\mu \chi \right)
\\ &\qquad  + \int_{\mathcal{M}_T}\left(  \frac{1}{2} \p_\sigma \rho \p_\tau \rho - \frac{3}{2} \rho \p_\sigma \p_\tau \rho \right) \p_\nu \theta \p_\mu \chi(g^{\sigma \mu}g^{\tau \nu} - g^{\sigma \tau} g^{\mu \nu})  +O(M^{-1})
\end{align*}
Now, we note that the term on the bottom line contains terms with at most one $\p_0$ derivative acting on $\rho$ multiplied by $\rho$ with at most spatial derivatives acting on it, so we see that it is in fact $O(M^{-1})$. Thus using the equation for $\rho$ to replace $\Box \rho$ we have
\begin{align*}
-2  \int_{\mathcal{M}_T}\rho  \p_\mu (\p_\nu \rho \p^\nu \theta) \p^\mu \chi &=   -\int_{\mathcal{M}_T} \left(\frac{1}{2}(\p_0 \rho)^2 + \frac{3}{2} M^2 \rho^2  \p^\mu \rho\right) \p_\nu \theta \p^\nu \chi  +O(M^{-1})
\end{align*}
Thus, we have:
\begin{align*}
\int_{\mathcal{M}_T}\left( \Box \rho + \p_\nu \rho \p^\nu \rho\right) \p_\mu \theta \p^\mu \chi &= -\int_{\{x^0=0\}}  \left[  \p_i  (\rho_1 \p_i \theta_0)\chi + \rho_1 \theta_1 \p_0 \chi\right] \\
&\qquad - \frac{3}{2}\int_{\mathcal{M}_T}\left((\p_0 \rho)^2  + M^2 \rho^2 \right) \p_\mu \theta \p^\mu \chi + O(M^{-1})
\end{align*}

Now, by Lemma \ref{EngLem} we can replace $(\p_0 \rho)^2  + M^2 \rho^2$ with $2\varepsilon$ to obtain:
\begin{align*}
\int_{\mathcal{M}_T}\left( \Box \rho + \p_\nu \rho \p^\nu \rho\right) \p_\mu \theta \p^\mu \chi &= -\int_{\{x^0=0\}}  \left[  \p_i  (\rho_1 \p_i \theta_0)\chi + \rho_1 \theta_1 \p_0 \chi\right] \\
&\qquad - 3\int_{\mathcal{M}_T}\varepsilon \p_\mu \theta \p^\mu \chi + O(M^{-1})
\end{align*}

At the moment, $\theta$ still appears on the right-hand side. In order to remove this, we note that the argument leading to \eq{Conv} did not require that $\eta$ vanishes near $x^0=0$, only near $x^0=T$. Accordingly, we have (since the Cauchy data for $\theta$ and $\tilde{\theta}^0$ agree):
\[
\int_{\mathcal{M}_T}\varepsilon \p_\mu (\theta-\tilde{\theta}^0) \p^\mu \chi = -\int_{\mathcal{M}_T}(\theta-\tilde{\theta}^0) \p_\mu(\varepsilon \p^\mu \chi) = O(M^{-1}).
\]
Thus
\begin{align*}
\int_{\mathcal{M}_T}\varepsilon \p_\mu \theta \p^\mu \chi&= \int_{\mathcal{M}_T}\varepsilon \p_\mu \tilde{\theta}^0 \p^\mu \chi +O(M^{-1}) \\
&= \int_{\{x^0=0\}} \varepsilon \theta_1  \chi -  \int_{\mathcal{M}_T}\p_\mu\left(\varepsilon \p_\mu \tilde{\theta}^0 \right) \chi +O(M^{-1})
\end{align*}
Finally then, we have established:
\begin{align*}
-2\int_{\mathcal{M}_T}\left( \Box \rho + \p_\nu \rho \p^\nu \rho\right) \p_\mu \theta \p^\mu \chi &= 2\int_{\{x^0=0\}}  \left[  \left(\p_i  (\rho_1 \p_i \theta_0)+3\varepsilon \theta_1\right)\chi + (\rho_1 \theta_1) \p_0 \chi\right] \\
& \qquad - 6\int_{\mathcal{M}_T}\p_\mu\left(\varepsilon \p_\mu \tilde{\theta}^0 \right) \chi +O(M^{-1}) \\
&= \int_{\mathcal{M}_T} \tilde{\theta}^{2,1} \eta +O(M^{-1}) 
\end{align*}
where:
\[
\begin{array}{c}
\displaystyle- \Box \tilde{\theta}^{2,1} = -6 \p_\mu\left(\varepsilon \p_\mu \tilde{\theta}^0 \right), \\[.4cm]
\tilde{\theta}^{2,1} |_{x^0=0} = -2\rho_1 \theta_1, \\[.4cm]
\p_0 \tilde{\theta}^{2,1} |_{x^0=0} =2\p_i  (\rho_1 \p_i \theta_0)+6 \varepsilon \theta_1.
\end{array}
\]

The final term we need to consider is:
\[
2\int_{\mathcal{M}_T} \p_\nu \theta \p^\nu \theta \p_\mu \theta \p^\mu \chi
\]
By a similar argument to that above, we have:
\begin{align*}
2\int_{\mathcal{M}_T} \p_\nu \theta \p^\nu \theta \p_\mu \theta \p^\mu \chi &= 2\int_{\mathcal{M}_T} \p_\nu \tilde{\theta}^0 \p^\nu \tilde{\theta}^0 \p_\mu \tilde{\theta}^0 \p^\mu \chi +O(M^{-1}) \\
&= -2\int_{\mathcal{M}_T}  \p^\mu\left( \p_\nu \tilde{\theta}^0 \p^\nu \tilde{\theta}^0 \p_\mu \tilde{\theta}^0 \right) \chi  -2 \int_{\{x^0=0\}} (\theta_1^2 - \p_i \theta_0 \p^i \theta_0) \theta_1\chi +O(M^{-1})\\
&=\int_{\mathcal{M}_T} \tilde{\theta}^{2,2} \eta +O(M^{-1}) 
\end{align*}
where:
\[
\begin{array}{c}
\displaystyle- \Box \tilde{\theta}^{2,2} = -2\p^\mu\left( \p_\nu \tilde{\theta}^0 \p^\nu \tilde{\theta}^0 \p_\mu \tilde{\theta}^0 \right), \\[.4cm]
\tilde{\theta}^{2,2} |_{x^0=0} = 0, \\[.4cm]
\p_0 \tilde{\theta}^{2,2} |_{x^0=0} = 2( \p_i \theta_0 \p^i \theta_0- \theta_1^2) \theta_1.
\end{array}
\]
Combining all of the results above, we have established Theorem \ref{Prop3}.\qedhere

\section{Numerical Verification}\label{Numerics}

\subsection{Testing Theorem \ref{Thm:EFTApr}}
\begin{figure}[ht]
\includegraphics[width=\textwidth]{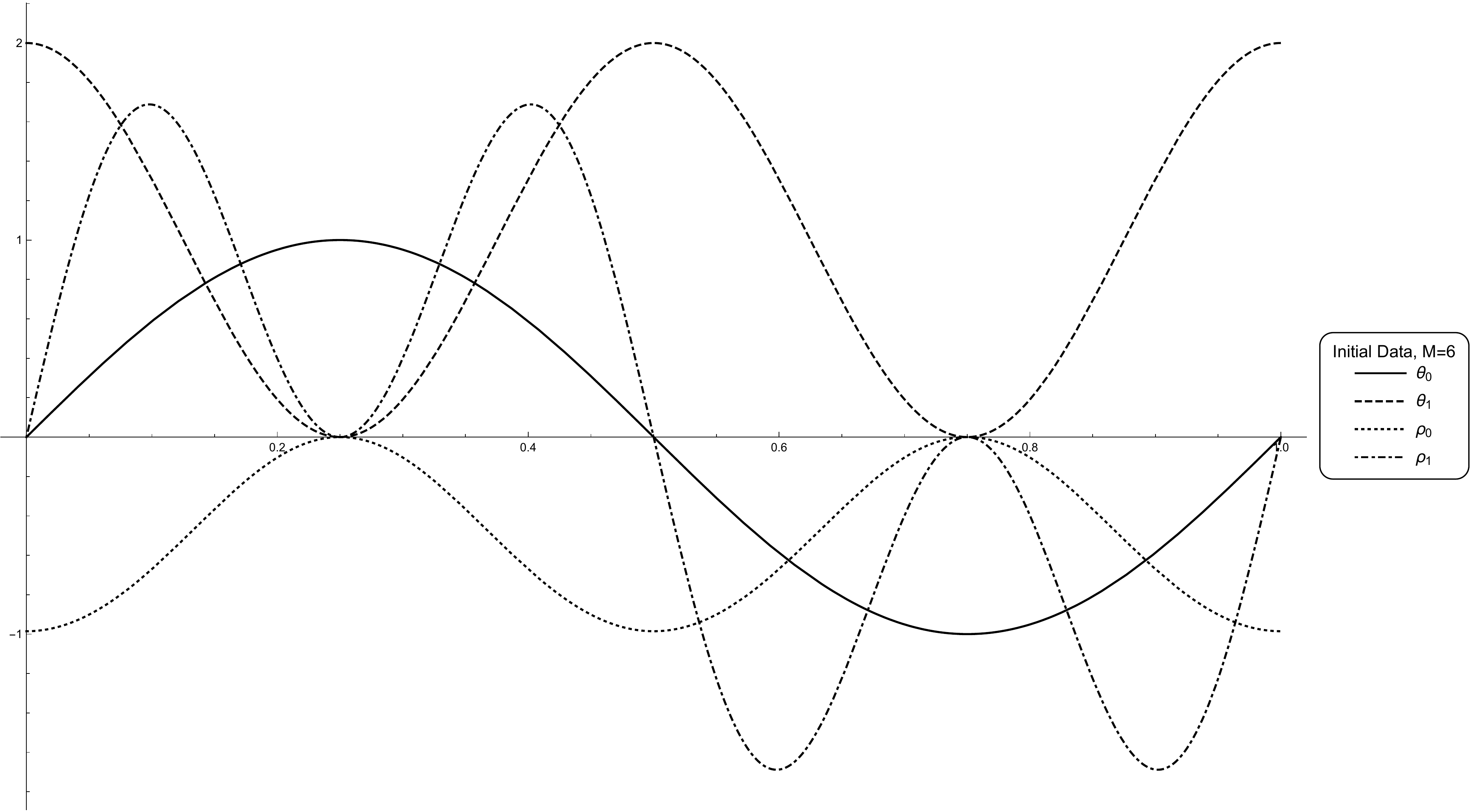}
\caption{Initial conditions chosen for numerical evolution to test Proposition \ref{Prop2}} \label{Fig1}
\end{figure}
We have performed some basic numerics to confirm our analytical results above. We investigated how well solutions to the equation 
\begin{align}
- \Box \tilde{\theta}  &=  \frac{2}{M^2} \p^\mu (- \p_\mu \tilde{\theta}  \cdot \p^\mu \tilde{\theta} ) \p_\mu \tilde{\theta}, \label{EFT2again}
\end{align}
approximate true solutions to the coupled $\rho, \theta$ equation.

Firstly, for a particular choice of initial data $\theta_0$, $\theta_1$ for the $\theta$ field, using Mathematica's built-in numerical routines, we solved  \eq{EFT2again} for $(x, t) \in [0,1]\times [0, 2]$ with initial conditions $\tilde{\theta}|_{\{x^0 = 0\}} = \theta_0, \p_0\tilde{\theta}|_{\{x^0 = 0\}} = \theta_1$ and periodic boundary conditions.

Then we solved \eq{rhothetaeq1}, \eq{rhothetaeq2} with periodic boundary conditions on the same domain, with ${\theta}|_{\{x^0 = 0\}} = \theta_0, \p_0{\theta}|_{\{x^0 = 0\}} = \theta_1$, and initial conditions:
\begin{enumerate}[1)]
\item ${\rho}|_{\{x^0 = 0\}} = \p_0{\rho}|_{\{x^0 = 0\}} = 0$ ,
\item ${\rho}|_{\{x^0 = 0\}} = M^{-2}(\theta_1^2 - \p_i \theta_0 \p_i \theta_0)$,  $\p_0{\rho}|_{\{x^0 = 0\}} = (60/M)^2(1 + \sin(2\pi x))$,
\item ${\rho}|_{\{x^0 = 0\}} = M^{-2}(\theta_1^2 - \p_i \theta_0 \p_i \theta_0)$,  $\p_0{\rho}|_{\{x^0 = 0\}} = 2 M^{-2}( \theta_1\p_x\p_x\theta_0 - \p_x \theta_0 \p_x \theta_1)$.
\end{enumerate}
These choices of initial condition for the $\rho$ field are such that we have a bound for $N_{k, 1}$, $N_{k, 2}$, $N_{k, 3}$ respectively, which is   uniform in $M$. The reason for the  slightly strange choice in $2)$ is that within the range of $M$ available to us numerically, it seems to be difficult to distinguish between $2)$ and $3)$ if we choose (for example) $\p_0{\rho}|_{\{x^0 = 0\}} = 0$. The particular choices of $\theta_0$ and $\theta_1$, as well as the initial data for $\rho$ in case $3)$ are plotted in Figure \ref{Fig1}.

To capture the error that we make by using \eq{EFT2again} in place of the full system, we computed for each choice of initial condition for $\rho$:
\[
e= \sqrt{ \int_0^1 dx \left |\theta - \tilde{\theta}\right|_{t = 0.7}^2}.
\]
In this way we estimate the  $C^0_tL^2_x$ norm by sampling at a particular time.

Figure \ref{Fig2} shows a plot of $e_1, e_2, e_3$, the error corresponding to choices $1, 2, 3$ above, against $M$ for a range of values with a log--log scale together with  fit lines corresponding to $e_l \sim M^{-l-1}$. We see that the numerics appear to be consistent with the results of  Theorem \ref{Thm:EFTApr}. This emphasises that the expansion for the field $\rho$ in terms of $\theta$ must hold on the initial data if one wishes the EFT solution to approximate the true solutions at higher orders.
\begin{figure}
\includegraphics[width=\textwidth]{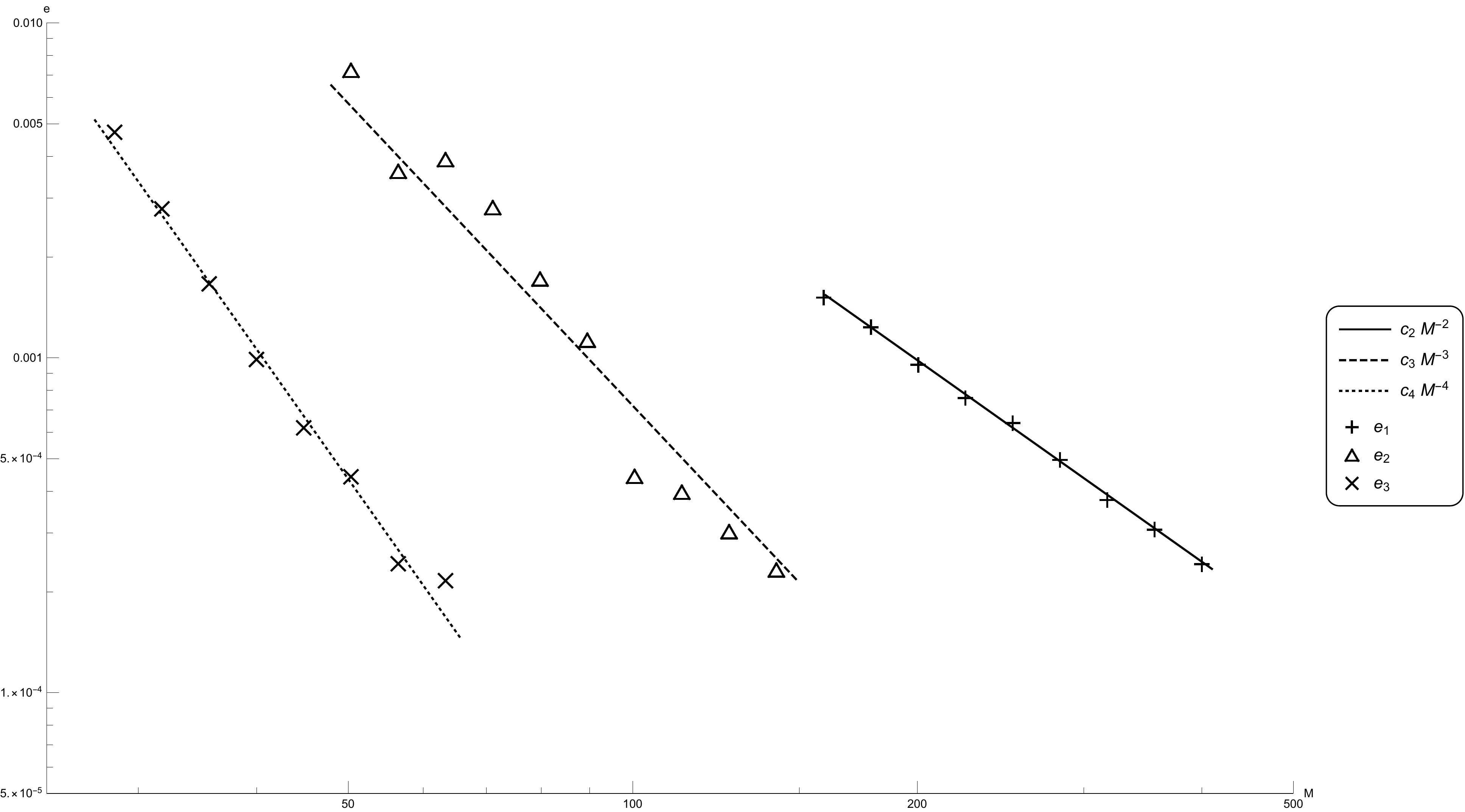}
\caption{Numerical computation of the errors (y-axis) for different values of $M$ (x-axis) for the EFT approach}\label{Fig2}
\end{figure}

\subsection{Testing Theorem \ref{Prop3}}

Using Mathematica's built-in numerical routines, we have solved  the coupled system of equations \eq{rhothetaeq1}, \eq{rhothetaeq2} for $(x, t) \in [0,1]\times [0, 2]$ with periodic boundary conditions and initial conditions shown in Figure \ref{Fig3}. We then computed the error terms:
\[
\begin{array}{l}
e_0 = \abs{\int_{t_1}^{t_2}dt \int_{x_1}^{x_2} dx (\theta - \tilde{\theta}^0) \eta}\\[.2cm]
e_1 = \abs{\int_{t_1}^{t_2}dt \int_{x_1}^{x_2} dx (\theta - \tilde{\theta}^0 -M^{-1}\tilde{\theta}^1 ) \eta }\\[.2cm]
e_2 = \abs{\int_{t_1}^{t_2}dt \int_{x_1}^{x_2} dx (\theta - \tilde{\theta}^0 -M^{-1}\tilde{\theta}^1-M^{-2}\tilde{\theta}^2 ) \eta} \\[.2cm]
\hat{e}_2 = \abs{\int_{t_1}^{t_2}dt \int_{x_1}^{x_2} dx (\theta - \hat{\theta}^2) \eta} \\
\hat{e}^{EFT}_2 = \abs{\int_{t_1}^{t_2}dt \int_{x_1}^{x_2} dx (\theta - \tilde{\theta}) \eta}
\end{array}
\]
Here $\tilde{\theta}^0, \tilde{\theta}^1, \tilde{\theta}^2$ were computed by numerically solving \eq{LinApr1}, \eq{LinApr2} with the appropriate initial data and $\hat{\theta}^2$ by solving \eq{NLinApr}. $\theta^{EFT}$ was computed by numerically solving the naive EFT equation to second order \eq{EFT2again} with the same initial conditions.

The weight function $\eta$ was given by:
\[
\eta(x, t) = n (x-x_1)(x_2-x)(t-t_1)(t_2-t)
\]
with the normalisation factor, $n$, chosen to make $\int \eta =1$. In view of the fact we see at most one derivative of $\eta$ appearing in the errors in \S\ref{AvgProof}, this choice of $\eta$ is sufficiently regular for our purposes. We chose $x_1 = 0.3, x_2 = 0.4, t_1 = 1.2, t_2 = 1.3$, so that our average is taken over a $0.1 \times 0.1$ window in space-time. Our results appear to be robust to changes in initial data or the size and location of the averaging window, however we do not claim to have undertaken a fully systematic numerical exploration of the system.

Figure \ref{Fig4} shows the error terms for various values of $M$, plotted on a log--log scale, together with fit lines corresponding to powers of $M$. The numerics appear consistent with our expectation that $e_i \sim M^{-i-1}$ and $\hat{e}_2 \sim M^{-3}$. We also see clearly that the naive second-order corrected EFT \eq{EFT2again} offers no real improvement over the uncorrected EFT (equivalent to $\tilde{\theta}^0$) in this regime where the field $\rho$ is not assumed to satisfy the EFT expansion at the level of initial data. This emphasises once again our results above that the EFT expansion is only valid when initial data are chosen appropriately.

We observe that the nonlinear approximant $\hat{\theta}^2$ appears to give a better approximation than the iterated linear approximant $\tilde{\theta}^0 +M^{-1}\tilde{\theta}^1+M^{-2}\tilde{\theta}^2$. Given that we have not considered a wide range of initial data, this may simply be an artefact of the choice of initial conditions. One could also imagine that this is a result of compounding numerical errors for the iterated approximant by solving three linear equations rather than one nonlinear equation, however experimentation with the numerical routines (modifying step-size, accuracy goals) does not appear to support this conclusion. We note also that the second order approximations appear noisier than the lower order approximations. This is likely due to the fact that the improvement in accuracy at this level comes about through cancellation in the averaging integral, which can be difficult to capture numerically. 

Our numerical testing of Theorem \ref{Thm:EFTApr} and Theorem \ref{Prop3} give us confidence that the analysis above has been carried out correctly, and moreover that the powers of $M$ obtained in our estimates are sharp.

\begin{figure}[t]
\includegraphics[width=\textwidth]{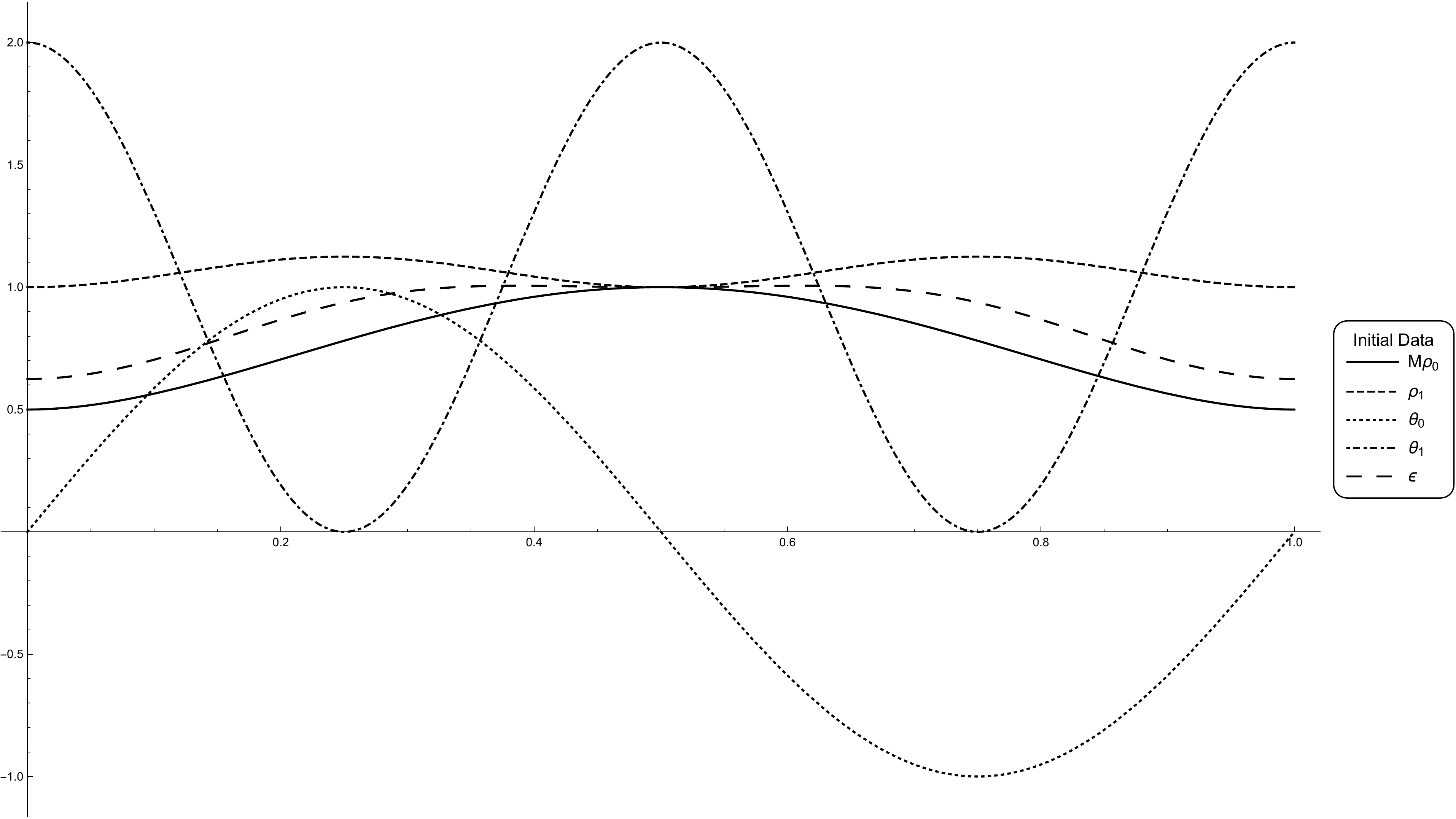}
\caption{Initial conditions chosen for numerical evolution to test Theorem \ref{Prop3}} \label{Fig3}
\end{figure}
\begin{figure}
\includegraphics[width=\textwidth]{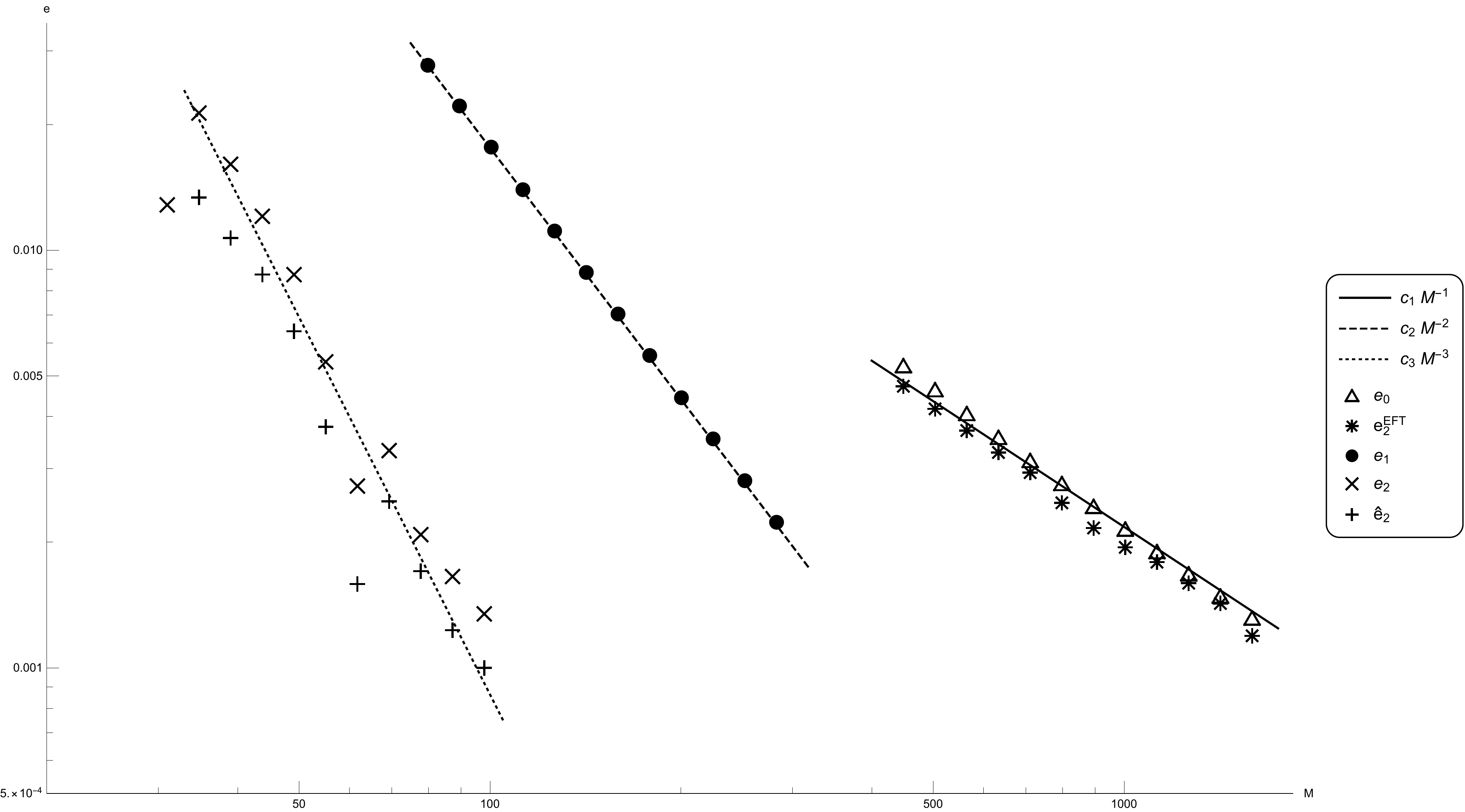}
\caption{Numerical computation of the errors (y-axis) for different values of $M$ (x-axis) for the averaging approach}\label{Fig4}
\end{figure}

\bibliographystyle{utphys}
\bibliography{EFTBib}
\end{document}